\titlespacing*{\paragraph}{0pt}{0.5\baselineskip}{0.5\baselineskip}
\title{Fixed-Parameter Tractable Submodular Maximization over a Matroid}
\author{
  Shamisa Nematollahi\thanks{CNRS, IRIF, Université Paris Cit\'e. \texttt{\{shamisa, vladu, junyao\}@irif.fr}. SN and AV were partially supported by the French Agence Nationale de la Recherche (ANR) under grant ANR-21-CE48-0016 (project COMCOPT). JZ is supported by a postdoctoral fellowship from the Fondation Sciences Math\'ematiques de Paris.} \and Adrian Vladu\footnotemark[1] \and Junyao Zhao\footnotemark[1]
}
\date{}
\newtheorem{theorem}{Theorem}[section]
\newtheorem{lemma}[theorem]{Lemma}
\newtheorem{claim}[theorem]{Claim}
\newtheorem{definition}[theorem]{Definition}
\newcommand{\acc}{\textnormal{\sffamily acc}}
\newcommand{\alg}{\textnormal{\sffamily ALG}}
\newcommand{\eps}{\varepsilon}
\newcommand{\final}{\textnormal{\sffamily final}}
\newcommand{\poly}{\textnormal{\sffamily poly}}
\def \E {\mathbb{E}}
\newcommand{\M}{{\mathcal M}}
\newcommand{\N}{{\mathbb N}}
\newcommand{\cO}{{\mathcal O}}
\newcommand{\R}{{\mathbb R}}
\newcommand{\cR}{{\mathcal R}}
\newcommand{\Z}{{\mathbb Z}}
\DeclareMathOperator*{\argmax}{arg\,max}
\DeclareMathOperator*{\argmin}{arg\,min}
\newcommand{\ceil}[1]{\left\lceil #1 \right\rceil}
\newcommand{\cupdot}{\mathbin{\mathaccent\cdot\cup}}
\newcommand{\floor}[1]{\left\lfloor #1 \right\rfloor}
\newcommand{\smallfloor}[1]{\big\lfloor #1 \big\rfloor}
\newcommand\0{\kern-1.2pt\vec{\kern1.2pt 0}}
\begin{document}

\maketitle

\begin{abstract}
In this paper, we design fixed-parameter tractable (FPT) algorithms for (non-monotone) submodular maximization subject to a matroid constraint, where the matroid rank $r$ is treated as a fixed parameter that is independent of the total number of elements $n$. We provide two FPT algorithms: one for the offline setting and another for the random-order streaming setting. Our streaming algorithm achieves a $\nicefrac{1}{2}-\eps$ approximation using $\widetilde{\cO}\left(\frac{r}{\poly(\eps)}\right)$ memory, while our offline algorithm obtains a $1-\nicefrac{1}{e}-\eps$ approximation with $n\cdot 2^{\widetilde{\cO}\left(\frac{r}{\poly(\eps)}\right)}$ runtime and $\widetilde{\cO}\left(\frac{r}{\poly(\eps)}\right)$ memory. Both approximation factors are near-optimal in their respective settings, given existing hardness results. In particular, our offline algorithm demonstrates that––unlike in the polynomial-time regime––there is essentially no separation between monotone and non-monotone submodular maximization under a matroid constraint in the FPT framework.
\end{abstract}

\section{Introduction}
Submodular maximization under a matroid constraint is a fundamental problem in combinatorial optimization, where we are given\footnote{We assume that the submodular function $f$ is given by a value oracle that outputs $f(X)$ for input $X\subseteq[n]$, and the matroid $\M$ is given by a membership oracle that answers whether $X\in\M$ for input $X\subseteq[n]$.} a submodular function $f:2^{[n]}\to\R_{\ge0}$ and a matroid $\M\subseteq2^{[n]}$ with rank $r\in\N^+$, and our task is to find a set of elements $X\in\M$ that maximizes the function value $f(X)$. For this problem, there is a well-known gap between monotone (i.e., non-decreasing) and non-monotone submodular functions: For monotone submodular functions, the optimal polynomial-time algorithm achieves a $1-\nicefrac{1}{e}$ approximation to the optimal solution~\citep{CCPV11}, whereas for non-monotone submodular functions, no polynomial-time algorithm can obtain a $0.478$ approximation~\citep{GV11,DV12}, even for the special case of the cardinality constraint~\citep{Qi24}, where the matroid $\M$ simply consists of all subsets of $[n]$ with size at most $r$.

One of the most important paradigms in theoretical computer science for overcoming the limitations of polynomial-time algorithms is parameterized complexity, which treats a specific component of the problem as a parameter that is fixed at a small value. In the context of matroid-constrained submodular maximization, it is natural to assume that the matroid rank $r$ is a small parameter\footnote{For example, in data summarization~\citep{MJK18}, given a large image dataset, an algorithm needs to choose a representative subset of images that is small enough to fit the screen of our smartphones.} and design \emph{fixed-parameter tractable} (FPT) algorithms, i.e., algorithms with runtime $h(r)\cdot \poly(n)$, where $h$ can be any finite function.

For the special case of the cardinality constraint,~\citet{RZ22b} designed an FPT algorithm that guarantees a $0.539$ approximation for non-monotone submodular functions, showing a substantial advantage of FPT algorithms over polynomial-time algorithms. On the hardness side, essentially the only known query complexity lower bound that applies to FPT algorithms is from the classic work of~\citet{NW78}. They proved that any algorithm that achieves a better-than-$(1-\nicefrac{1}{e})$ approximation requires $n^{\Omega(r)}$ queries to the value oracle, a result that holds for both monotone and non-monotone submodular maximization with the cardinality constraint. These results seem to suggest that the gap between monotone and non-monotone submodular maximization (at least for the cardinality constraint) could be narrowed further, or perhaps even closed, in the FPT regime. This brings us to the question:
\begin{quote}
    For matroid/cardinality-constrained non-monotone submodular maximization, what is the best possible approximation ratio that can be achieved by FPT algorithms\footnote{The special case of this question for the cardinality constraint was posed by~\citet{RZ22b}.}? Is there a separation between monotone and non-monotone submodular maximization under a matroid/cardinality constraint in the FPT framework?
\end{quote}

In this paper, we reveal a surprising answer to this question: There is essentially no separation between monotone and non-monotone submodular maximization under a matroid constraint in the FPT regime. Specifically, we design a nearly $(1-\nicefrac{1}{e})$-approximation FPT algorithm (Algorithm~\ref{alg:recursive_continuous_greedy_filtering}) for general (non-monotone) submodular maximization subject to a matroid constraint.
\begin{theorem}[Restatement of Theorem~\ref{thm:offline}]
There is an FPT algorithm that achieves a $1-\nicefrac{1}{e}-\eps$ approximation for maximizing any non-negative submodular function under a matroid constraint, with $n\cdot 2^{\widetilde{\cO}\left(\frac{r}{\poly(\eps)}\right)}$ runtime and $\widetilde{\cO}\left(\frac{r}{\poly(\eps)}\right)$ memory for any $\eps>0$.
\end{theorem}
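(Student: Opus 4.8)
The plan is to recover the monotone $1-\nicefrac{1}{e}$ guarantee by using the FPT budget to ``guess away'' the part of an optimal base that is responsible for the non-monotone loss, and then to run the continuous greedy on the remainder as if it were a monotone instance. Fix an optimal base $O$ of $\M$ with $f(O)=\mathrm{OPT}$, and call $e\in O$ \emph{heavy} if either $f(\{e\})\ge\frac{\eps}{r}\mathrm{OPT}$ or the marginal of $e$ in a fixed insertion order of $O$ exceeds $\frac{\eps}{r}\mathrm{OPT}$. Since the insertion marginals of $O$ sum to at most $\mathrm{OPT}$, there are at most $r+\nicefrac{r}{\eps}=\cO(\nicefrac{r}{\eps})$ heavy elements; let $S$ denote this heavy set. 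The two points I would establish are: first, after contracting to $\M/S$ and working with $g(X):=f(X\cup S)$, every positive marginal $g(i\mid A)$ with $i\in O\setminus S$ is at most $g(\{i\})\le f(\{i\})<\frac{\eps}{r}\mathrm{OPT}$ by submodularity, so the non-heavy part of $O$ survives filtering; second, any element $i\notin O$ whose marginals into $g$ are large in absolute value may be deleted from the ground set without lowering $g(O\setminus S)=\mathrm{OPT}$ --- this is the \emph{filtering} step --- so that afterward the only elements outside $O\setminus S$ on which the continuous greedy can place mass have uniformly tiny marginals. A short calculation then shows that along the entire continuous-greedy trajectory $F\!\left(y\vee\mathbf{1}_{O\setminus S}\right)\ge(1-\eps)\,g(O\setminus S)$, since the loss is at most $\E\bigl[\sum_{i\in R(y)\setminus(O\setminus S)}\lvert g(i\mid\cdot)\rvert\bigr]\le\|y\|_1\cdot\frac{\eps}{r}\mathrm{OPT}\le\eps\,\mathrm{OPT}$. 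In other words $g$ is ``effectively monotone'' on the filtered instance, so the standard continuous-greedy ODE yields $F(y(1))\ge(1-\nicefrac{1}{e})(1-\eps)\mathrm{OPT}+\nicefrac{1}{e}\,g(\emptyset)$, and swap rounding (with a further $\eps\,\mathrm{OPT}$ concentration loss) turns $y(1)$ into an integral $X$ with $X\cup S\in\M$ and $f(X\cup S)=g(X)\ge(1-\nicefrac{1}{e}-\cO(\eps))\mathrm{OPT}$ after rescaling $\eps$.

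Since $S$ is unknown, I would enumerate it, but we cannot afford $n^{\cO(r/\eps)}$ guesses. Hence the algorithm first performs a single pass that produces, in $\widetilde{\cO}(\nicefrac{r}{\poly(\eps)})$ memory, a pool $C\subseteq[n]$ of size $\widetilde{\cO}(\nicefrac{r}{\poly(\eps)})$ which, with high probability over a random arrival order, contains a set $S'$ that can substitute for the heavy part of $O$: $S'$ is independent in $\M$, $\lvert S'\rvert\le\lvert S\rvert$, and $O$ can be exchanged along $\M$ so that $S'$ replaces $S$ while $f$ drops by at most $\eps\,\mathrm{OPT}$. This is where the paper's random-order streaming machinery --- which already isolates the high-value part of the optimum --- is reused. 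The algorithm then enumerates all $2^{\lvert C\rvert}=2^{\widetilde{\cO}(r/\poly(\eps))}$ subsets of $C$ as candidates for $S$, runs the contract--filter--continuous-greedy--round procedure above for each, and returns the best feasible set; the correct guess certifies the bound via $f(X\cup S)\ge(1-\nicefrac{1}{e}-\eps)(\mathrm{OPT}-f(S))+f(S)\ge(1-\nicefrac{1}{e}-\eps)\mathrm{OPT}$. For the resources, the only $\Theta(n)$-dependent work is the pass producing $C$, implementable with $\widetilde{\cO}(\nicefrac{r}{\poly(\eps)})$ memory by standard thresholding; after that, each of the $2^{\widetilde{\cO}(r/\poly(\eps))}$ continuous-greedy runs uses $\widetilde{\cO}(\nicefrac{r}{\poly(\eps)})$ discrete steps, each adding a base of size $\le r$, so the maintained fractional point has support $\widetilde{\cO}(\nicefrac{r}{\poly(\eps)})$ and the gradient estimates $\partial_i F(y)=\E[g(i\mid R(y))]$ can be computed by sampling $R(y)$ from that small support. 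This gives memory $\widetilde{\cO}(\nicefrac{r}{\poly(\eps)})$ and total runtime $n\cdot 2^{\widetilde{\cO}(r/\poly(\eps))}$.

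The main obstacle I anticipate is the construction and analysis of the pool $C$. A generic sparsification cannot preserve the optimum of a matroid-constrained submodular instance, so the filtering pass must be designed so that the $\cO(\nicefrac{r}{\eps})$ ``heavy'' elements of $O$ --- or matroid-exchange substitutes for them with no larger value deficit --- provably survive, while $C$ stays of size $\widetilde{\cO}(\nicefrac{r}{\poly(\eps)})$; pinning down the right robust notion of ``heavy'' (one that is simultaneously detectable by value queries, closed under the exchanges needed to keep $C$ small, and strong enough to make the residual effectively monotone --- including deciding which negative-marginal elements outside $O$ can be safely deleted) is the crux. A secondary difficulty is controlling error compounding: the guess, the filtering, the $(1-\eps)$ slack in effective monotonicity, the discretization of continuous greedy, and the swap-rounding concentration loss must each cost only $\cO(\eps)\,\mathrm{OPT}$, which is what forces the $\poly(\eps)$ in the memory/runtime bounds and, to keep the constants clean, is naturally organized recursively --- peeling heavy elements and re-filtering a bounded number of times --- which is the ``recursive'' in the algorithm's name.
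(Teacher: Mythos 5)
Your proposal takes a genuinely different route, but it has two gaps that are exactly where the paper's work lies, so as written it does not constitute a proof. First, the ``effectively monotone'' step is not implementable: to get $F(y\vee\mathbf{1}_{O\setminus S})\ge(1-\eps)\,g(O\setminus S)$ you need every surviving element outside $O\setminus S$ to have \emph{small-magnitude} marginals with respect to sets containing the unknown $O\setminus S$, and you propose to achieve this by deleting elements whose marginals into $g$ are ``large in absolute value.'' But harmful negative interactions with $O$ cannot be certified by value queries without knowing $O$, and any observable deletion rule (e.g., marginal with respect to the full ground set) risks deleting elements of $O\setminus S$ itself, since non-heavy optimal elements can have hugely negative marginals against large junk sets while being essential inside $O$. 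The paper never makes the residual instance monotone; it controls the non-monotone loss by \emph{random subsampling}: each element enters the candidate pool of the continuous greedy with probability at most $\eps^2$, so by the sampling lemma (Lemma~\ref{lem:subsample_at_most} via Claim~\ref{claim:eps_fraction_does_not_hurt}) w.h.p.\ every subset of the sampled elements satisfies $f(X\mid O)\ge-\eps f(O)$, which is all the continuous-greedy recurrence needs.

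Second, the pool $C$ --- which you yourself flag as the crux --- is left unconstructed, and the paper's resolution inverts your framing. The guessed part of the optimum is not an intrinsically ``heavy'' set of size $\cO(r/\eps)$ that must be shown to survive a sparsification; it is $O_H:=O\cap H$, where $H$ is the set of elements that beat the algorithm's own greedy selectors under rounded marginal comparisons (Subroutine~\ref{sub:continuous_greedy_filtering}). The guarantee proved (Theorem~\ref{thm:offline_subroutine}, via base-exchange Lemmas~\ref{lem:base_exchange}--\ref{lem:base_matching}) is that whatever part of $O$ does \emph{not} pass the filter is already $(1-\nicefrac{1}{e})$-captured by the fractional solution, while $|H|=\widetilde{\cO}(r/\poly(\eps))$ w.h.p., so enumerating subsets of $H$ is affordable. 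Moreover, a single guess does not suffice even in principle, because $O_H$ need not be small in value or describable by your value thresholds; the paper recurses to depth $1/\alpha$, contracting by the guessed $O_H^{(k)}$ at each level, and terminates because the marginals $f(O_H^{(k)}\mid\bigcup_{j<k}O_H^{(j)})$ telescope to at most $f(O)$, so some level has marginal at most $\alpha f(O)$ and there the subroutine's guarantee closes the bound. Your final accounting inequality and resource estimates are plausible in outline, but without a concrete construction of $C$ and a sound mechanism for the monotonicity surrogate, the two load-bearing steps remain unproven.
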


The bedrock of our $(1-\nicefrac{1}{e}-\eps)$-approximation FPT algorithm is a subroutine (Subroutine~\ref{sub:continuous_greedy_filtering}) that can be simplified into a random-order semi-streaming algorithm (Algorithm~\ref{alg:greedy_filtering}), and it has significant implications for the streaming setting as well. Briefly, a random-order streaming algorithm makes a single pass over elements of $[n]$ in a uniformly random order. During the pass, the algorithm maintains a subset of visited elements in a memory buffer of bounded size, and it is allowed to make unlimited queries to the value oracle of the function $f$ and the membership oracle of the matroid $\M$ for any subset of elements stored in its memory. In the literature, there is particular interest in streaming algorithms that use $\widetilde{\cO}(r)$ memory, which are known as semi-streaming algorithms.

Our random-order semi-streaming algorithm (Algorithm~\ref{alg:greedy_filtering}) achieves a nearly $\nicefrac{1}{2}$ approximation. (We note that a nearly $\nicefrac{1}{2}$ approximation is known to be achievable by a streaming algorithm with $2^{\cO(r)}$ memory, for the more general setting of matroid intersection constraints and adversarial-order streams~\citep{FNSZ22}.) This matches the lower bound established by~\citet{RZ22b}, which asserts that any random-order streaming algorithm exceeding $\nicefrac{1}{2}$ approximation must use $\Omega\left(\frac{n}{r^2}\right)$ memory, even for the special case of the cardinality constraint.
\begin{theorem}[Restatement of Theorem~\ref{thm:streaming}]
There is a random-order semi-streaming algorithm that achieves a $\nicefrac{1}{2}-\eps$ approximation for maximizing any non-negative submodular function under a matroid constraint, using $\widetilde{\cO}\left(\frac{r}{\poly(\eps)}\right)$ memory for any $\eps>0$.
\end{theorem}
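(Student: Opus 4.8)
The plan is to analyze Algorithm~\ref{alg:greedy_filtering} — the discrete simplification of Subroutine~\ref{sub:continuous_greedy_filtering} — as a threshold-guided greedy with matroid exchanges, run over the random-order stream. The first step is to dispose of the assumption that the algorithm knows $\text{OPT}\coloneqq f(O)$ for an optimal $O\in\M$: I would run $\cO(\eps^{-1}\log(\cdot))$ copies of the core routine in parallel, one per guess $v\in\{(1+\eps)^i\}$ of $\text{OPT}$ over the relevant range (a crude range being read off from the largest singleton value seen so far, which is within a factor $r$ of $\text{OPT}$), and output the best set produced. This inflates the memory only by a $\poly(1/\eps)$ polylogarithmic factor, so it suffices to establish the guarantee for the copy whose threshold $\tau$ is calibrated to $v=(1\pm\eps)\text{OPT}$, e.g.\ with $\tau$ stepping through a geometric grid down to $\Theta(\eps\,v/r)$.

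The core routine keeps an independent set $S\in\M$ as it reads the stream: when $e$ arrives, if $S+e\in\M$ and the marginal $f(e\mid S)$ clears the current threshold, add $e$; otherwise, when $S+e\notin\M$, let $e'$ be the minimum-marginal element of the fundamental circuit of $e$ in $S$ and swap $e$ for $e'$ once $f(e\mid S-e')$ beats $f(e'\mid S-e')$ by at least the threshold margin. The "monotone-style" half of the analysis is a telescoping/charging argument over these updates: each element of $O$ is charged either to the update that evicted it or to the moment it was rejected, and in both cases submodularity bounds its lost marginal by a constant multiple of what $S$ simultaneously gained; summing over $O$, and taking the $\tau$-grid fine enough that the accumulated slack is $\cO(\eps)\cdot\text{OPT}$, yields $f(S)\ge\bigl(\tfrac12-\cO(\eps)\bigr)\,f(S\cup O)$ — the usual $\tfrac12$ barrier of greedy over a matroid, here for the possibly non-monotone $f$ evaluated on $S\cup O$ (the random order may additionally be invoked here to simulate a cleaner greedy if the swap rule alone does not suffice for the constant).

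The crux, and where random order is genuinely indispensable, is converting $f(S\cup O)$ back to $f(O)$: for non-monotone $f$ we only have $f(S\cup O)\le f(O)+(\text{junk})$, and under an adversarial order $S$ can be forced to contain exactly the elements that poison $O$. I would control this using the randomness of the arrival order, supplemented if necessary by an independent $\eps$-rate subsampling of which elements the routine is permitted to insert, aiming to show that, conditioned on $O$, the set $S\setminus O$ includes each fixed non-optimal element with probability only $\cO(\eps)$; the standard random-set lemma ($\E[f(O\cup R)]\ge(1-p)f(O)$ whenever $\P[x\in R]\le p$ for all $x\notin O$, which follows from submodularity and non-negativity) then gives $\E[f(S\cup O)]\ge(1-\cO(\eps))\,f(O)$. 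The main obstacle is exactly this step, since $S$ is built adaptively and so $S\setminus O$ is far from uniform: I expect to exploit the high threshold $\tau$, which both caps the total number of insertions at $\cO(r/\eps)$ and forces every inserted element to carry $\Omega(\eps\,\text{OPT}/r)$ of value, together with a union/martingale bound over those insertion events, to argue that no fixed bad element survives in $S$ with probability more than $\cO(\eps)$. Combining the two bounds gives $\E[f(S)]\ge\bigl(\tfrac12-\cO(\eps)\bigr)f(O)$; rescaling $\eps$ and observing that the memory footprint is $\widetilde{\cO}(r/\poly(\eps))$ (an $\cO(r)$-size current set, a small buffer for the estimate, and $\poly(1/\eps)$ polylog parallel guesses) completes the proof, with the approximation factor matching the $\Omega(n/r^2)$-memory lower bound of \citet{RZ22b}.
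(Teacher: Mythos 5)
Your proposal diverges from the paper's algorithm in a way that introduces a genuine gap, and the gap sits exactly where you flagged it.

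Concretely, your core routine is a threshold-driven swap greedy that builds $S$ by reading the \emph{entire} stream. The conversion step $\E[f(S\cup O)]\ge(1-\cO(\eps))f(O)$ via the random-set lemma requires every non-optimal element to land in $S$ with probability $\cO(\eps)$. This is false for your routine: take a single non-optimal element $p$ whose marginals $f(p\mid \cdot)$ dominate everything else (which is perfectly compatible with $f(O\cup\{p\})\ll f(O)$ since $f$ is non-monotone). As soon as $p$ arrives it clears any reasonable threshold, and no later arrival can evict it under the swap rule, so $p\in S$ with probability essentially $1$. The high threshold bounds \emph{how many} elements enter $S$, not \emph{which} ones, so a union/martingale bound over insertion events cannot drive the per-element probability down to $\cO(\eps)$. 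Your fallback --- gating insertions by an independent $\eps$-rate subsample $V'$ --- does make every element $\eps$-rare in $S$, but then the greedy only sees $V'$ and the best feasible set inside $V'$ has expected value roughly $\eps\,f(O)$ (by Lemma~\ref{lem:subsample_exactly} applied to $O\cap V'$), collapsing the final guarantee to $\cO(\eps)f(O)$.

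The paper resolves this tension with a two-part mechanism that your proposal omits. It builds the greedy set $S_r$ only from a small ($2\eps$) prefix of the random stream, which is what makes Claim~\ref{claim:eps_fraction_does_not_hurt} applicable and gives $f(S_r\mid O)\ge-\eta f(O)$ conditioned on $E_3$; but this alone would lose the value of the optimal elements that arrive later. The second half recovers them: Phase~3 filters the rest of the stream into a small buffer $H$ (controlled via the rounded-threshold/effective-selector argument of Lemmas~\ref{lem:streaming_unlikely_case} and~\ref{lem:H_i_setminus_H_i-1}), and the final exhaustive search ranges over all feasible subsets of $T\cup S_r\cup H$. The $\nicefrac{1}{2}$ bound then comes from comparing \emph{three} candidate solutions --- $S_L$, $S_r$, and $S_L\cup O_H$, all sitting inside $S_r\cup H$ --- via the base-exchange decomposition of Lemmas~\ref{lem:base_exchange} and~\ref{lem:base_matching}. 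Outputting only the greedy set $S$, as you do, cannot reproduce this: $O_H$ never enters any candidate. (Separately, even the first half of your argument --- that a swap greedy reaches $\tfrac12$ rather than $\tfrac14$ in random order --- is asserted but not established; the paper's Phase~2 is a one-element-per-window greedy, not a swap rule, and its analysis is tied to the specific base-exchange structure.)

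So the missing idea is not a technical estimate but a design component: you need to store a small filtered set of "elements the greedy would have wanted" and then search over the union with the greedy's output, rather than trying to make the greedy's output alone behave like a near-uniform sample.
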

In comparison, the previous best random-order semi-streaming algorithm for maximizing monotone submodular functions under a matroid constraint achieves a $\left(\frac{1-e^{-2}}{2}\approx0.432\right)$ approximation~\citep{Shadravan20}. For non-monotone functions, the best known semi-streaming algorithm obtains a $0.1921$ approximation~\citep{FLNSZ25}, though this algorithm applies to the more restrictive adversarial-order streaming setting. We highlight that the algorithms of~\citet{Shadravan20} and~\citet{FLNSZ25} run in polynomial time, whereas our algorithm runs in FPT time.

\subsection{Overview of our approach}
Deferring a more detailed exposition to the main sections, we provide an overview of our algorithms and techniques. As mentioned previously, the core of our $(1-\nicefrac{1}{e}-\eps)$-approximation algorithm (Algorithm~\ref{alg:recursive_continuous_greedy_filtering}) is Subroutine~\ref{sub:continuous_greedy_filtering}, which can be simplified into a $(\nicefrac{1}{2}-\eps)$-approximation semi-streaming algorithm (Algorithm~\ref{alg:greedy_filtering}). We call this subroutine continuous greedy filtering. It is inspired by a filtering technique developed by~\citet{RZ22b} for cardinality-constrained non-monotone submodular maximization, and a fast continuous greedy algorithm designed by~\citet{BV14} for matroid-constrained monotone submodular maximization.

\paragraph{Continuous greedy filtering.} The initial phase of this subroutine maintains a fractional solution over $\frac{1}{\eps}$ epochs. In each epoch $t\in\left[\frac{1}{\eps}\right]$, the subroutine constructs an incremental sequence of integral solutions $S^t_1,\dots,S^t_r\in\M$ through $r$ selection steps. At each step $i\in[r]$, it samples a small subset of elements. Among the elements that can be added to $S^t_{i-1}$, it selects the element $s_i^t$ that maximizes the marginal value relative to the current fractional solution $x^t_{i-1}$ (a ``weighted average'' of the integral solutions $S^1_r,\dots,S^{t-1}_r,S^{t}_{i-1}$). The integral solution $S^t_i$ is then updated to be $S^t_{i-1}\cup \{s_i^t\}$.

In the second phase, the subroutine uses the integral solutions constructed in the first phase as a filter to select a subset $H$ of elements from $[n]$. Specifically, for each element $e\in[n]$, $e$ is included in $H$ if it can be added to $S_{i-1}^t$ and its marginal value with respect to $x^t_{i-1}$ exceeds that of element $s_i^t$, for some $t\in\left[\frac{1}{\eps}\right]$ and $i\in[r]$. A minor detail is that the subroutine rounds the marginal values of $e$ and $s_i^t$ to the closest power of $1+\eps$ before comparing them. This detail will keep the subroutine's memory usage nearly linear in the matroid rank $r$.

This subroutine achieves a key property: We partition the optimal solution $O$ into two sets $O_L:=O\setminus H$ and $O_H:=O\cap H$. Then, the subroutine guarantees that there is a feasible subset of $\bigcup_{t=1}^{1/\eps} S_r^t$ that is a $1-\nicefrac{1}{e}-\eps$ approximation to the partial optimal solution $O_L$.

\paragraph{$(1-\nicefrac{1}{e}-\eps)$-approximation: a recursive approach.} Our $(1-\nicefrac{1}{e}-\eps)$-approximation algorithm applies the continuous greedy filtering subroutine recursively on refined instances. Specifically, after the first invocation of the subroutine on the input instance $(f,\M)$, if the value of $O_H$ is negligible, then by submodularity, $O_L$ must account for almost all the optimal value. In this case, because of the aforementioned property, our algorithm can find a $(1-\nicefrac{1}{e}-\eps)$-approximate solution through an exhaustive search over all subsets of $\bigcup_{t=1}^{1/\eps} S_r^t$. If otherwise, our algorithm enumerates all subsets of $H$ to identify $O_H$ (the algorithm will encounter $O_H$ in some iteration of the enumeration, even though it does not know what $O_H$ is), and then, it recursively applies the above process to a refined instance $(f',\M')$, where the matroid $\M'$ is obtained by contracting the original matroid $\M$ by the set $O_H$ (see Definition~\ref{def:contraction}), and the function $f'$ is defined by $f'(X):=f(X\cup O_H)$. The high-level intuition is that, after sufficiently many recursions, we will eventually reach the first case, where the value of $O_H'$ (the analogue of $O_H$ in the refined instance) is negligible.

\paragraph{$(\nicefrac{1}{2}-\eps)$-approximation semi-streaming algorithm: single-epoch filtering.} Our random-order semi-streaming algorithm simplifies continuous greedy filtering by performing a single-epoch greedy procedure to construct an integral solution in the first $\eps$ fraction of the random stream, followed by a similar filtering phase in the remaining stream. The final solution is obtained by enumerating all subsets of elements selected during these two phases and choosing the best one. In the main body of the paper, we will first present this simpler algorithm and its analysis.

\begin{paragraph}{Comparison to~\citet{RZ22b}.}
Briefly,~\citet{RZ22b} first designed a $(\nicefrac{1}{2}-\eps)$-approximation random-order streaming algorithm with quadratic memory for cardinality-constrained non-monotone submodular maximization, based on the filtering technique. They then identified cases where their streaming algorithm only achieves a $\nicefrac{1}{2}$ approximation, and introduced an offline postprocessing procedure to handle those cases and improve the approximation ratio. Their analysis of this procedure relies on a factor-revealing program to balance different cases. To better leverage this factor-revealing program, they extended their postprocessing procedure to a recursive one, which led to a $0.539$-approximation FPT algorithm.

Our semi-streaming algorithm is a generalization of their streaming algorithm to the matroid constraint, with improved memory usage. Compared to their $0.539$-approximation FPT algorithm, the two main novelties of our $(1-\nicefrac{1}{e}-\eps)$-approximation algorithm, as highlighted above, are (i) the combination of the filtering technique and the continuous greedy algorithm, which enables our subroutine to achieve the aforementioned key property under the matroid constraint, and (ii) a natural recursive procedure that applies the subroutine repeatedly to refine the problem instance. This differs significantly from the recursive procedure of~\citet{RZ22b}, which is used only to postprocess an output produced once by their streaming algorithm.
\end{paragraph}

\subsection{Additional related work}
\paragraph{FPT submodular maximization.}
The parameterized complexity of submodular maximization was first studied by~\citet{Skowron17}, who provided FPT approximation schemes for maximizing $p$-separable monotone submodular functions under the cardinality constraint, assuming that both the cardinality constraint and $p$ are fixed parameters. If the cardinality constraint is the only parameter, without further assumptions, even FPT algorithms cannot obtain a better-than-$(1-\nicefrac{1}{e})$ approximation for monotone submodular functions, both in terms of query complexity~\citep{NW78} and computational complexity under the Gap-Exponential Time Hypothesis ~\citep{CGKLL19,Manurangsi20}. For non-monotone submodular maximization under the cardinality constraint,~\citet{RZ22b} gave a $0.539$-approximation FPT algorithm, while the current best polynomial-time algorithm achieves a $0.401$ approximation~\citep{BF24}, which also applies to the matroid constraint.

\paragraph{Streaming submodular maximization.}
In the random-order streaming setting,~\citet{ASS19} designed a nearly $(1-\nicefrac{1}{e})$-approximation semi-streaming algorithm for monotone submodular maximization subject to a cardinality constraint, which was later simplified by~\citet{LRVZ21} to reduce the memory size. Moreover, under the cardinality constraint,~\citet{RZ22b} designed a quadratic-memory streaming algorithm that achieves a $0.5029$ approximation and a $\nicefrac{1}{2}$ approximation for symmetric and asymmetric submodular maximization respectively. Under the matroid constraint,~\citet{Shadravan20} gave a $\frac{1-e^{-2}}{2}$-approximation semi-streaming algorithm for monotone submodular maximization. All these random-order streaming algorithms (and ours) also apply to the (similar but incomparable) secretary with shortlists model~\citep{ASS19}.

In the adversarial-order streaming setting,~\citet{BMKK14} designed a nearly $\nicefrac{1}{2}$-approximation semi-streaming algorithm for cardinality-constrained monotone submodular maximization, which was later improved by~\citet{KMZLK19}  to achieve a linear memory size. \citet{FNSZ23} proved a matching hardness result. More recently,~\citet{AEFNS22} designed a nearly $\nicefrac{1}{2}$-approximation semi-streaming algorithm for cardinality-constrained non-monotone submodular maximization, which is also an FPT algorithm (although not explicitly stated in their paper). Furthermore, matroid-constrained submodular maximization in the adversarial-order streaming setting was first studied by~\citet{CK15}, who provided a $\nicefrac{1}{4}$-approximation semi-streaming algorithm for monotone functions. This was recently improved to a $0.3178$ approximation by~\citet{FLNSZ25}, who also gave a $0.1921$ approximation for non-monotone functions. Finally, we mention that~\citet{FLNSZ25} developed a multi-pass nearly $(1-\nicefrac{1}{e})$-approximation semi-streaming algorithm for matroid-constrained monotone submodular maximization, which also builds on the continuous greedy algorithm of~\citet{BV14}. However, their approach is very different from ours.

\section{Preliminaries}\label{section:preliminaries}
\subsection{Submodular functions and matroids}
We first introduce submodular functions, matroids, and their associated concepts and properties. Given a ground set $[n]$ and a non-negative discrete function $f:2^{[n]}\to\R_{\ge0}$, for any $X,Y\subseteq [n]$, we let $f(X|Y)$ denote the marginal value of $X$ with respect to $Y$, i.e., $f(X|Y):=f(X\cup Y)-f(Y)$.
\begin{definition}[submodular function]\label{def:submodular}
A non-negative function $f:2^{[n]}\to\R_{\ge0}$ is submodular, if it satisfies that $f(X|Y)\ge f(X|Z)$ for any $X\subseteq[n]$ and any $Y\subseteq Z\subseteq[n]$.
\end{definition}
For any vector $x\in[0,1]^n$, we use the notation $\cR(x)$ to refer to a random set such that each element of $[n]$ appears in $\cR(x)$ independently with probability $x_i$, and we define the multi-linear extension of a submodular function as follows.
\begin{definition}[multi-linear extension]\label{def:multi-linear}
The multi-linear extension $F:[0,1]^n\to\R_{\ge0}$ of a submodular function $f:2^{[n]}\to\R_{\ge0}$ is given by
\[
F(x):=\E[f(\cR(x))]=\sum_{S\subseteq[n]}\prod_{i\in S} x_i\cdot\prod_{i\in[n]\setminus S}(1-x_i)\cdot f(S) \textrm{ for all } x\in[0,1]^n.
\]
Moreover, for any $x,y\in[0,1]^n$ such that $x+y\in[0,1]^n$, we let $F(x|y)$ denote the marginal value of $x$ with respect to $y$, i.e., $F(x|y):=F(x+y)-F(y)$.
\end{definition}

We are interested in the problem of maximizing a non-negative submodular function subject to a matroid constraint. A matroid is a set system that satisfies certain independence structure.
\begin{definition}[matroid]\label{def:matroid}
A set system $\M\subseteq 2^{[n]}$ is a matroid\footnote{In the literature, a matroid is often represented as a pair $([n], \M)$, where $[n]$ is the ground set and $\M\subseteq 2^{[n]}$ is the family of independent sets. For simplicity, we refer to the matroid simply as $\M$ throughout this paper.} if the following conditions hold:
\begin{enumerate}[i.]
    \item $\emptyset\in\M$.
    \item If $X\in\M$, then $Y\in\M$ for all $Y\subseteq X$.
    \item If $X,Y\in\M$ and $|Y|<|X|$, then there exists $i\in X\setminus Y$ such that $Y\cup\{i\}\in\M$.
\end{enumerate}
\end{definition}
Given a matroid, we define its rank, bases, restriction and contraction as follows.
\begin{definition}[rank]
Given a matroid $\M\subseteq 2^{[n]}$, the rank of $\M$ is $r_{\M}:=\max_{X\in\M}|X|$.
\end{definition}
\begin{definition}[base]
Given a matroid $\M\subseteq 2^{[n]}$, we say that a set $B\in\M$ is a base of $\M$, if $B\cup\{i\}\notin\M$ for all $i\in [n]$.
\end{definition}
\begin{definition}[restriction]
Given a matroid $\M\subseteq 2^{[n]}$ and a set $S\subseteq [n]$, we define the restriction of $\M$ to $S$ as $\M_S:=\{X\subseteq S\mid X\in \M\}$. It is well-known that the restriction $\M_S$ is a matroid.
\end{definition}
\begin{definition}[contraction]\label{def:contraction}
Given a matroid $\M\subseteq 2^{[n]}$ and a set $S\subseteq[n]$, we let $\M/S$ denote the contraction of $\M$ by $S$, which is given by $\M/S:=\{X\subseteq[n]\setminus S \mid X\cup S'\in\M \textrm{ for all } S'\subseteq S \textrm{ such that } S'\in\M\}$. It is well-known that the contraction $\M/S$ is a matroid, and its membership oracle can be efficiently evaluated using the membership oracle of $\M$.
\end{definition}

In matroid-constrained submodular maximization, an algorithm $\alg$ is given a non-negative submodular function $f:2^{[n]}\to\R_{\ge0}$ and a matroid $\M\subseteq2^{[n]}$, and its task is to find a feasible set $X\in\M$ that maximizes $f(X)$. Throughout the paper, we denote the optimal solution by $O$, i.e., $O:=\argmax_{X\in\M} f(X)$ (and we reserve $\cO$ for the Big-O notation to avoid confusion). We say that $\alg$ achieves an $\alpha$ approximation with $\alpha\in[0,1]$, if its solution set $X_{\alg}$ satisfies that $\E[f(X_{\alg})]\ge \alpha\cdot f(O)$, where the expectation is taken over the randomness of $\alg$ (and the arrival order of the elements, in case $\alg$ is a random-order streaming algorithm, which we will introduce shortly). Moreover, we assume that the matroid rank $r_{\M}$ is a fixed parameter that does not depend on the total number of elements $n$ (and hence, $r_{\M}=o(n)$\footnote{If $r_{\M}=\Omega(n)$, then an FPT algorithm with $2^{\cO(r_{\M})}$ runtime can simply enumerate all subsets of $[n]$, and a streaming algorithm with $\cO(r_{\M})$ memory can simply store all elements of $[n]$, rendering the problem trivial.}), and we are interested in \emph{fixed-parameter tractable} (FPT) algorithms, i.e., algorithms with runtime $h(r_{\M})\cdot \poly(n)$, where $h$ can be any finite function.

\subsection{Random-order streaming model}
Besides the standard offline setting (where all elements in the ground set $[n]$ are presented to the algorithm from the outset), we are also interested in the random-order streaming setting, where elements in $[n]$ arrive in a uniformly random order as a data stream.

In this setting, an algorithm has a memory buffer of a limited size (which is $\widetilde{O}(r_{\M})$ for semi-streaming algorithms). Each time when a new element arrives, the algorithm decides how to update its memory buffer, i.e., whether to store the new element, remove some previously stored elements, or modify other stored information. At any point, the algorithm can make any number of queries to the value oracle of the submodular function and the membership oracle of the matroid, provided that the input for the query is a subset of elements stored in its memory. At the end of the stream, the algorithm outputs a subset of elements from its memory that satisfies the matroid constraint.

\subsection{Other useful notions and lemmata}
We use the notations $\mathbf{1}_S\in\{0,1\}^n$ and $\mathbf{1}_{e}\in\{0,1\}^n$ to represent the indicator vectors for any set $S\subseteq[n]$ and for any element $e\in[n]$ respectively, and let $\mathbf{0}$ denote the all-zero vector. Moreover, we define a rounding operator that will be the key to reducing the memory usage of our algorithms.
\begin{definition}[rounding operator]\label{def:rounding_op}
Given any finite non-empty set of real numbers $I$, we use the notation $\floor{\cdot}_I$ to denote the rounding operator that maps any real number $a\in\R$ to its greatest lower bound in $I$, or to the smallest number in $I$ if no lower bound exists, i.e., $\floor{a}_I:=\max\{i\in I\mid i\le a\}$ if $a\ge\min I$, and $\floor{a}_I:=\min I$ otherwise.
\end{definition}
Then, we present two base exchange lemmas for matroids and two sampling lemmas for non-negative submodular functions.
\begin{lemma}[{\citet{Greene73,Woodall74}}]\label{lem:base_exchange}
Given two bases $B_1$ and $B_2$ of a matroid $\M\subseteq2^{[n]}$, and a partition $B_1=X_1\cupdot Y_1$, there exists a partition $B_2=X_2\cupdot Y_2$ such that $X_1\cupdot Y_2$ and $X_2\cupdot Y_1$ are both bases of $\M$.
\end{lemma}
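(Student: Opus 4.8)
The plan is to deduce this symmetric basis-exchange property from the classical matroid union theorem (in its partition form). The first step is a reduction to the case $B_1\cap B_2=\emptyset$, by induction on $|B_1\cap B_2|$. If some $e\in B_1\cap B_2$ exists, I would pass to the contracted matroid $\M/\{e\}$ on ground set $[n]\setminus\{e\}$, in which $B_1\setminus\{e\}$ and $B_2\setminus\{e\}$ are bases whose intersection is strictly smaller; applying the statement there to the induced partition $(X_1\setminus\{e\})\cupdot(Y_1\setminus\{e\})$ of $B_1\setminus\{e\}$ yields a partition $B_2\setminus\{e\}=X_2'\cupdot Y_2'$, and then re-inserting $e$ into $X_2'$ if $e\in X_1$ and into $Y_2'$ if $e\in Y_1$ produces the desired partition of $B_2$. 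A short check from the definition of contraction confirms that $X_1\cupdot Y_2$ and $X_2\cupdot Y_1$ remain bases of $\M$ (and are disjoint, as the notation suggests). So from now on assume $B_1\cap B_2=\emptyset$.

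In this case, write $r_\M(S)$ for the maximum size of an independent subset of $S\subseteq[n]$, so that the matroid rank is $r:=r_\M([n])$. Consider the two matroids on the ground set $B_2$ obtained by restricting $\M/X_1$ and $\M/Y_1$ to $B_2$, which makes sense because $B_2$ is disjoint from $X_1\cup Y_1=B_1$. For $S\subseteq B_2$ their rank functions are $r_{\M/X_1}(S)=r_\M(S\cup X_1)-|X_1|$ and $r_{\M/Y_1}(S)=r_\M(S\cup Y_1)-|Y_1|$ (recalling $r_\M(X_1)=|X_1|$ and $r_\M(Y_1)=|Y_1|$ since $X_1,Y_1\in\M$); in particular, since $B_2$ spans $\M$, the first matroid has rank $r_\M(B_2\cup X_1)-|X_1|=r-|X_1|=|Y_1|$ and the second has rank $|X_1|$. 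As $|Y_1|+|X_1|=r=|B_2|$, any partition $B_2=Y_2\cupdot X_2$ with $Y_2$ independent in $\M/X_1$ and $X_2$ independent in $\M/Y_1$ is forced to have $Y_2$ a base of the first matroid and $X_2$ a base of the second; unwinding the contractions (and using $B_1\cap B_2=\emptyset$ for the cardinalities), this is precisely the statement that $X_1\cupdot Y_2$ and $X_2\cupdot Y_1$ are bases of $\M$. Hence it suffices to produce such a partition, and by the matroid union theorem this exists exactly when $|S|\le r_{\M/X_1}(S)+r_{\M/Y_1}(S)$ for every $S\subseteq B_2$.

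To verify this covering inequality, rewrite its right-hand side as $r_\M(S\cup X_1)+r_\M(S\cup Y_1)-|X_1|-|Y_1|=r_\M(S\cup X_1)+r_\M(S\cup Y_1)-r$, so the inequality becomes $r_\M(S\cup X_1)+r_\M(S\cup Y_1)\ge|S|+r$. Submodularity of $r_\M$ bounds its left-hand side below by $r_\M\big((S\cup X_1)\cup(S\cup Y_1)\big)+r_\M\big((S\cup X_1)\cap(S\cup Y_1)\big)=r_\M(S\cup B_1)+r_\M(S)$, where I used $X_1\cup Y_1=B_1$ together with $X_1\cap Y_1=\emptyset$. Finally $r_\M(S\cup B_1)=r$ since $B_1$ is a base, and $r_\M(S)=|S|$ since $S\subseteq B_2\in\M$ is independent, so the bound equals $|S|+r$, as required.

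I expect the only genuine obstacle to be bookkeeping rather than difficulty: one must set up the reduction to $B_1\cap B_2=\emptyset$ so that the contracted matroids truly live on a ground set containing $B_2$, and in the submodularity estimate one must get every term to point the right way — the decisive point being that $r_\M(S)=|S|$, which holds only because $S$ sits inside the base $B_2$. An alternative route that avoids the matroid union theorem is to induct on $|X_1|$ using the single-element strong basis-exchange property, but the set-level bookkeeping there is more delicate, so I would favour the argument above.
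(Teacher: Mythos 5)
The paper does not prove Lemma~\ref{lem:base_exchange}; it simply cites \citet{Greene73,Woodall74} and uses the statement as a black box. So there is no in-paper proof to compare against, and the question is only whether your argument is sound. It is. Your reduction to the case $B_1\cap B_2=\emptyset$ by contracting a common element $e$ is clean: since $e$ lies in a base it is not a loop, $B_1\setminus\{e\}$ and $B_2\setminus\{e\}$ are bases of $\M/\{e\}$, and the induced partition $(X_1\setminus\{e\})\cupdot Y_1$ (say $e\in X_1$) lets you invoke the inductive hypothesis. The disjointness of the resulting unions is inherited from the inductive hypothesis together with the observation that $e\notin Y_2'\subseteq B_2\setminus\{e\}$ and $e\notin Y_1$, so the $\cupdot$ notation is justified at every step. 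In the disjoint case, your rank computation for $(\M/X_1)|_{B_2}$ and $(\M/Y_1)|_{B_2}$ is correct, the cardinality argument forcing $Y_2$ and $X_2$ to be bases of the respective contractions is right, and the verification of the covering condition $|S|\le r_{\M/X_1}(S)+r_{\M/Y_1}(S)$ via submodularity of $r_\M$ --- using $(S\cup X_1)\cap(S\cup Y_1)=S$ (which needs only $X_1\cap Y_1=\emptyset$) and $r_\M(S)=|S|$ (which needs $S\subseteq B_2$) --- is exactly the decisive step. This matroid-union route is in fact one of the standard modern proofs of the Greene--Woodall symmetric exchange property; it trades the combinatorial delicacy of a direct single-element exchange induction for a single appeal to the matroid covering theorem, which is a hammer of comparable depth to the result itself but makes the argument essentially a two-line computation once stated.
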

\begin{lemma}[{\citet{Brualdi69}}]\label{lem:base_matching}
Given two bases $B_1$ and $B_2$ of a matroid $\M\subseteq2^{[n]}$, there exists a bijection $h:B_1\setminus B_2\to B_2 \setminus B_1$ such that for
all $e\in B_1\setminus B_2$, $(B_1\setminus\{e\})\cup\{h(e)\}$ is a base of $\M$.
\end{lemma}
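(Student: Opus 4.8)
The plan is to deduce the lemma from Hall's theorem applied to a bipartite exchange graph. Let $r := |B_1| = |B_2|$ (so that $|B_1\setminus B_2| = |B_2\setminus B_1|$), and form the bipartite graph $G$ with parts $A := B_1\setminus B_2$ and $C := B_2\setminus B_1$, joining $e\in A$ to $f\in C$ exactly when $(B_1\setminus\{e\})\cup\{f\}\in\M$; since such a set has size $r$, it is then automatically a base of $\M$. A perfect matching of $G$ is precisely a bijection $h$ as required, so it suffices to check Hall's condition $|N_G(X)|\ge |X|$ for every $X\subseteq A$. (One might hope to read off the bijection directly from Lemma~\ref{lem:base_exchange} by taking $Y_1=\{e\}$ for each $e$ in turn, but that only supplies \emph{some} valid partner for each $e$ and not an injective assignment, which is exactly what the matching argument will fix.)

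To verify Hall's condition, fix a nonempty $X\subseteq A=B_1\setminus B_2$, write $Y:=B_2\setminus B_1$, and let $Y':=Y\setminus N_G(X)$; the goal is $|Y'|\le |Y|-|X|$. The first step is to argue that every $f\in Y'$ is spanned by $B_1\setminus X$, meaning $(B_1\setminus X)\cup\{f\}\notin\M$. Suppose instead $(B_1\setminus X)\cup\{f\}$ is independent; it has size $r-|X|+1$, so applying the exchange axiom (the third condition of Definition~\ref{def:matroid}) repeatedly against the base $B_1$, we may augment it to an independent set of size $r$ by adding $|X|-1$ elements, each of which must come from $B_1\setminus\big((B_1\setminus X)\cup\{f\}\big)=X$. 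The resulting size-$r$ independent set is then a base of the form $(B_1\setminus\{e^\ast\})\cup\{f\}$, where $e^\ast$ is the one element of $X$ not added; but then $f\in N_G(X)$, contradicting $f\in Y'$. Hence, writing $\mathrm{rk}$ for the rank function of $\M$, we get $\mathrm{rk}\big((B_1\setminus X)\cup Y'\big)=\mathrm{rk}(B_1\setminus X)=|B_1\setminus X|=r-|X|$.

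For the second step, note that $X\subseteq B_1\setminus B_2$ forces $B_1\cap B_2\subseteq B_1\setminus X$, and $Y'\subseteq Y=B_2\setminus B_1$, so $(B_1\cap B_2)\cup Y'\subseteq (B_1\setminus X)\cup Y'$ and hence $\mathrm{rk}\big((B_1\cap B_2)\cup Y'\big)\le r-|X|$. On the other hand $(B_1\cap B_2)\cup Y'\subseteq B_2\in\M$, so this set is independent and its rank equals its cardinality $|B_1\cap B_2|+|Y'|$ (a disjoint union). Since $|B_1\cap B_2| = r-|B_1\setminus B_2| = r-|Y|$, we conclude $r-|Y|+|Y'|\le r-|X|$, i.e.\ $|N_G(X)| = |Y|-|Y'|\ge|X|$, which is Hall's condition; a perfect matching of $G$ then exists and yields the desired bijection $h$.

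I expect the crux to be the first step — extracting, from the assumption that $f$ has no matching partner in $X$, the structural fact $(B_1\setminus X)\cup\{f\}\notin\M$ — and in particular making sure the exchange-axiom augmentation stays inside $X$, so that the size-$r$ independent set obtained is exactly of the form $(B_1\setminus\{e^\ast\})\cup\{f\}$. Once that is in hand, the double counting of $(B_1\cap B_2)\cup Y'$ — independent because it sits inside $B_2$, yet of rank at most $r-|X|$ because it is spanned by $B_1\setminus X$ — is routine, as is the final appeal to Hall's theorem.
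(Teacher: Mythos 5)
The paper does not prove Lemma~\ref{lem:base_matching}; it is invoked as a classical result of Brualdi and used as a black box, so there is no in-paper argument to compare against. Your proof is correct and is the standard derivation of Brualdi's symmetric exchange property from Hall's theorem applied to the bipartite exchange graph. The key step checks out: if $f\in B_2\setminus B_1$ has $(B_1\setminus X)\cup\{f\}\in\M$, then since $f\notin B_1$ the exchange axiom against $B_1$ can only augment with elements of $X$, and $|X|-1$ augmentations produce a base of the form $(B_1\setminus\{e^\ast\})\cup\{f\}$ with $e^\ast\in X$, so $f\in N_G(X)$; contrapositively every $f\in Y'$ lies in the closure of $B_1\setminus X$, giving $\mathrm{rk}\bigl((B_1\setminus X)\cup Y'\bigr)=r-|X|$. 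Combining this with the independence of $(B_1\cap B_2)\cup Y'\subseteq B_2$ and the identity $|B_1\cap B_2|=r-|Y|$ (which uses $|B_1\setminus B_2|=|B_2\setminus B_1|$, valid since both bases have size $r$) yields Hall's condition, and a perfect matching in $G$ is exactly the desired bijection. One small stylistic note: it would be worth stating explicitly that any independent set of size $r=r_{\M}$ is a base (which the paper's Definition of base, phrased via non-extendability, makes true only after a one-line argument), but the claim is standard and the proof is sound as written.
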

\begin{lemma}[{\citet[Lemma 2.2]{FMV11}}]\label{lem:subsample_exactly}
Given a non-negative submodular function $f:2^{[n]} \to \mathbb{R}_{\ge 0}$ and a set $X\subseteq[n]$, if $R$ is a random subset of $X$ such that every element of $X$ appears in $R$ with probability \textbf{exactly} $p$ (not necessarily independently), then we have that $\E[f(R)] \ge (1-p)\cdot f(\emptyset) + p\cdot f(X)$.
\end{lemma}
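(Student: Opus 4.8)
The plan is to neutralize the (possibly adversarial) dependence among the coordinates of $R$ by adjoining an independent source of randomness: a uniformly random permutation $\sigma=(\sigma_1,\dots,\sigma_m)$ of the $m:=|X|$ elements of $X$, drawn independently of $R$. Writing $X^{\sigma}_{<j}:=\{\sigma_1,\dots,\sigma_{j-1}\}$ for the length-$(j-1)$ prefix of $\sigma$, I would first telescope $f$ along $\sigma$, keeping only the slots occupied by elements of $R$: for every fixed pair $(\sigma,R)$, building up $R$ from $\emptyset$ in the order in which its elements appear in $\sigma$ gives
\[
f(R)-f(\emptyset)\;=\;\sum_{j\,:\,\sigma_j\in R} f\bigl(\sigma_j\,\big|\,R\cap X^{\sigma}_{<j}\bigr),
\]
because the set accumulated just before inserting $\sigma_j$ is exactly $R\cap X^{\sigma}_{<j}$, the elements of $R$ preceding $\sigma_j$ in $\sigma$.

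Next I would invoke submodularity (Definition~\ref{def:submodular}): since $R\cap X^{\sigma}_{<j}\subseteq X^{\sigma}_{<j}$, each marginal only drops when taken against the full prefix, so
\[
f(R)-f(\emptyset)\;\ge\;\sum_{j=1}^{m}\mathbf{1}[\sigma_j\in R]\cdot f\bigl(\sigma_j\,\big|\,X^{\sigma}_{<j}\bigr).
\]
I would then take the expectation of both sides over the independent pair $(\sigma,R)$. Conditioning on $\sigma$ first, the slot $\sigma_j$ is a fixed element, and because $R$ is independent of $\sigma$ and every element of $X$ lies in $R$ with probability exactly $p$, we have $\E[\mathbf{1}[\sigma_j\in R]\mid\sigma]=p$; this is the only place the marginal hypothesis enters. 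Hence the expectation of the right-hand side equals $p\cdot\E_{\sigma}\!\bigl[\sum_{j=1}^{m}f(\sigma_j\mid X^{\sigma}_{<j})\bigr]$, and for every fixed $\sigma$ the inner sum telescopes to $f(X)-f(\emptyset)$, so it equals $p\,(f(X)-f(\emptyset))$. Rearranging $\E[f(R)]-f(\emptyset)\ge p\,(f(X)-f(\emptyset))$ yields the claimed bound $\E[f(R)]\ge(1-p)\,f(\emptyset)+p\,f(X)$.

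I do not expect a genuine obstacle once the auxiliary permutation is in place; the only step that needs care is the conditioning, where $\sigma$ and $R$ must stay independent so that, after fixing $\sigma$, the element occupying slot $j$ is still in $R$ with probability exactly $p$ rather than some conditional probability distorted by the correlations in $R$. Note that non-negativity of $f$ is not actually used in this inequality — it is merely part of the paper's standing assumptions. A more pedestrian alternative would be to observe that $\E[f(R)]$ is linear in the law of $R$ over the polytope of distributions on $2^{X}$ with all singleton marginals equal to $p$, so it suffices to check the bound at the vertices of that polytope; but that route requires understanding those vertices, which is messier than the one-paragraph averaging argument above, so I would go with the permutation approach.
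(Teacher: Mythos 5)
Your proof is correct, and since the paper simply cites \citet[Lemma 2.2]{FMV11} without reproducing a proof, there is no internal argument to compare against; the telescoping/submodularity mechanism you use is the standard one from FMV. One remark: the auxiliary random permutation $\sigma$ is superfluous. Fix an arbitrary deterministic ordering $a_1,\dots,a_m$ of $X$ and set $X_{<j}:=\{a_1,\dots,a_{j-1}\}$. The same telescope and submodularity step give
\[
f(R)-f(\emptyset)\;\ge\;\sum_{j=1}^m \mathbf{1}[a_j\in R]\cdot f\bigl(a_j\,\big|\,X_{<j}\bigr),
\]
and crucially each weight $f(a_j\mid X_{<j})$ is now a deterministic constant (it never depended on $R$ even in your version — only on the ordering). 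Hence linearity of expectation factors each summand as $\Pr[a_j\in R]\cdot f(a_j\mid X_{<j}) = p\cdot f(a_j\mid X_{<j})$ with no independence hypothesis required; the correlations inside $R$ simply never interact with these weights once submodularity has replaced $R\cap X_{<j}$ by $X_{<j}$. So the concern you flagged — that fixing $\sigma$ might distort the marginal $p$ — does not arise: the indicator $\mathbf{1}[a_j\in R]$ is the only $R$-dependent factor left, and its expectation is $p$ by hypothesis, full stop. Your extra averaging over $\sigma$ does no harm, but it is solving a problem that the submodularity step has already dissolved. You are also right that non-negativity of $f$ is not used.
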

\begin{lemma}[{\citet[Lemma 2.2]{BFNS14}}]\label{lem:subsample_at_most}
Given a non-negative submodular function $f:2^{[n]} \to \mathbb{R}_{\ge 0}$ and a set $X\subseteq[n]$, if $R$ is a random subset of $X$ such that every element of $X$ appears in $R$ with probability \textbf{at most} $p$ (not necessarily independently), then we have that $\E[f(R)] \ge (1-p)\cdot f(\emptyset)$.
\end{lemma}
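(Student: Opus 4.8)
The plan is to sandwich $\E[f(R)]$ between its value under a worst-case distribution with the same marginals and a quantity that is visibly at least $(1-p)f(\emptyset)$; concretely, I would lower-bound $\E[f(R)]$ by the Lov\'asz extension of $f$ evaluated at the vector of marginal probabilities. Write $q\in[0,1]^n$ for the vector $q_e:=\Pr[e\in R]$ (so $q_e=0$ for $e\notin X$), and let $\hat f$ be the Lov\'asz extension of $f$, i.e.\ $\hat f(x)=\int_0^1 f(\{e:x_e\ge\theta\})\,d\theta$ for $x\in[0,1]^n$. The first and main step is the (standard) fact that, because $f$ is submodular, $\hat f$ is the convex closure of $f$, so that $\E[f(R)]\ge\hat f(q)$. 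I would prove this via Edmonds' greedy description of the submodular polyhedron: put $g:=f-f(\emptyset)$, which is submodular with $g(\emptyset)=0$, order $[n]=\{e_1,\dots,e_n\}$ so that $q_{e_1}\ge\cdots\ge q_{e_n}$, and let $w^\star$ be the greedy point $w^\star_{e_i}:=g(\{e_1,\dots,e_i\})-g(\{e_1,\dots,e_{i-1}\})$. Then $w^\star$ lies in $\{w:\sum_{e\in S}w_e\le g(S)\text{ for all }S\subseteq[n]\}$ and $\langle w^\star,q\rangle=\hat g(q)$; since $g(R)\ge\sum_{e\in R}w^\star_e$ holds for every outcome of $R$, taking expectations gives $\E[f(R)]-f(\emptyset)=\E[g(R)]\ge\E[\sum_{e\in R}w^\star_e]=\langle w^\star,q\rangle=\hat g(q)=\hat f(q)-f(\emptyset)$.

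For the second step, observe that $q_e\le p$ for every $e$ forces $\{e:q_e\ge\theta\}=\emptyset$ whenever $\theta>p$, hence
\[
\hat f(q)=\int_0^p f(\{e:q_e\ge\theta\})\,d\theta+\int_p^1 f(\emptyset)\,d\theta\ \ge\ 0+(1-p)f(\emptyset),
\]
where the first integrand is nonnegative since $f\ge 0$. Combining the two steps yields $\E[f(R)]\ge\hat f(q)\ge(1-p)f(\emptyset)$, which is the claim (the case $p=1$ being trivial from $f\ge 0$).

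The only real difficulty is the first step, the convex-closure inequality $\E[f(R)]\ge\hat f(q)$: this is exactly where submodularity enters, and it rests on the classical but nontrivial greedy/polyhedral characterization of the Lov\'asz extension, whereas the second step is merely splitting one integral. I would also flag a tempting dead end: the ``obvious'' route through the multilinear extension $F$ does not suffice, because $F$ is affine in each coordinate, so rounding coordinates down one at a time only gives $\E[f(R)]=F(q)\ge(1-p)^{|X|}f(\emptyset)$ even when $R$ is a product distribution --- far too weak; the nonnegativity of $f$ must be used ``globally'', which is precisely what the Lov\'asz extension accomplishes. An alternative that stays within the lemmas of this excerpt is to reduce to Lemma~\ref{lem:subsample_exactly}: after replacing each element by parallel copies so that all marginals equal a common value $\bar q\le p$, the distribution obtained by mixing the law of $R$ with weight $\tfrac{1-p}{1-\bar q}$ against the point mass on $X$ has every element present with probability exactly $p$, so Lemma~\ref{lem:subsample_exactly} applies and, after rearranging the resulting inequality and using $f\ge 0$, yields $\E[f(R)]\ge(1-\bar q)f(\emptyset)\ge(1-p)f(\emptyset)$.
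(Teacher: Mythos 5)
The paper does not prove Lemma~\ref{lem:subsample_at_most} at all; it imports it verbatim from \citet{BFNS14}, so there is no in-paper proof to compare against. Your primary argument is correct and is, in fact, essentially the standard proof of this lemma (and the one in the cited source): the greedy/Edmonds characterization shows the Lov\'asz extension is a lower bound on $\E[f(R)]$ for any distribution with marginals $q$, and the level-set formula then gives $\hat f(q)\ge\int_p^1 f(\emptyset)\,d\theta=(1-p)f(\emptyset)$ once every $q_e\le p$. All the intermediate steps (the greedy vector lying in the polyhedron of $g=f-f(\emptyset)$, $\langle w^\star,q\rangle=\hat g(q)$ via Abel summation, and $\hat g=\hat f-f(\emptyset)$) check out. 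One caution about the ``alternative'' you sketch at the end: the parallel-copies device for equalizing marginals does not preserve submodularity for \emph{non-monotone} $f$. Taking $n=1$ with $f(\emptyset)=1$, $f(\{e\})=0$ and duplicating $e$ into $e_1,e_2$ (with $f'(S):=f(\pi(S))$) gives $f'(e_1\mid\emptyset)=-1<0=f'(e_1\mid\{e_2\})$, violating submodularity; and without equal marginals no single mixture of the law of $R$ with the point mass on $X$ can make every marginal exactly $p$, so the reduction to Lemma~\ref{lem:subsample_exactly} does not go through as stated. Since that reduction is offered only as an aside, your proof stands on the Lov\'asz-extension argument.
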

Finally, we state a lemma that follows from standard rounding schemes~\citep{CCPV11,CVZ10} for matroid-constrained submodular maximization.
\begin{lemma}\label{lem:pipage}
Given a matroid $\M\in2^{[n]}$, a submodular function $f:2^{[n]}\to\R_{\ge0}$, and its multi-linear extension $F:[0,1]^n\to\R_{\ge0}$, for any $x=\sum_{i=1}^{N} \frac{1}{N}\cdot\mathbf{1}_{S_i}$ such that $S_i\in\M$ for all $i\in[N]$, there exists $X\subseteq\bigcup_{i=1}^{N} S_i$ such that $X\in\M$ and $f(X)\ge F(x)$.
\end{lemma}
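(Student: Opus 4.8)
The plan is to obtain this from the standard pipage rounding scheme \citep{CCPV11} (swap rounding \citep{CVZ10} also works, yielding the weaker in-expectation bound, which already suffices since we only need existence). The first observation is that $x$ lies in the matroid polytope $P_{\M} := \mathrm{conv}\{\mathbf{1}_S : S \in \M\}$ and is supported on $U := \bigcup_{i=1}^{N} S_i$: since each $S_i \in \M$ we have $\mathbf{1}_{S_i} \in P_{\M}$, hence $x = \sum_{i=1}^{N} \frac{1}{N}\mathbf{1}_{S_i} \in P_{\M}$ by convexity, and every coordinate of $x$ outside $U$ is zero. (Equivalently, one may pass to the restriction $\M_U$, which is a matroid containing each $S_i$ and whose matroid polytope contains $x$, so that the rounding happens entirely within $[0,1]^U$.)

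Next I would run pipage rounding on $x$ inside $P_{\M}$, using two standard properties of the procedure. First, each step moves the current iterate $y$ along a direction of the form $\mathbf{1}_a - \mathbf{1}_b$ for two fractional coordinates $a, b$, as far as possible while remaining in $P_{\M}$; the support of the iterate never grows, and after finitely many steps the procedure reaches a vertex $\mathbf{1}_X$ of $P_{\M}$, so $X \in \M$ and $X \subseteq \mathrm{supp}(x) \subseteq U$. Second, $F$ is nondecreasing along the procedure: the univariate function $g(t) := F(y + t(\mathbf{1}_a - \mathbf{1}_b))$ is convex, since $F$ is multilinear (hence $\partial^2 F/\partial y_a^2 = \partial^2 F/\partial y_b^2 = 0$) and $\partial^2 F/\partial y_a \partial y_b \le 0$ by submodularity, so that $g''(t) = -2\,\partial^2 F/\partial y_a\partial y_b \ge 0$; a convex function on an interval attains its maximum at an endpoint, so one of the two endpoints of the feasible segment has $F$-value at least $F(y)$, and pipage rounding moves there.

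Combining the two properties, the output $\mathbf{1}_X$ satisfies $f(X) = F(\mathbf{1}_X) \ge F(x)$ with $X \in \M$ and $X \subseteq \bigcup_{i=1}^{N} S_i$, as desired. I do not expect a real obstacle here: the one point to keep in mind is that $f$ is not assumed monotone, so the argument must rely on the convexity of $F$ along the exchange directions $\mathbf{1}_a - \mathbf{1}_b$ (which holds for every submodular $f$) rather than on any monotonicity of $F$, and one must check that the rounding scheme invoked — pipage rounding, or swap rounding with the $S_i$ merged two at a time — indeed keeps the support of the iterate inside $U$ throughout.
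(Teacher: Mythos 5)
Your proof is correct and takes the same route the paper relies on: the paper states Lemma~\ref{lem:pipage} without proof, citing it as a consequence of the standard pipage/swap rounding schemes of \citet{CCPV11,CVZ10}, and your argument is precisely the standard pipage-rounding proof, fleshed out with the two details that matter here, namely that convexity of $F$ along the exchange direction $\mathbf{1}_a-\mathbf{1}_b$ uses only submodularity (not monotonicity), and that the output stays inside $\bigcup_{i=1}^N S_i$ because the procedure only moves fractional coordinates (or, equivalently, because one may first restrict to $\M_{\bigcup_i S_i}$).
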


\section{A \texorpdfstring{$(\nicefrac{1}{2}-\eps)$-approximation}{(1/2-eps)-approximation} random-order streaming algorithm}\label{section:streaming}
In this section, we present our semi-streaming algorithm, \textsc{Greedy-Filtering} (Algorithm~\ref{alg:greedy_filtering}), which achieves a $\nicefrac{1}{2}-\eps$ approximation using $\widetilde{\cO}\left(\frac{r}{\poly(\eps)}\right)$ memory for any submodular function $f:2^{[n]}\to\R_{\ge0}$ and any matroid with rank $r=o(n)$. Algorithm~\ref{alg:greedy_filtering} operates in three phases.

\begin{algorithm}[ht]
\SetAlgoLined
\SetKwInOut{Input}{Input}
\SetKwInOut{Output}{Output}
\Input{Matroid $\M\subseteq 2^{[n]}$ with rank $r\in\N^+$, submodular function $f:2^{[n]}\to\R_{\ge0}$, uniformly random permutation $\pi:[n]\to[n]$, and parameter $\eps\in\left(0,\frac{1}{2}\right)$.}
\SetAlgorithmName{Algorithm}~~
\SetKw{Break}{break}
 Throughout the stream $\pi$, the algorithm maintains a set $T$ consisting of the top $\ceil{\frac{\ln(1/\eps)}{\eps}}$ elements with the highest singleton values among the elements that have arrived (in case of ties, w.l.o.g., keep the element that arrived earliest)\;
 \tcp{Phase 1}
 Denote $V_0:=\{\pi(1),\dots,\pi(\ceil{\eps n})\}$\;
 $w\gets\max_{e\in V_0} f(\{e\})\cdot \frac{r}{\eps}$\;
 \tcp{Phase 2}
 Denote $V_i:=\{\pi(j)\mid j\in\{\ceil{\eps n}+(i-1)\cdot\ceil{\frac{\eps n}{r}}+1,\dots,\ceil{\eps n}+i\cdot\ceil{\frac{\eps n}{r}}\}\}$ for all $i\in[r]$\;
 $s_1\gets\argmax_{e\in V_1} f(\{e\})$\;
 $S_0\gets\emptyset$ and $S_1\gets\{s_1\}$\;
 \For{$i=2,\dots,r$}{
    $s_i\gets\argmax_{e\in V_i\cup\{s_1\} \textnormal{ s.t.}\,S_{i-1}\cup\{e\}\in\M} f(\{e\}|S_{i-1})$\label{algline:streaming_greedy_selection}\;
    $S_i\gets S_{i-1}\cup\{s_i\}$\;
 }
 \tcp{Phase 3}
 $I\gets\{(1+\eps)^i\cdot\frac{\eps^2 w}{r^2}\mid i\in\{0,...,\ceil{2\log_{1+\eps}(r/\eps)}\}\}$\;
 $H\gets\emptyset$\;
 \For{$j=\ceil{\eps n}+r\cdot\ceil{\frac{\eps n}{r}}+1,\dots,n$}{
    \If{$\exists\,i\in[r]$ s.t.~$S_{i-1}\cup\{\pi(j)\}\in\M$ and $\floor{f(\{\pi(j)\}|S_{i-1})}_I>\floor{f(\{s_i\}|S_{i-1})}_I$\label{algline:streaming_filter_condition}}{
        $H\gets H\cup\{\pi(j)\}$\;
    }
    \If{$|H|>\frac{r\ln(r/\eps)\cdot|I|}{\eps}$\label{algline:streaming_break_condition}}{
        \Break;
    }
 }
 $X_{\alg}\gets\argmax_{X\subseteq T\cup S_r\cup H \textnormal{ s.t.}\,X\in\M} f(X)$\;
 \Return $X_{\alg}$\;
 \caption{\textsc{Greedy-Filtering}$(f,\M,r,\pi,\eps)$}
 \label{alg:greedy_filtering}
\end{algorithm}

\begin{paragraph}{Phase 1: Estimating the highest singleton value.}
Algorithm~\ref{alg:greedy_filtering} first identifies the element with the highest singleton value in the initial $\eps$ fraction of the random stream $\pi$, and sets $w$ to be $\frac{r}{\eps}$ times the value of this element. It treats $w$ as an upper bound on the value of the most valuable element in $[n]$. If $w$ fails to upper bound this value, we can show that w.h.p., only a few of the most valuable elements contribute non-negligibly to the optimal solution. Hence, to address this, throughout the stream, Algorithm~\ref{alg:greedy_filtering} also maintains a set $T$ of the top $\ceil{\frac{\ln(1/\eps)}{\eps}}$ elements with highest singleton values among the elements that have appeared.
\end{paragraph}

\begin{paragraph}{Phase 2: Greedily constructing a solution set.}
Then, Algorithm~\ref{alg:greedy_filtering} divides the second $\eps$ fraction of the stream into $r$ windows of equal sizes, and greedily constructs a feasible solution set $S_r$ as follows: For each $i\in[r]$, among all elements in the $i$-th window $V_i$ that can be added to the current solution set $S_{i-1}$ without violating the matroid constraint, Algorithm~\ref{alg:greedy_filtering} identifies the element $s_i$ with the highest marginal value with respect to $S_{i-1}$, and updates $S_i=S_{i-1}\cup\{s_i\}$. If no element in $V_i$ can be added to $S_{i-1}$ without violating the matroid constraint (or all elements in $V_i$ that can be added have negative marginal values with respect to $S_{i-1}$), the algorithm sets $s_i$ to be $s_1$, which acts as a dummy element with zero marginal value with respect to $S_{i-1}$.
\end{paragraph}

\begin{paragraph}{Phase 3: Filtering the remaining stream.}
Algorithm~\ref{alg:greedy_filtering} filters elements in the rest of the stream, keeping only a subset $H$ of them: For each remaining element $e$, Algorithm~\ref{alg:greedy_filtering} adds element $e$ to $H$ if, for some $i\in[r]$, element $e$ has a strictly higher marginal value with respect to solution set $S_{i-1}$ than element $s_i$. Notably, when comparing the marginal values of $e$ and $s_i$, Algorithm~\ref{alg:greedy_filtering} first applies the rounding operator defined in Definition~\ref{def:rounding_op} with respect to a set $I$ of geometrically increasing numbers, upper bounded by $w$. As we will show in the analysis, this coarse comparison of marginal values, based on the rounding operator, is crucial for the algorithm to achieve $\nicefrac{1}{2}-\eps$ approximation while keeping the memory cost nearly linear in the matroid rank w.h.p. In the algorithm, we also set a hard memory limit at Line~\ref{algline:streaming_break_condition}.
\vspace{0.5\baselineskip}
\end{paragraph}

Finally, Algorithm~\ref{alg:greedy_filtering} finds the most valuable subset of $T\cup S_r\cup H$ that satisfies the matroid constraint through an exhaustive search and returns it as the final solution. The approximation guarantee of Algorithm~\ref{alg:greedy_filtering}, along with its memory cost, is formally stated in Theorem~\ref{thm:streaming}.

\begin{theorem}\label{thm:streaming}
Given any input submodular function $f:2^{[n]}\to\R_{\ge0}$, matroid $\M\subseteq2^{[n]}$ with rank $r\in\N^+$, and parameter $\eps\in\left(0,\frac{1}{2}\right)$, Algorithm~\ref{alg:greedy_filtering} achieves a $\nicefrac{1}{2}-\eps'$ approximation using $\cO\left(\frac{r\ln(r/\eps)^2}{\eps^2}\right)$ memory, where $\eps'=8\cdot\sqrt{2\eps+\frac{2r}{n}}$.
\end{theorem}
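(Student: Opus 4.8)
The plan is to analyze the three phases of Algorithm~\ref{alg:greedy_filtering} separately, combine their guarantees, and handle the low-probability ``bad'' events via a union bound absorbed into $\eps'$. Throughout, I would condition on a ``good'' event that captures three things: (i) the partition of the random stream into the initial $\eps$-fraction, the $r$ windows, and the filtering suffix behaves as expected (in particular each window $V_i$ and the set $V_0$ are nonempty and each window contains roughly $\eps n / r$ elements), (ii) the hard memory cutoff at Line~\ref{algline:streaming_break_condition} is never triggered, and (iii) if $w$ underestimates the largest singleton value, then the ``heavy'' elements (those with singleton value $> \eps w / r$, say) are few and are all captured by $T$. The $\sqrt{2\eps + 2r/n}$ form of $\eps'$ strongly suggests that the dominant error terms are $O(\sqrt{\eps})$ from elements of $O$ landing in the prefix windows ($V_0 \cup V_1 \cup \dots \cup V_r$, a $2\eps$-fraction of the stream) and $O(\sqrt{r/n})$ from the ceiling rounding in the window sizes; I would make this precise with a concentration inequality (Chernoff/Hoeffding on the hypergeometric-type random variable counting how many elements of $O$, weighted by marginal contributions, fall into the prefix).

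The heart of the argument is the filtering property: I would show that, conditioned on the good event, there is a feasible set $X \subseteq S_r \cup H$ (or $X \subseteq T \cup S_r \cup H$) with $f(X) \ge (\nicefrac{1}{2} - O(\eps))\, f(O)$. The mechanism, following~\citet{RZ22b} adapted to matroids, is: every element $e \in O$ that is \emph{not} filtered into $H$ must have failed the test on Line~\ref{algline:streaming_filter_condition} for every $i\in[r]$ --- meaning either $S_{i-1} \cup \{e\} \notin \M$ or the (rounded) marginal $f(\{e\}\mid S_{i-1})$ did not exceed that of $s_i$. Using the base-exchange lemma (Lemma~\ref{lem:base_matching} or Lemma~\ref{lem:base_exchange}) to match the elements of $O \setminus H$ against the greedily chosen $s_1, \dots, s_r$, I can charge the marginal value of each such $e$ to the greedy gain $f(\{s_i\}\mid S_{i-1})$ at the step where the matching sends $e$ to $s_i$ --- the matroid feasibility of the swap guarantees that the relevant test was ``live'' at that step, so the rounded comparison gives $f(\{e\}\mid S_{i-1}) \le (1+\eps) f(\{s_i\}\mid S_{i-1}) + (\text{additive slack from } I\text{'s smallest element})$. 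Summing over $i$ and invoking submodularity to telescope $\sum_i f(\{s_i\}\mid S_{i-1}) = f(S_r) - f(\emptyset) \le f(S_r)$, this bounds $f(O \setminus H \mid S_r)$, and hence $f(O \setminus H \cup S_r) \le 2 f(S_r) + O(\eps) f(O)$ up to the additive slack. Meanwhile $O \cap H \subseteq H$ is available to the final exhaustive search, and a standard argument (combining a feasible subset of $H$ containing $O\cap H$ via matroid exchange with $S_r$, plus submodularity, plus Lemma~\ref{lem:pipage}-style rounding or simply direct set manipulation) yields a feasible $X \subseteq S_r \cup H$ with $f(X) \ge \frac{1}{2}(f(S_r \cup (O\setminus H)) + f(O \cap H \mid \emptyset))$-type bound that, after combining the pieces, beats $\nicefrac{1}{2}\, f(O) - O(\eps)f(O)$. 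The role of $T$ is to rescue the case where $w$ fails to be a valid upper bound: then the additive slack terms $\eps^2 w / r^2$ could be large relative to the true optimum, but in that regime the heavy elements are $O(\log(1/\eps)/\eps)$ in number and all lie in $T$, so the exhaustive search over $T \cup S_r \cup H$ finds a comparably good solution directly.

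For the memory bound, I would argue deterministically where possible and probabilistically for $|H|$. The set $T$ has size $O(\log(1/\eps)/\eps)$; the solution $S_r$ has size $\le r$; the discretization set $I$ has size $O(\log_{1+\eps}(r/\eps)) = O(\log(r/\eps)/\eps)$; and $|H|$ is capped at $O(r \log(r/\eps) |I| / \eps) = O(r \log(r/\eps)^2 / \eps^2)$ by Line~\ref{algline:streaming_break_condition}. So the total memory is $O(r\ln(r/\eps)^2/\eps^2)$ unconditionally. The real content is showing the cutoff is rarely reached, so it does not hurt the approximation: for each fixed $i \in [r]$ and each fixed rounded marginal-value level in $I$, the number of stream elements passing the test on Line~\ref{algline:streaming_filter_condition} at that level for that $i$ has, in expectation over the random order, size comparable to the number of such elements appearing before $s_i$ was fixed --- which is small because $s_i$ was the max-marginal element in its window --- and a concentration bound over the random permutation shows the total $|H|$ stays below the cutoff with probability $1 - O(\eps)$ (say), which is folded into $\eps'$.

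The step I expect to be the main obstacle is making the matroid base-exchange charging argument fully rigorous while simultaneously controlling the additive error from the rounding operator $\floor{\cdot}_I$. In the cardinality case of~\citet{RZ22b} the matching between $O$ and the greedy solution is trivial (any injection works), but under a matroid constraint one must invoke Lemma~\ref{lem:base_matching}/Lemma~\ref{lem:base_exchange} carefully, and crucially argue that when the matching sends $e \mapsto s_i$, the set $S_{i-1} \cup \{e\}$ was indeed independent at the moment step $i$ ran --- this requires that $S_{i-1}$ be the ``right'' prefix, which needs the bases to be built compatibly; handling the dummy-element case (when $s_i = s_1$ because no valid nonnegative-marginal element was in $V_i$) and the elements of $O$ that never appear in any window adds further bookkeeping. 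The second genuine difficulty is the random-order analysis ensuring the greedy step in Phase 2 sees a ``representative'' best element: because each window $V_i$ is only an $\eps/r$-fraction of the stream, one must argue that with good probability the marginal-value comparisons made against the window-local maximizer $s_i$ are not much weaker than against the global optimum --- this is where the $\sqrt{\eps}$ loss and the need to analyze $f(\{e\}\mid S_{i-1})$ jointly over the random placement of $O$'s elements become delicate, and is likely the source of the factor $8$ in $\eps' = 8\sqrt{2\eps + 2r/n}$.
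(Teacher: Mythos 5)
Your high-level plan (three-phase analysis, simple/unlikely/regular cases, base-exchange charging, the role of $T$ as a rescue when $w$ underestimates) matches the paper's architecture, and you correctly identify the matroid base-exchange charging as the technical heart. But the mechanism you propose for the $\sqrt{2\eps + 2r/n}$ form of $\eps'$ is a genuine gap, not a detail. You suggest a Chernoff/Hoeffding concentration bound on a ``hypergeometric-type random variable counting how many elements of $O$, weighted by marginal contributions, fall into the prefix.'' For a non-monotone submodular $f$, this does not cleanly work: the value $f(O\cap V_{\final})$ is not a (negatively associated) sum of independent per-element terms, so standard concentration does not directly bound its deviation, and the ``marginal contributions'' you would weight by are themselves set-dependent. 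The paper instead uses Markov's inequality one-sidedly: Lemma~\ref{lem:subsample_at_most} gives $\E[-f(X_{\min}\mid O)]\le\delta f(O)$ for any $X_{\min}\subseteq V$ when each element lands in $V$ with probability at most $\delta$, and then Markov on $-f(X_{\min}\mid O)$ (Claim~\ref{claim:eps_fraction_does_not_hurt}) gives failure probability $\delta/\eta$ with guaranteed slack $\eta$, and one sets $\eta=\sqrt{\delta}$ with $\delta\approx 2\eps+2r/n$ to balance the two losses. The square root is a Markov trade-off, not a Chernoff tail. Moreover the crucial high-probability event $E_3$ is about $f(X\mid O)\ge-\eta f(O)$ for all $X\subseteq\bigcup V_i$ (i.e., elements from Phase~2 do not ``poison'' $O$ when unioned with it), which your concentration-on-prefix-counts does not capture at all but is needed in the final $f(O_{\final}\cup S_r)\ge f(O_{\final})-\eta f(O)$ step.

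Two further soft spots. First, for bounding $|H|$ you propose ``a concentration bound over the random permutation'' on the counts per $(i,\text{level})$ pair; the paper does not use a concentration bound here but a bespoke conditional-probability argument built on \emph{effective vs.\ ineffective selectors} (Claim~\ref{claim:ineffective_selector}, Lemma~\ref{lem:H_i_setminus_H_i-1}): ineffective selectors add nothing to $H$, there can be at most $|I|$ effective ones, and for each effective one a geometric argument shows that either few elements pass or, with probability $\ge 1-\eps/r$, one of the top candidates would have landed in $V_i$ (forcing $s_i$ to dominate them). This is not a generic concentration inequality and would not follow from one without that structural decomposition. Second, on the charging argument, your sketch uses two candidate solutions with an equal split (essentially the paper's own simplified sketch, which assumes $S_r$ and $O$ are both bases). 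The full proof (Lemma~\ref{lem:streaming_regular_case}) first restricts to $O_{\final}=O\cap V_{\final}$, augments $S_r$ and $O_{\final}$ to bases of $\M_{S_r\cup O_{\final}}$ via $O_S,S_O$, applies Lemma~\ref{lem:base_exchange} in the contraction to get $S_L,S_H$, then Lemma~\ref{lem:base_matching} for the bijection $h:S_L\to O_L'$, and finally takes a $(\eps/(2-\eps),(1-\eps)/(2-\eps),(1-\eps)/(2-\eps))$ convex combination of three candidates $S_L,S_r,S_L\cup O_H$ — not a simple $1/2$-$1/2$ split — precisely to absorb the rounding slack and the extra set $O_S\cap O_L$. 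You flag this as the obstacle, which is fair, but as written your bound $f(X)\ge\frac{1}{2}(f(S_r\cup(O\setminus H))+f(O\cap H\mid\emptyset))$ is not what the analysis produces and would need substantial repair.
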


The memory bound follows from a simple calculation: Throughout the data stream, Algorithm~\ref{alg:greedy_filtering} only needs to maintain three sets of elements $T,S_r,H$ (note that it can infer $S_1,\dots,S_{r-1}$ from $S_r$). By the construction of $T$ and $S_r$, we have that $|T|=\ceil{\frac{\ln(1/\eps)}{\eps}}=\cO\left(\frac{1}{\eps^2}\right)$ and $|S_r|\le r$. Moreover, the break condition at Line~\ref{algline:streaming_break_condition} of Algorithm~\ref{alg:greedy_filtering} ensures that $|H|= \cO\left(\frac{r\ln(r/\eps)\cdot|I|}{\eps}\right)=\cO\left(\frac{r\ln(r/\eps)^2}{\eps^2}\right)$, since $|I|=\ceil{2\log_{1+\eps}(\frac{r}{\eps})}+1=\cO\left(\frac{\ln(r/\eps)}{\eps}\right)$. Hence, the total memory cost is $\cO\left(\frac{r\ln(r/\eps)^2}{\eps^2}\right)$.

Next, we prove the approximation guarantee in three main steps: First, we show that unless $\max_{e\in [n]}f(\{e\})\ge \frac{r}{\eps}\cdot\min_{e\in T}f(\{e\})$ (in which case we prove that the most valuable and feasible subset of $T$ is a $1-\eps$ approximation of the optimal solution), w.h.p., the value $w$ in Algorithm~\ref{alg:greedy_filtering} is an upper bound on the value of the most valuable element in $[n]$. Then, we show that not many elements pass the filter in the third phase of Algorithm~\ref{alg:greedy_filtering}, meaning that w.h.p., the break condition at Line~\ref{algline:streaming_break_condition} is never met. Finally, in the regular case where $w$ caps the value of the most valuable element and only a few elements pass the filter, we prove that w.h.p., the most valuable and feasible subset of $S_r\cup H$ is almost a $\nicefrac{1}{2}$ approximation of the optimal solution. We implement these three steps in the following three subsections.

\subsection{The simple case: \texorpdfstring{$\max_{e\in [n]}f(\{e\})\ge \frac{r}{\eps}\cdot\min_{e\in T}f(\{e\})$}{r square times lowest singleton value is below highest singleton value in T}}
We first analyze the simple case where $\max_{e\in [n]}f(\{e\})\ge \frac{r}{\eps}\cdot\min_{e\in T}f(\{e\})$. We note that the set $T$ in Algorithm~\ref{alg:greedy_filtering} eventually consists of the top $\ceil{\frac{\ln(1/\eps)}{\eps}}$ elements in $[n]$ with the highest singleton values (and we will only consider this final version of $T$ in the analysis). In Lemma~\ref{lem:streaming_easy_case}, we show that Algorithm~\ref{alg:greedy_filtering} achieves a $1-\eps$ approximation in this case.

\begin{lemma}\label{lem:streaming_easy_case}
If $\max_{e\in [n]}f(\{e\})\ge \frac{r}{\eps}\cdot\min_{e\in T}f(\{e\})$, Algorithm~\ref{alg:greedy_filtering} obtains a $1-\eps$ approximation.
\end{lemma}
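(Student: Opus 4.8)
The plan is to show that in this regime the single most valuable element of $[n]$ — together with possibly a few other top elements — already carries almost all of the optimal value, and that this element ends up in $T$, so the exhaustive search over subsets of $T\cup S_r\cup H$ in the last line of Algorithm~\ref{alg:greedy_filtering} will find a near-optimal feasible set. Let $e^\star := \argmax_{e\in[n]} f(\{e\})$ and write $v := f(\{e^\star\})$. The hypothesis says $v \ge \frac{r}{\eps}\cdot\min_{e\in T} f(\{e\})$; since $T$ is the set of the $\ceil{\ln(1/\eps)/\eps}$ elements of highest singleton value, every element outside $T$ has singleton value at most $\min_{e\in T} f(\{e\}) \le \frac{\eps}{r}\, v$. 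In particular $e^\star\in T$ (it has the largest singleton value of all, so it cannot be excluded from $T$), hence $\{e^\star\}$ is one of the feasible subsets considered in the final exhaustive search (a single element is always independent in a matroid of rank $r\ge 1$, provided it is not a loop; if $e^\star$ is a loop then $v=0$ and $f(O)=0$ and any output is optimal).

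The core estimate is then to bound $f(O)$ from above by $v$ plus a small error. Order the elements of the optimal base $O=\{o_1,\dots,o_k\}$ arbitrarily with $o_1 := \argmax_{o\in O} f(\{o\})$ listed first. By submodularity and nonnegativity,
\[
f(O) \;\le\; f(\{o_1\}) \;+\; \sum_{j\ge 2} f(\{o_j\}\mid \emptyset) \;=\; f(\{o_1\}) + \sum_{j\ge 2} f(\{o_j\}) \;\le\; v + \sum_{j\ge 2} f(\{o_j\}).
\]
Here each $o_j$ with $j\ge 2$ has singleton value at most $\frac{\eps}{r} v$ if $o_j\notin T$; if $o_j\in T$ we cannot use that bound directly, but $|O|\le r$ and there are at most $|T|=\ceil{\ln(1/\eps)/\eps}$ elements of $T$ in total. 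The cleanest way to close this is: every element of $O$ except $o_1$ that does \emph{not} lie in $T$ contributes at most $\frac{\eps}{r} v$, and there are at most $r-1$ such elements, so their total contribution is at most $\eps\, v$. For the elements of $O\cap T$, note they are all in $T$ and hence all available to the exhaustive search; so I would instead compare $f(O)$ against $f(O\cap T)$. Concretely, $O\cap T\in\M$ (it is a subset of $O$), so the exhaustive search returns a set of value at least $f(O\cap T)$, and it remains to show $f(O\cap T)\ge(1-\eps) f(O)$. By submodularity, $f(O) - f(O\cap T) \le \sum_{o\in O\setminus T} f(\{o\}\mid O\cap T) \le \sum_{o\in O\setminus T} f(\{o\})$, and every $o\in O\setminus T$ has $f(\{o\})\le \frac{\eps}{r} v \le \frac{\eps}{r} f(O)$ (since $v=f(\{e^\star\})\le f(O)$, as $\{e^\star\}$ is feasible). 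There are at most $|O\setminus T|\le r$ such elements, so the total gap is at most $\eps\, f(O)$, giving $f(X_{\alg}) \ge f(O\cap T) \ge (1-\eps) f(O)$.

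The main obstacle — really the only place one has to be slightly careful — is the bookkeeping in the last displayed chain: one must verify that $v\le f(O)$ (immediate, since $\{e^\star\}$ is independent hence a candidate for $O$, unless $e^\star$ is a loop, handled separately), that $O\setminus T$ has at most $r$ elements (immediate from $|O|\le r$), and that the singleton bound $f(\{o\})\le\frac{\eps}{r}v$ holds for \emph{every} $o\notin T$, which is exactly the hypothesis unwound through the definition of $T$. No probabilistic argument is needed here — this is the deterministic ``simple case'' — so there is no high-probability statement to manage; the randomness of $\pi$ only matters because $T$ as maintained during the stream coincides with the top-$\ceil{\ln(1/\eps)/\eps}$ elements of $[n]$ at the end, which is a deterministic fact about the tie-breaking rule and not an event that can fail.
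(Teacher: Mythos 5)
Your proof is correct and follows essentially the same approach as the paper's: bound $f(O)\le f(O\cap T)+\sum_{o\in O\setminus T}f(\{o\})$ by submodularity, combine the hypothesis with the definition of $T$ to deduce that each $f(\{o\})$ for $o\notin T$ is at most $\frac{\eps}{r}\max_{e\in[n]}f(\{e\})$, and observe that $O\cap T\subseteq T$ is a feasible candidate in the final exhaustive search. The only cosmetic difference is the last step: you compare the $\eps\cdot\max_e f(\{e\})$ error term against $f(O)$ (via $\max_e f(\{e\})\le f(O)$) to conclude $f(O\cap T)\ge(1-\eps)f(O)$ directly, whereas the paper compares it against $\max_{X\subseteq T,\,X\in\M}f(X)$ to obtain $f(O)\le(1+\eps)\cdot f(X_{\alg})$ and then uses $\frac{1}{1+\eps}\ge 1-\eps$; both give the same guarantee. (Your parenthetical that a loop would force $v=0$ is not correct as stated — a loop can have positive singleton value — but this is harmless: both your step $v\le f(O)$ and the paper's step $\max_{e\in T}f(\{e\})\le\max_{X\subseteq T,\,X\in\M}f(X)$ rest on the same standard implicit assumption that the top singleton element is not a loop, which one may always ensure by discarding loops from the ground set.)
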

\begin{proof}
Since $T$ contains the most valuable element in $[n]$, we have $\max_{e\in [n]}f(\{e\})=\max_{e\in T}f(\{e\})$. Hence, the assumption that $\max_{e\in [n]}f(\{e\})\ge \frac{r}{\eps}\cdot\min_{e\in T}f(\{e\})$ implies that
\begin{equation}\label{eq:max_higher_than_r_over_eps_times_min_in_T}
\max_{e\in T}f(\{e\})\ge \frac{r}{\eps}\cdot\min_{e\in T}f(\{e\}).
\end{equation}
Moreover, because $T$ consists of the top $\ceil{\frac{\ln(1/\eps)}{\eps}}$ most valuable elements in $[n]$, any element in $T$ has a value at least as large as any element outside $T$, i.e., $\min_{e\in T}f(\{e\}) \ge \max_{e\in[n]\setminus T}f(\{e\})$. Combining this with Ineq.~\eqref{eq:max_higher_than_r_over_eps_times_min_in_T}, we obtain that
\begin{equation}\label{eq:max_in_T_higher_than_r_over_eps_times_max_outside_T}
\max_{e\in T}f(\{e\})\ge \frac{r}{\eps}\cdot\max_{e\in[n]\setminus T}f(\{e\}).
\end{equation}

Now we upper bound the value of the optimal solution $O$ as follows,
\begin{align*}
    f(O)&\le f(O\cap T)+\sum_{e\in O\cap ([n]\setminus T)}f(\{e\}) &&\text{(By submodularity)}\\
    &\le f(O\cap T)+r\cdot\max_{e\in[n]\setminus T}f(\{e\}) &&\text{(Since $|O|\le r$)}\\
    &\le f(O\cap T)+\eps\cdot\max_{e\in T}f(\{e\}) &&\text{(By Ineq.~\eqref{eq:max_in_T_higher_than_r_over_eps_times_max_outside_T})}\\
    &\le (1+\eps)\cdot\max_{X\subseteq T \textnormal{ s.t.}\,X\in\M} f(X).
\end{align*}
Because each subset of $T$ is considered in the final exhaustive search of Algorithm~\ref{alg:greedy_filtering}, the returned solution $X_{\alg}$ is at least a $\frac{1}{1+\eps}$ approximation of the optimal solution $O$. Notice that $\frac{1}{1+\eps}\ge1-\eps$.
\end{proof}

Given Lemma~\ref{lem:streaming_easy_case}, we can now focus on the cases where $\max_{e\in [n]}f(\{e\})<\frac{r}{\eps}\cdot\min_{e\in T}f(\{e\})$. Under this assumption, we show that w.h.p., the value $w$ in Algorithm~\ref{alg:greedy_filtering} is an upper bound on $\max_{e\in [n]}f(\{e\})$. Specifically, we define the set $T^+:=\{e'\in[n]\mid f(\{e'\})\ge \min_{e\in T} f(\{e\})\}$ (note that if there are multiple elements with the same singleton value, $T$ could be a random set because of the random stream, but $T^+$ is always deterministic). In Lemma~\ref{lem:streaming_w_upper_bound_max_singleton_value}, we prove that w.h.p., at least one element of $T^+$ appears in the first $\eps$ fraction of the random stream $\pi$, which implies that $w\ge\max_{e\in [n]}f(\{e\})$, assuming $\max_{e\in [n]}f(\{e\})< \frac{r}{\eps}\cdot\min_{e\in T}f(\{e\})$.

\begin{lemma}\label{lem:streaming_w_upper_bound_max_singleton_value}
Let $E_1$ denote the event that there exists some element of $T^+$ that appears in $V_0=\{\pi(1),\dots,\pi(\ceil{\eps n})\}$, i.e., the first $\eps$ fraction of the random stream $\pi$. Then, we have that $\Pr[E_1]\ge 1-\eps$. Moreover, if $\max_{e\in [n]}f(\{e\})< \frac{r}{\eps}\cdot\min_{e\in T}f(\{e\})$, then conditioned on event $E_1$, it holds that $w\ge\max_{e\in [n]}f(\{e\})$.
\end{lemma}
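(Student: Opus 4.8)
The plan is to dispatch the two claims by elementary counting, so this should be short.

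For the bound $\Pr[E_1]\ge 1-\eps$: write $k:=|T^+|$ and observe that $T^+\supseteq T$, so $k\ge |T|=\ceil{\frac{\ln(1/\eps)}{\eps}}$. (In the degenerate case $n<\ceil{\frac{\ln(1/\eps)}{\eps}}$ we have $T^+=[n]$ and $V_0\neq\emptyset$, so $E_1$ holds deterministically and there is nothing to prove.) I would reveal the positions in $\pi$ of the $k$ elements of $T^+$ one at a time. Given that the first $j-1$ of them all landed outside $V_0$ (i.e., outside the first $\ceil{\eps n}$ slots of $\pi$), the $j$-th element is uniform over the $n-j+1$ remaining slots, all $\ceil{\eps n}$ slots of $V_0$ being among them; hence it lands in $V_0$ with conditional probability $\frac{\ceil{\eps n}}{n-j+1}\ge\eps$, and outside $V_0$ with probability at most $1-\eps$. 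Multiplying these conditional probabilities, the complementary event $\overline{E_1}$ --- that every element of $T^+$ misses $V_0$ --- has probability at most $(1-\eps)^k\le e^{-\eps k}\le e^{-\ln(1/\eps)}=\eps$, where the last step uses $\eps k\ge\ln(1/\eps)$. This yields $\Pr[E_1]\ge 1-\eps$.

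For the second claim, condition on $E_1$ and fix an element $a\in T^+\cap V_0$. By the definition of $T^+$ we have $f(\{a\})\ge\min_{e\in T}f(\{e\})$, so $\max_{e\in V_0}f(\{e\})\ge f(\{a\})\ge\min_{e\in T}f(\{e\})$, and therefore $w=\frac{r}{\eps}\cdot\max_{e\in V_0}f(\{e\})\ge\frac{r}{\eps}\cdot\min_{e\in T}f(\{e\})$. Combining with the standing hypothesis $\max_{e\in[n]}f(\{e\})<\frac{r}{\eps}\cdot\min_{e\in T}f(\{e\})$ gives $w>\max_{e\in[n]}f(\{e\})$, which in particular implies the claimed inequality $w\ge\max_{e\in[n]}f(\{e\})$.

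Neither step is a genuine obstacle; the only point worth a moment's care is extracting the clean estimate $(1-\eps)^k\le e^{-\eps k}\le\eps$, for which one should argue via the chained conditional probabilities above --- or equivalently via the exact ratio $\binom{n-k}{\ceil{\eps n}}\big/\binom{n}{\ceil{\eps n}}$ bounded factor-by-factor using $1-\frac{\ceil{\eps n}}{n}\le 1-\eps$ --- rather than a cruder union-type estimate, and that one recalls $|T^+|\ge|T|$ rather than reasoning with $T$ itself (which may be random under ties, whereas $T^+$ is deterministic). Everything else is definition-chasing.
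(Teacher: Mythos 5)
Your proof is correct and follows essentially the same route as the paper's: bound the chained conditional probabilities that each element of $T^+$ misses $V_0$ by $1-\eps$ each, multiply to get $(1-\eps)^{|T^+|}\le\eps$, and for the second claim observe that any element of $T^+\cap V_0$ already certifies $w\ge\frac{r}{\eps}\min_{e\in T}f(\{e\})$. The only cosmetic difference is that you compute each conditional probability exactly as $\frac{\lceil\eps n\rceil}{n-j+1}$, whereas the paper merely notes that conditioning on earlier misses can only increase the hit probability and then uses the unconditional bound $\Pr[\bar A_i]\le 1-\eps$; both yield the same $(1-\eps)^t$ estimate.
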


\begin{proof}
Let $t=|T^+|$, and note that $t\ge|T|=\ceil{\frac{\ln(1/\eps)}{\eps}}$ since $T\subseteq T^+$. Let $e_1,\dots,e_t$ denote the elements of $T^+$. For each $i\in[t]$, we let $A_i$ be the event that $e_i\in V_0$, and let $\bar{A}_i$ be its complement. First, we upper bound the probability that none of elements in $T^+$ appear in $V_0$ as follows,
\begin{align*}
    \Pr\Bigl[\bigwedge_{i\in[t]}\bar{A}_i\Bigr]&=\prod_{i\in[t]}\Pr\Bigl[\bar{A}_i \mathrel{\Big|} \bigwedge_{j\in[i-1]}\bar{A}_j\Bigr]\le\prod_{i\in[t]}\Pr[\bar{A}_i]\\
    &\quad\text{(Since the probability that $e_i\in V_0$ increases when we condition on $e_1,\dots,e_{i-1}\notin V_0$)}\\
    &\le(1-\eps)^t\le(1-\eps)^{\frac{\ln(1/\eps)}{\eps}}\le\eps.
\end{align*}
It follows that $\Pr[E_1]=1-\Pr\bigl[\bigwedge_{i\in[t]}\bar{A}_i\bigr]\ge1-\eps$, which establishes the first part of the lemma. To prove the second part, we notice that conditioned on event $E_1$, the set $V_0$ contains some element in $T^+$, and therefore, we have that $w=\frac{r}{\eps}\cdot\max_{e\in V_0} f(\{e\})\ge \frac{r}{\eps}\cdot\min_{e\in T^+} f(\{e\})=\frac{r}{\eps}\cdot\min_{e\in T} f(\{e\})$. This implies that $w\ge\max_{e\in [n]}f(\{e\})$, assuming that $\max_{e\in [n]}f(\{e\})< \frac{r}{\eps}\cdot\min_{e\in T}f(\{e\})$.
\end{proof}

\subsection{The unlikely case: \texorpdfstring{$|H|>\frac{r\ln(r/\eps)\cdot|I|}{\eps}$}{too many elements passing the filter}}
Recall that the set $H$ in Algorithm~\ref{alg:greedy_filtering} consists of elements that pass the filter in the third phase (throughout the analysis, we will only consider the final version of $H$). In Lemma~\ref{lem:streaming_unlikely_case}, we show that w.h.p., $|H|\le\frac{r\ln(r/\eps)\cdot|I|}{\eps}$, which implies that the break condition at Line~\ref{algline:streaming_break_condition} of Algorithm~\ref{alg:greedy_filtering} is unlikely to be reached.

\begin{lemma}\label{lem:streaming_unlikely_case}
Let $E_2$ denote the event that $|H|\le \frac{r\ln(r/\eps)\cdot|I|}{\eps}$. Then, we have that $\Pr[E_2]\ge1-\eps$.
\end{lemma}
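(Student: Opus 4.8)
\emph{Overview and reduction.} The plan is to bound $|H|$ one rounded‑marginal level at a time. Write $I=\{\theta_0<\dots<\theta_{|I|-1}\}$, and for $\theta\in I$ let $H_\theta$ be the set of elements that Phase~3 places into $H$ on account of a filter test at level $\theta$, i.e.\ those $e$ arriving after window $V_r$ for which some $i\in[r]$ satisfies $S_{i-1}\cup\{e\}\in\M$, $\floor{f(\{e\}|S_{i-1})}_I=\theta$, and $\floor{f(\{s_i\}|S_{i-1})}_I<\theta$. Every element of $H$ qualifies through some such test, so $H=\bigcup_{\theta\in I}H_\theta$ and $|H|\le\sum_{\theta\in I}|H_\theta|$; hence it suffices to show that, with probability at least $1-\eps$, $|H_\theta|\le\frac{r}{\eps}\ln\frac{r|I|}{\eps}$ holds for all $\theta\in I$ simultaneously. (We would in fact analyse the set Phase~3 outputs if the break at Line~\ref{algline:streaming_break_condition} were removed; on the event just described that set has size at most $|I|\cdot\frac{r}{\eps}\ln\frac{r|I|}{\eps}=\cO\bigl(\frac{r\ln(r/\eps)\,|I|}{\eps}\bigr)$, so the break never fires and the two sets coincide.) For $i\in[r]$ and $\theta\in I$ set
\[
L_{i,\theta}:=\bigl\{e\notin V_0\cup\dots\cup V_{i-1}\ :\ S_{i-1}\cup\{e\}\in\M,\ \floor{f(\{e\}|S_{i-1})}_I\ge\theta\bigr\},
\]
the not‑yet‑arrived elements that can extend $S_{i-1}$ with rounded marginal at least $\theta$.

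\emph{The collapse step.} The heart of the argument is that, for a \emph{fixed} $\theta$, all of $H_\theta$ sits inside a single set $L_{i^\ast,\theta}$ that is moreover disjoint from its window $V_{i^\ast}$. Since $S_0\subseteq S_1\subseteq\dots$, submodularity gives $f(\{e\}|S_{i-1})\ge f(\{e\}|S_{i'-1})$ for $i\le i'$, and downward‑closedness of $\M$ makes feasibility for $S_{i'-1}$ imply feasibility for $S_{i-1}$; moreover every $e\in H_\theta$ arrives after $V_r$ and so lies outside $V_0\cup\dots\cup V_{i-1}$ for all $i$. Hence $L_{r,\theta}\subseteq\dots\subseteq L_{1,\theta}$, and—crucially—if $e\in H_\theta$ wins its filter test at window $i_e$ then $e\in L_{i,\theta}$ for every $i\le i_e$. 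Letting $i^\ast$ be the earliest window at which any element of $H_\theta$ wins (assuming $H_\theta\ne\emptyset$) therefore yields $H_\theta\subseteq L_{i^\ast,\theta}$. Finally, some $e^\ast\in H_\theta$ wins at $i^\ast$, so $\floor{f(\{s_{i^\ast}\}|S_{i^\ast-1})}_I<\theta$; but $s_{i^\ast}$ maximises $f(\{e\}|S_{i^\ast-1})$ over all feasible $e\in V_{i^\ast}\cup\{s_1\}$ (the dummy $s_1$ contributing marginal $0$, which $\floor{\cdot}_I$ clamps to $\min I$; for $i^\ast=1$ the role of $s_1$ is played by the top singleton of $V_1$), so no feasible $e\in V_{i^\ast}$ has rounded marginal $\ge\theta$, i.e.\ $V_{i^\ast}\cap L_{i^\ast,\theta}=\emptyset$. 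Thus $|H_\theta|\le|L_{i^\ast,\theta}|$ and $V_{i^\ast}$ misses $L_{i^\ast,\theta}$.

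\emph{Window‑miss probability and union bound.} It now suffices to show that w.h.p.\ no window misses a large $L_{i,\theta}$. Fix $i,\theta$ and condition on the stream through the end of window $i-1$; this determines $S_{i-1}$ (hence $L_{i,\theta}$, and also $w$ and $I$), and $V_i$ is a uniformly random $\ceil{\eps n/r}$‑subset of the $N_i:=n-\ceil{\eps n}-(i-1)\ceil{\eps n/r}\le n$ still‑unseen elements, which contain $L_{i,\theta}$. Therefore
\[
\Pr\bigl[\,V_i\cap L_{i,\theta}=\emptyset\ \big|\ \text{stream through window }i-1\,\bigr]\ \le\ \Bigl(1-\tfrac{|L_{i,\theta}|}{N_i}\Bigr)^{\ceil{\eps n/r}}\ \le\ e^{-|L_{i,\theta}|\,\eps/r}.
\]
Taking a union bound over the at most $r|I|$ pairs $(i,\theta)$, the probability that some window $V_i$ misses an $L_{i,\theta}$ with $|L_{i,\theta}|>\frac{r}{\eps}\ln\frac{r|I|}{\eps}$ is at most $r|I|\,e^{-\ln(r|I|/\eps)}=\eps$. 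On the complementary event the collapse gives $|H_\theta|\le\frac{r}{\eps}\ln\frac{r|I|}{\eps}$ for every $\theta$, whence $|H|\le\frac{r|I|}{\eps}\ln\frac{r|I|}{\eps}=\cO\bigl(\frac{r\ln(r/\eps)\,|I|}{\eps}\bigr)$ using $|I|=\cO(\ln(r/\eps)/\eps)$; the exact constant in the statement is recovered by running the union bound with $\eps$ replaced by a fixed fraction of itself, or by using the sharper hypergeometric tail $\binom{N_i-|L_{i,\theta}|}{\ceil{\eps n/r}}/\binom{N_i}{\ceil{\eps n/r}}$.

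\emph{Where the difficulty lies.} The obstacle is entirely the collapse. A direct union bound over windows would permit each of the $r$ windows to miss $\approx r/\eps$ high‑marginal elements, which only yields $|H|=\cO(r^2|I|/\eps)$ — a factor $r$ too large. Recognising that, \emph{within a fixed level} $\theta$, the sets $L_{i,\theta}$ are nested and one earliest window controls all of $H_\theta$ is precisely what removes this factor; once that is established, the remainder is a routine balls‑in‑bins union bound, needing only mild care with the dummy element $s_1$, the clamping of $\floor{\cdot}_I$, and the harmless $\ln|I|=\cO(\ln(r/\eps))$ overhead from the union bound.
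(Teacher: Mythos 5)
Your argument is correct in its essentials, and it takes a genuinely different route from the paper. The paper works window-by-window: it introduces effective/ineffective selectors, shows ineffective ones add nothing to $H$ (Claim~\ref{claim:ineffective_selector}), and then proves a sequential conditional bound (Lemma~\ref{lem:H_i_setminus_H_i-1}) saying that, given the history, each \emph{effective} step can enlarge $H$ by more than $\frac{r\ln(r/\eps)}{\eps}$ only with probability $\eps/r$, the point being that otherwise the window $V_i$ would have had to miss both the set $T_i^{(1)}$ (whose presence would make $s_i$ ineffective) and the top elements of $T_i^{(2)}$. You instead decompose $H$ by rounded level $\theta\in I$ and collapse each $H_\theta$ into a single set $L_{i^*,\theta}$ that the earliest winning window $V_{i^*}$ missed entirely; the nestedness of $L_{i,\theta}$ in $i$ (via submodularity and downward-closedness), the role of the dummy $s_1$, and the $i^*=1$ case are all handled correctly, and the conditional uniformity of $V_i$ given the prefix makes the window-miss bound legitimate. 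This buys a cleaner, essentially one-shot argument (no induction over windows, no effective-selector bookkeeping), at the cost of a union bound over all $r|I|$ pairs $(i,\theta)$, which is where the two proofs' bookkeeping differs: the paper's effective-selector count $\le|I|$ plays exactly the role of your $|I|$ levels.

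One caveat: as written you only obtain $|H|\le\frac{r|I|}{\eps}\ln\frac{r|I|}{\eps}$ with probability $1-\eps$, which is weaker than the stated threshold $\frac{r\ln(r/\eps)\cdot|I|}{\eps}$ — and that exact threshold matters here, since it is hard-coded in the break condition at Line~\ref{algline:streaming_break_condition}, so event $E_2$ is literally ``the break never fires.'' Your proposed repairs do not quite close this: rescaling $\eps$ also rescales the threshold and $|I|$ in the statement, and the exact hypergeometric tail gains only a negligible amount over $(1-\ell/N_i)^{\ceil{\eps n/r}}$. The clean fix stays inside your framework: for each fixed $i$ the sets $L_{i,\theta}$ are decreasing in $\theta$, so the event ``some $\theta$ has $|L_{i,\theta}|>\tau$ and $V_i\cap L_{i,\theta}=\emptyset$'' is equivalent to ``$V_i$ misses $L_{i,\theta_i}$'' where $\theta_i$ is the largest level with $|L_{i,\theta_i}|>\tau$ (measurable given the prefix). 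This removes the union over $\theta$, so with $\tau=\frac{r\ln(r/\eps)}{\eps}$ the failure probability is at most $r\cdot e^{-\tau\eps/r}=\eps$ and each $|H_\theta|\le\tau$, recovering the lemma with its exact constant.
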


Now we introduce several key concepts that will play important roles in the proof of Lemma~\ref{lem:streaming_unlikely_case}. First, we notice that in Algorithm~\ref{alg:greedy_filtering}, each element $s_i$ chosen in the second phase acts as a \emph{selector} in the third phase: for each element $e$ in the third phase of Algorithm~\ref{alg:greedy_filtering}, the algorithm will include $e$ in set $H$ if it satisfies the selection condition posed by $s_i$, namely, if $S_{i-1}\cup\{e\}\in\M$ and $\floor{f(\{e\}|S_{i-1})}_I>\floor{f(\{s_i\}|S_{i-1})}_I$ (we say that an element $e\in[n]$ is \emph{selected} by the selector $s_i$ if it satisfies this condition). For each $i\in[r]$, we let $G_i$ denote the set of elements selected by $s_i$ among all elements appearing after the $i$-th window in the second phase of Algorithm~\ref{alg:greedy_filtering}, i.e.,
\begin{equation}\label{eq:G_i}
\textstyle G_i:=\{e\in[n]\setminus(\bigcup_{j=0}^i V_j)\mid S_{i-1}\cup\{e\}\in\M \textrm{ and } \floor{f(\{e\}|S_{i-1})}_I>\floor{f(\{s_i\}|S_{i-1})}_I\}
\end{equation}
(recall that $V_0$ is the set of elements in the first $\eps$ fraction of the stream, and for each $j\in[r]$, $V_j$ is the set of elements in the $j$-th window during the second phase of Algorithm~\ref{alg:greedy_filtering}).

We order the selectors according to their indices: $s_1,\dots,s_r$, and for each $i\in[r]$, we define
\begin{equation}\label{eq:H_i}\textstyle
H_i:=\bigcup_{j=1}^{i} G_j \textrm{ (and we let $H_0:=\emptyset$ for completeness)}.
\end{equation}
We say that a selector $s_i$ is \emph{ineffective} if there is an earlier selector $s_j$ for some $j<i$ such that $\floor{f(\{s_i\}|S_{i-1})}_I\ge\floor{f(\{s_j\}|S_{j-1})}_I$ (otherwise we call $s_i$ an \emph{effective} selector). In Claim~\ref{claim:ineffective_selector}, we show that, as the name suggests, any element that is selected by an ineffective selector $s_i$ would have already been selected by an earlier selector $s_j$ for some $j<i$, which implies that $H_i=H_{i-1}$.

\begin{claim}\label{claim:ineffective_selector}
Suppose that $s_i$ is an ineffective selector, i.e., there exists some $j<i$ such that $\floor{f(\{s_i\}|S_{i-1})}_I\ge\floor{f(\{s_j\}|S_{j-1})}_I$. Then, any element $e\in[n]$ that satisfies the selection condition of $s_i$, i.e., $S_{i-1}\cup\{e\}\in\M$ and $\floor{f(\{e\}|S_{i-1})}_I>\floor{f(\{s_i\}|S_{i-1})}_I$, also satisfies the selection condition of $s_j$, i.e., $S_{j-1}\cup\{e\}\in\M$ and $\floor{f(\{e\}|S_{j-1})}_I>\floor{f(\{s_j\}|S_{j-1})}_I$. In particular, this implies that $H_i=H_{i-1}$.
\end{claim}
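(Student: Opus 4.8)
The plan is to reduce everything to three elementary facts: (i) the solution sets are nested, $S_{j-1}\subseteq S_{i-1}$ whenever $j<i$, because Phase~2 of Algorithm~\ref{alg:greedy_filtering} builds them incrementally; (ii) the matroid is downward closed (property~ii of Definition~\ref{def:matroid}); and (iii) the rounding operator $\floor{\cdot}_I$ of Definition~\ref{def:rounding_op} is order-preserving, i.e.\ $a\ge b\implies\floor{a}_I\ge\floor{b}_I$ (immediate from the definition, splitting on whether $b\ge\min I$). I would first record these, the last one as a one-line observation.

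Next, fix an element $e\in[n]$ satisfying the selection condition of $s_i$, namely $S_{i-1}\cup\{e\}\in\M$ and $\floor{f(\{e\}\mid S_{i-1})}_I>\floor{f(\{s_i\}\mid S_{i-1})}_I$, and let $j<i$ be the index witnessing that $s_i$ is ineffective, so $\floor{f(\{s_i\}\mid S_{i-1})}_I\ge\floor{f(\{s_j\}\mid S_{j-1})}_I$. The feasibility part is immediate: $S_{j-1}\cup\{e\}\subseteq S_{i-1}\cup\{e\}\in\M$, so $S_{j-1}\cup\{e\}\in\M$ by downward closure. For the value part, submodularity (Definition~\ref{def:submodular}) applied to $S_{j-1}\subseteq S_{i-1}$ gives $f(\{e\}\mid S_{j-1})\ge f(\{e\}\mid S_{i-1})$, hence $\floor{f(\{e\}\mid S_{j-1})}_I\ge\floor{f(\{e\}\mid S_{i-1})}_I$ by monotonicity of the rounding operator. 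Chaining this with the selection condition of $s_i$ and then with the ineffectiveness hypothesis yields
\[
\floor{f(\{e\}\mid S_{j-1})}_I\ \ge\ \floor{f(\{e\}\mid S_{i-1})}_I\ >\ \floor{f(\{s_i\}\mid S_{i-1})}_I\ \ge\ \floor{f(\{s_j\}\mid S_{j-1})}_I,
\]
so $e$ satisfies the selection condition of $s_j$ as well.

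Finally, to deduce $H_i=H_{i-1}$, note $H_i=H_{i-1}\cup G_i$ by definition~\eqref{eq:H_i}, so it suffices to show $G_i\subseteq H_{i-1}$. Take $e\in G_i$; by \eqref{eq:G_i} it satisfies the selection condition of $s_i$ and lies outside $\bigcup_{k=0}^{i}V_k\supseteq\bigcup_{k=0}^{j}V_k$. By the previous paragraph it also satisfies the selection condition of $s_j$, hence $e\in G_j$, and since $j\le i-1$ we get $e\in H_{i-1}$. This gives $G_i\subseteq H_{i-1}$ and the claim follows. I do not foresee a genuine obstacle here — the only point requiring care is keeping the inequality chain pointed the right way and invoking monotonicity of $\floor{\cdot}_I$ rather than taking it for granted; all other steps are direct applications of results already stated.
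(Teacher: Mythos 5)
Your proposal is correct and follows essentially the same route as the paper's proof: downward closure of $\M$ for feasibility, submodularity plus monotonicity of $\floor{\cdot}_I$ for the inequality chain, and $G_i\subseteq G_j\subseteq H_{i-1}$ (using $j<i$ for both the window condition and the selection condition) to conclude $H_i=H_{i-1}$. No gaps.
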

\begin{proof}
We establish the first part of the claim by showing that $S_{i-1}\cup\{e\}\in\M$ implies $S_{j-1}\cup\{e\}\in\M$, and $\floor{f(\{e\}|S_{i-1})}_I>\floor{f(\{s_i\}|S_{i-1})}_I$ implies $\floor{f(\{e\}|S_{j-1})}_I>\floor{f(\{s_j\}|S_{j-1})}_I$.

First, since $\M$ is a matroid and $S_{j-1}\cup\{e\}\subseteq S_{i-1}\cup\{e\}$, the condition that $S_{i-1}\cup\{e\}\in\M$ implies the condition that $S_{j-1}\cup\{e\}\in\M$.

Moreover, because $f$ is a submodular function and $S_{j-1}\subseteq S_{i-1}$, it holds that $f(\{e\}|S_{j-1})\ge f(\{e\}|S_{i-1})$, which implies that $\floor{f(\{e\}|S_{j-1})}_I\ge\floor{f(\{e\}|S_{i-1})}_I$, by monotonicity of the rounding operator $\floor{\cdot}_I$ (Definition~\ref{def:rounding_op}). Hence, the condition that $\floor{f(\{e\}|S_{i-1})}_I>\floor{f(\{s_i\}|S_{i-1})}_I$ implies that $\floor{f(\{e\}|S_{j-1})}_I>\floor{f(\{s_i\}|S_{i-1})}_I$. Since we assume that $\floor{f(\{s_i\}|S_{i-1})}_I\ge\floor{f(\{s_j\}|S_{j-1})}_I$, it follows that $\floor{f(\{e\}|S_{j-1})}_I>\floor{f(\{s_j\}|S_{j-1})}_I$.

Next, we prove the second part of the claim: $H_i=H_{i-1}$. Recall that every element $e\in G_i$ appears after the $i$-th window and satisfies the selection condition of $s_i$, which implies that element $e$ also appears after the $j$-th window (because $j<i$) and satisfies the selection condition of $s_j$ (because of the first part of the claim). It follows that every element $e\in G_i$ belongs to $G_j$. Hence, we have that $G_i\subseteq G_j\subseteq H_{i-1}$, since $H_{i-1}=\bigcup_{\ell=1}^{i-1} G_{\ell}$ and $j\le i-1$. Finally, $H_i=H_{i-1}$ follows from $H_i=G_i\cup H_{i-1}$ and $G_i\subseteq H_{i-1}$.
\end{proof}

Now observe that the set $H$ in Algorithm~\ref{alg:greedy_filtering} is a subset of $H_r$. Hence, to upper bound $|H|$, it suffices to upper bound $|H_r|$. If we could prove that for each $i\in[r]$, $|H_i\setminus H_{i-1}|=\cO\left(\frac{r\ln(r/\eps)}{\eps}\right)$ holds w.h.p., conditioned on $s_i$ being an effective selector, then Lemma~\ref{lem:streaming_unlikely_case} would follow immediately. Indeed, Claim~\ref{claim:ineffective_selector} states that $H_i\setminus H_{i-1}=\emptyset$ for every ineffective selector $s_i$, and thus, we have that $H_r=\bigcup_{i\in[r] \textrm{ s.t.}\,s_i\textrm{ is an effective selector}} (H_i\setminus H_{i-1})$. Notice that there can be at most $|I|$ effective selectors because of the rounding operator $\floor{\cdot}_I$ (specifically, any two distinct effective selectors $s_i$ and $s_j$ must satisfy that $\floor{f(\{s_i\}|S_{i-1})}_I\neq\floor{f(\{s_j\}|S_{j-1})}_I$, and there are only $|I|$ distinct values in the range of $\floor{\cdot}_I$). Therefore, if $|H_i\setminus H_{i-1}|=\cO\left(\frac{r\ln(r/\eps)}{\eps}\right)$ holds for every effective selector $s_i$, then it would follow that $|H_r|=\cO\left(\frac{r\ln(r/\eps)\cdot|I|}{\eps}\right)$.

However, there is a caveat to the above argument: $|H_i\setminus H_{i-1}|=\cO\left(\frac{r\ln(r/\eps)}{\eps}\right)$ does not always hold w.h.p., conditioned on $s_i$ being an effective selector. It is possible to construct instances where $|H_i\setminus H_{i-1}|=\mathcal{\omega}\left(\frac{r\ln(r/\eps)}{\eps}\right)$ always holds, conditioned on $s_i$ being an effective selector. Fortunately, if this is the case, we can show that the probability that $s_i$ is an effective selector is negligible. We formalize this intuition in Lemma~\ref{lem:H_i_setminus_H_i-1}. The technical proof of Lemma~\ref{lem:H_i_setminus_H_i-1} is provided in Section~\ref{sec:proof_of_lem_H_i_setminus_H_i-1} of the appendix.

\begin{lemma}\label{lem:H_i_setminus_H_i-1}
For each $i\in[r]$, let $k_i$ denote the number of effective selectors among the first $i$ selectors $s_1,\dots,s_i$, and let $k_0=0$. Then, for all $i\in[r]$, we have that
\begin{equation}\label{eq:H_i_setminus_H_i-1}
\Pr\left[|H_i| > k_i\cdot \frac{r\ln(r/\eps)}{\eps} \,\middle\vert\, \forall\,j\in\{0,\dots,i-1\},\,|H_j| \le k_j\cdot \frac{r\ln(r/\eps)}{\eps}\right]\le\frac{\eps}{r}.
\end{equation}
\end{lemma}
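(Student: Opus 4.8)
The plan is to reveal the stream one window at a time. Let $\mathcal{F}_{i-1}$ be the $\sigma$-algebra generated by the ordered list of elements in positions $1,\dots,\ceil{\eps n}+(i-1)\ceil{\eps n/r}$ of $\pi$ --- that is, by $V_0\cupdot V_1\cupdot\cdots\cupdot V_{i-1}$ together with their internal order. Conditioned on $\mathcal{F}_{i-1}$, the sets $S_0,\dots,S_{i-1}$, the selectors $s_1,\dots,s_{i-1}$, the value $w$, the set $I$, the remaining set $R:=[n]\setminus\bigcup_{j=0}^{i-1}V_j$, the count $k_{i-1}$, and the threshold $\tau:=\min_{j<i}\floor{f(\{s_j\}|S_{j-1})}_I$ (with $\tau:=+\infty$ when $i=1$) are all determined, whereas $V_i$ is a uniformly random $\ceil{\eps n/r}$-subset of $R$. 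Write $L:=\frac{r\ln(r/\eps)}{\eps}$ and let $E$ denote the conditioning event of the lemma; note $E\in\mathcal{F}_{i-1}$, and recall $s_i$ is effective iff $\floor{f(\{s_i\}|S_{i-1})}_I<\tau$. The first step is a purely deterministic reduction: on $E$ we have $|H_{i-1}|\le k_{i-1}L$, so by Claim~\ref{claim:ineffective_selector} an ineffective $s_i$ gives $H_i=H_{i-1}$ and $k_i=k_{i-1}$, hence $|H_i|\le k_iL$; and an effective $s_i$ gives $k_i=k_{i-1}+1$ and $|H_i|\le|H_{i-1}|+|G_i|\le k_{i-1}L+|G_i|$. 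Therefore $\{|H_i|>k_iL\}\cap E\subseteq\{s_i\text{ effective}\}\cap\{|G_i|>L\}$, so it suffices to show $\Pr[\,s_i\text{ effective and }|G_i|>L\mid\mathcal{F}_{i-1}\,]\le\eps/r$ for every realization of $\mathcal{F}_{i-1}$, and then average this over $\mathcal{F}_{i-1}\in E$.

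To get the conditional bound I would first isolate the structural fact that the greedy choice forces high-marginal elements out of $V_i$. Abbreviate $v(e):=\floor{f(\{e\}|S_{i-1})}_I$ and $R'':=\{e\in R:S_{i-1}\cup\{e\}\in\M\}$. If $e\in R''$ and $v(e)>v(s_i)$, then by monotonicity of $\floor{\cdot}_I$ we have $f(\{e\}|S_{i-1})>f(\{s_i\}|S_{i-1})$; since $s_i$ is chosen to maximize $f(\{\cdot\}|S_{i-1})$ over the eligible elements of $V_i$ (together with the dummy $s_1$), such an $e$ must lie outside $V_i$. In particular $G_i=\{e\in R'':v(e)>v(s_i)\}$ and $G_i\cap V_i=\emptyset$. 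Now define the $\mathcal{F}_{i-1}$-measurable function $g(\theta):=|\{e\in R'':v(e)>\theta\}|$ for $\theta\in I$, set $\theta^\ast:=\min\{\theta\in I:g(\theta)\le L\}$ (well-defined since $g(\max I)=0$), and let $\theta_0:=\min(\theta^\ast,\tau)\in I$. On the event $\{s_i\text{ effective},\,|G_i|>L\}$ we have $v(s_i)<\tau$ (effectiveness) and $g(v(s_i))=|G_i|>L$, and since $g$ is non-increasing the latter forces $v(s_i)<\theta^\ast$; hence $v(s_i)<\theta_0$ on this event. As $v(s_i)\ge\min I$ always, the event is empty when $\theta_0=\min I$; otherwise let $\theta_0^-\in I$ be the predecessor of $\theta_0$ and put $Z:=\{e\in R'':v(e)\ge\theta_0\}$, an $\mathcal{F}_{i-1}$-measurable set. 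Since $\theta_0^-<\theta_0\le\theta^\ast$ and $\theta^\ast$ is the least element of $I$ with $g\le L$, we get $|Z|=g(\theta_0^-)>L$; and since every $e\in Z$ has $v(e)\ge\theta_0>v(s_i)$, the structural fact gives $Z\cap V_i=\emptyset$ on this event. Thus $\{s_i\text{ effective},\,|G_i|>L\}\subseteq\{Z\cap V_i=\emptyset\}$ with $Z$ a fixed set of size $>L$ once $\mathcal{F}_{i-1}$ is fixed.

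The last step is a standard sampling-without-replacement estimate. Conditioned on $\mathcal{F}_{i-1}$, $V_i$ is a uniform $\ceil{\eps n/r}$-subset of $R$, and using $|V_i|/|R|\ge(\eps n/r)/n=\eps/r$ together with $|Z|\ge L$,
\[
\Pr[\,Z\cap V_i=\emptyset\mid\mathcal{F}_{i-1}\,]\ \le\ \Bigl(1-\tfrac{|V_i|}{|R|}\Bigr)^{|Z|}\ \le\ \Bigl(1-\tfrac{\eps}{r}\Bigr)^{L}\ \le\ e^{-L\eps/r}\ =\ e^{-\ln(r/\eps)}\ =\ \tfrac{\eps}{r}.
\]
Combining with the reduction, $\Pr[|H_i|>k_iL\mid\mathcal{F}_{i-1}]\le\eps/r$ whenever $\mathcal{F}_{i-1}\in E$, and averaging over $\mathcal{F}_{i-1}\in E$ yields Ineq.~\eqref{eq:H_i_setminus_H_i-1}.

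I expect the second paragraph to be the main obstacle: the ``bad'' event is phrased through the random quantity $v(s_i)$, which itself depends on $V_i$, so the task is to exhibit, \emph{before} $V_i$ is revealed, an $\mathcal{F}_{i-1}$-measurable set $Z$ of at least $L$ ``dangerous'' elements whose complete absence from $V_i$ is implied by the bad event. Choosing the threshold $\theta_0$ so that $Z$ is simultaneously large --- this is where the definition of $\theta^\ast$ and the conditioning $|H_{i-1}|\le k_{i-1}L$ are used --- and provably disjoint from $V_i$ on the bad event --- this is where effectiveness ($v(s_i)<\tau$) and the greedy-maximality fact are used --- is the delicate point, together with the minor edge cases around $\min I$ and the behavior of the rounding operator $\floor{\cdot}_I$ on non-positive marginals.
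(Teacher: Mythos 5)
Your proof is correct, and while the overall strategy mirrors the paper's — condition on the first $i-1$ windows, show the bad event forces a large $\mathcal{F}_{i-1}$-measurable set to be entirely missed by $V_i$, and close with a negative-correlation bound — the way you construct that set is genuinely different and arguably cleaner. The paper builds its ``dangerous'' set as $T_i^{(1)}$ (feasible elements with rounded marginal $\ge\tau$) augmented by the top $\ceil{L}-|T_i^{(1)}|$ elements of $T_i^{(2)}$ ranked by \emph{actual} marginal value, derives the evasion from the two sub-events $A_i^{(1)}$ and $A_i^{(2)}$, handles the degenerate case $\ceil{L}>|T_i^{(1)}|+|T_i^{(2)}|$ separately, and then multiplies $\Pr[A_i^{(1)}]\cdot\Pr[A_i^{(2)}\mid A_i^{(1)}]$. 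You instead define a single rounded-marginal cutoff $\theta_0=\min(\theta^\ast,\tau)$, where $\theta^\ast$ is the least $\theta\in I$ with at most $L$ feasible elements strictly above it, and take $Z=\{e\in R'':v(e)\ge\theta_0\}$. This buys you a few things: you never mix rounded and unrounded marginals (the paper's ``top elements of $T_i^{(2)}$'' are chosen by unrounded value), the degenerate case is absorbed into the clean check $\theta_0=\min I$, and the evasion probability is a one-shot estimate rather than a two-step chain-rule computation. Both arguments rest on the same ingredients — greedy maximality of $s_i$ over eligible elements of $V_i$, monotonicity of $\floor{\cdot}_I$, Claim~\ref{claim:ineffective_selector} (for the reduction to effective $s_i$), and the fact that a uniform $\ceil{\eps n/r}$-subset of $R$ avoids any fixed set of $m$ elements with probability at most $(1-\eps/r)^m$ — and your self-diagnosis of the delicate point, namely exhibiting a set that is measurable \emph{before} $V_i$ is revealed even though the natural bound $v(s_i)$ is not, is exactly right.
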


We are ready to complete the proof of Lemma~\ref{lem:streaming_unlikely_case}.
\begin{proof}[Proof of Lemma~\ref{lem:streaming_unlikely_case}]
First, following the notations in Lemma~\ref{lem:H_i_setminus_H_i-1}, we upper bound the probability $\Pr\left[|H_r|\le k_r\cdot\frac{r\ln(r/\eps)}{\eps}\right]$ as follows,
\begin{align}\label{eq:Pr_H_r_le_k_r}
    &\Pr\left[|H_r|\le k_r\cdot\frac{r\ln(r/\eps)}{\eps}\right]\nonumber\\
    \ge& \Pr\left[\forall\,i\in[r],\,|H_i| \le k_i\cdot\frac{r\ln(r/\eps)}{\eps}\right]\nonumber\\
    =&\prod_{i\in[r]}\Pr\left[|H_i| \le k_i\cdot \frac{r\ln(r/\eps)}{\eps} \,\middle\vert\, \forall\,j\in\{0,\dots,i-1\},\,|H_j| \le k_j\cdot \frac{r\ln(r/\eps)}{\eps}\right]\nonumber\\
    \ge&\left(1-\frac{\eps}{r}\right)^r\qquad\qquad\qquad\qquad\qquad\qquad\qquad\qquad\qquad\qquad\qquad\qquad\qquad\qquad\text{(By Lemma~\ref{lem:H_i_setminus_H_i-1})}\nonumber\\
    \ge&\,1-\eps.
\end{align}
Moreover, we notice that the total number of effective selectors $k_r$ is at most $|I|$, because any two distinct effective selectors $s_i$ and $s_j$ must satisfy that $\floor{f(\{s_i\}|S_{i-1})}_I\neq\floor{f(\{s_j\}|S_{j-1})}_I$ (and there are only $|I|$ distinct values in the range of $\floor{\cdot}_I$). Thus, Ineq.~\eqref{eq:Pr_H_r_le_k_r} implies that $\Pr\left[|H_r|\le \frac{r\ln(r/\eps)\cdot|I|}{\eps}\right]\ge1-\eps$. The proof finishes by noticing that the set $H$ in Algorithm~\ref{alg:greedy_filtering} is a subset of $H_r$.
\end{proof}

\subsection{The regular case: \texorpdfstring{$\max_{e\in [n]}f(\{e\})< \frac{r}{\eps}\cdot\min_{e\in T}f(\{e\})$ and $|H|\le\frac{r\ln(r/\eps)\cdot|I|}{\eps}$}{when neither of the above two cases occurs}}
Now we consider the regular case where $\max_{e\in [n]}f(\{e\})< \frac{r}{\eps}\cdot\min_{e\in T}f(\{e\})$ and $|H|\le\frac{r\ln(r/\eps)\cdot|I|}{\eps}$. To analyze this case, we first establish two simple claims. Claim~\ref{claim:eps_fraction_does_not_hurt} shows that w.h.p., the value of the optimal solution $O$ does not decrease significantly, when combined with any subset of elements in a small fraction of the random stream $\pi$.
\begin{claim}\label{claim:eps_fraction_does_not_hurt}
For any $\eta,\delta\in[0,1]$, let $V$ be a random subset of $[n]$ such that every element of $[n]$ appears in $V$ with probability at most $\delta$ (not necessarily independently). Then, with probability at least $1-\frac{\delta}{\eta}$, it holds for all $X\subseteq V$ that $f(X|O)\ge-\eta\cdot f(O)$.
\end{claim}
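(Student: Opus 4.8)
The plan is to reduce the ``for all $X\subseteq V$'' statement to a single expectation bound and then apply Markov's inequality. We may assume $f(O)>0$ (otherwise $f\equiv 0$, since $f\ge 0$ and $O$ is optimal, and the claim is trivial) and $\eta>0$ (otherwise the claim is vacuous). Since $f(X\cup O)=f(O)+f(X|O)$, it suffices to control the random variable
\[
Z := \max_{X\subseteq V}\bigl(-f(X|O)\bigr) = f(O) - \min_{X\subseteq V} f(X\cup O),
\]
which is non-negative because taking $X=\emptyset$ gives $-f(\emptyset|O)=0$. The target event ``$f(X|O)\ge-\eta f(O)$ for all $X\subseteq V$'' is precisely $\{Z\le \eta f(O)\}$, so by Markov's inequality it suffices to prove $\E[Z]\le \delta f(O)$, i.e.\ $\E\bigl[\min_{X\subseteq V} f(X\cup O)\bigr]\ge (1-\delta)f(O)$.

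The key step is to recognize the minimizer over $X\subseteq V$ as a ``downsampled'' random set to which Lemma~\ref{lem:subsample_at_most} applies. Let $g:2^{[n]\setminus O}\to\R_{\ge0}$ be the shifted function $g(S):=f(S\cup O)$; it is non-negative and submodular, with $g(\emptyset)=f(O)$. For each realization of $V$, let $X^\star=X^\star(V)\subseteq V\setminus O$ be a minimizer of $g$ among subsets of $V\setminus O$ (fixing, say, the lexicographically smallest minimizer so that $X^\star$ is a well-defined function of $V$); since $f(X\cup O)$ depends only on $X\setminus O$, we have $g(X^\star)=\min_{X\subseteq V} f(X\cup O)$. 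Because $X^\star\subseteq V$, every element $e\in[n]\setminus O$ satisfies $\Pr[e\in X^\star]\le\Pr[e\in V]\le\delta$, so $X^\star$ is a random subset of $[n]\setminus O$ in which each element appears with probability at most $\delta$ (with no assumption on the joint distribution, which is all Lemma~\ref{lem:subsample_at_most} requires).

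Applying Lemma~\ref{lem:subsample_at_most} to $g$ with $R=X^\star$ and $p=\delta$ yields $\E[g(X^\star)]\ge(1-\delta)\,g(\emptyset)=(1-\delta)f(O)$, which is exactly $\E\bigl[\min_{X\subseteq V}f(X\cup O)\bigr]\ge(1-\delta)f(O)$; hence $\E[Z]\le\delta f(O)$ and $\Pr[Z>\eta f(O)]\le\delta/\eta$, completing the argument. The one point that requires care --- and the only real obstacle --- is handling the quantifier ``for all $X\subseteq V$'' simultaneously: a union bound over the exponentially many subsets of $V$ fails, since we only control the \emph{marginal} inclusion probabilities $\Pr[e\in V]\le\delta$ and not the joint distribution of $V$. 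The fix is to collapse the quantifier into the single random set $X^\star(V)$, using the fact that its element-wise inclusion probabilities are dominated by those of $V$; everything else is routine bookkeeping.
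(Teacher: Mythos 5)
Your proof is correct and follows essentially the same route as the paper's: collapse the universal quantifier into the single minimizer $X_{\min}=\argmin_{X\subseteq V}f(X|O)$, note its marginal inclusion probabilities are dominated by those of $V$, apply Lemma~\ref{lem:subsample_at_most} to $g(Y)=f(Y\cup O)$, and finish with Markov's inequality. (One microscopic aside: in your $f(O)=0$ edge case the claim is trivial not because $f\equiv0$ — optimality of $O$ is only over $\M$ — but simply because $f(X|O)=f(X\cup O)\ge0$.)
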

\begin{proof}
We define $g:2^{[n]}\to\R_{\ge0}$ such that $g(Y)=f(Y\cup O)$ for all $Y\subseteq[n]$. Because $f$ is a non-negative submodular function, it follows that $g$ is also non-negative and submodular. Moreover, we define $X_{\min}:=\argmin_{X\subseteq V} f(X|O)$. Since $X_{\min}$ is a subset of $V$ and each element of $[n]$ appears in $V$ with probability at most $\delta$, it follows that each element of $[n]$ appears in $X_{\min}$ with probability at most $\delta$. Hence, by Lemma~\ref{lem:subsample_at_most}, we have that $\E[g(X_{\min})]\ge(1-\delta)\cdot g(\emptyset)$, which is equivalent to $\E[f(X_{\min}|O)+f(O)]\ge(1-\delta)\cdot f(O)$. By rearranging, we obtain that $\E[-f(X_{\min}|O)]\le\delta\cdot f(O)$. Then, by applying Markov's inequality to the random variable $-f(X_{\min}|O)$ (note that this random variable is non-negative because $f(X_{\min}|O)\le f(\emptyset|O)=0$), we have that
\[
    \Pr[-f(X_{\min}|O)\ge\eta\cdot f(O)]\le\frac{\E[-f(X_{\min}|O)]}{\eta\cdot f(O)}\le\frac{\delta\cdot f(O)}{\eta\cdot f(O)}=\frac{\delta}{\eta}.
\]
Thus, $f(X_{\min}|O)\ge-\eta\cdot f(O)$ holds with probability at least $1-\frac{\delta}{\eta}$, which implies the claim.
\end{proof}

Claim~\ref{claim:streaming_rounding_error} bounds the error incurred by the rounding operator $\floor{\cdot}_I$.
\begin{claim}\label{claim:streaming_rounding_error}
Given $I=\{(1+\eps)^i\cdot\gamma w\mid i\in\{0,...,\ceil{\log_{1+\eps}(1/\gamma)}\}\}$ with $w\ge0$ and $\eps,\gamma\in(0,1)$, for any two real numbers $a\ge0$ and $b\le w$, if $\floor{a}_I\ge\floor{b}_I$, then we have that $a\ge(1-\eps)b-\gamma w$.
\end{claim}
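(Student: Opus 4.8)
The plan is to exploit two structural facts about the grid $I$. Its smallest element is exactly $\min I=\gamma w$, and its largest element satisfies $\max I\ge w$: indeed the top exponent is $\ceil{\log_{1+\eps}(1/\gamma)}\ge\log_{1+\eps}(1/\gamma)$, so $\max I=(1+\eps)^{\ceil{\log_{1+\eps}(1/\gamma)}}\cdot\gamma w\ge(1+\eps)^{\log_{1+\eps}(1/\gamma)}\cdot\gamma w=w$. Also, consecutive grid points differ by a factor of exactly $1+\eps$.

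First I would lower-bound $a$ in terms of $\floor{a}_I$. If $a\ge\min I=\gamma w$, then by definition $\floor{a}_I\le a$, hence $a\ge\floor{a}_I\ge\floor{a}_I-\gamma w$. If instead $a<\gamma w$, then $\floor{a}_I=\min I=\gamma w$, and since $a\ge0$ we still get $a\ge0=\floor{a}_I-\gamma w$. In both cases $a\ge\floor{a}_I-\gamma w$, and combining with the hypothesis $\floor{a}_I\ge\floor{b}_I$ gives $a\ge\floor{b}_I-\gamma w$.

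Next I would show $\floor{b}_I\ge(1-\eps)\,b$, which together with the previous inequality yields $a\ge(1-\eps)b-\gamma w$. If $b\le\gamma w$ (in particular if $b<0$), this is immediate: $\floor{b}_I\ge\min I=\gamma w\ge(1-\eps)b$. If $\gamma w<b\le w$, write $\floor{b}_I=(1+\eps)^i\gamma w$ for the appropriate index $i$; such a $\floor{b}_I$ is a genuine grid point since $\gamma w<b\le w\le\max I$. If $i$ is the top index, then $\floor{b}_I=\max I\ge w\ge b\ge(1-\eps)b$ (here $b>0$). Otherwise $(1+\eps)^{i+1}\gamma w\in I$ and, since $\floor{b}_I$ is the greatest element of $I$ not exceeding $b$, we have $(1+\eps)^{i+1}\gamma w>b$, hence $\floor{b}_I=(1+\eps)^{i+1}\gamma w/(1+\eps)>b/(1+\eps)\ge(1-\eps)b$, using $1/(1+\eps)\ge1-\eps$.

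The only delicate points are the boundary cases of the rounding operator — when $a$ or $b$ falls below $\min I$, and when $\floor{b}_I$ is the top grid point — where the plain ``within a factor $1+\eps$'' reasoning does not directly apply; the additive slack $\gamma w$ and the bound $\max I\ge w$ are exactly what absorb these cases, and the hypotheses $a\ge0$ and $b\le w$ are used precisely there. Everything else is a one-line computation.
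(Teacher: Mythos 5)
Your proof is correct, and it organizes the argument somewhat differently from the paper's. The paper's proof is a direct case analysis keyed on $a$: when $a\ge\gamma w$ it proves the sharper bound $a\ge(1-\eps)b$ outright (splitting further on whether $b\ge\gamma w$, using $\floor{b}_I\le b\le(1+\eps)\cdot\floor{b}_I$ in the nontrivial sub-case), and when $a<\gamma w$ it notes $\floor{a}_I=\floor{b}_I=\min I$ forces $b<(1+\eps)\gamma w$, whence $(1-\eps)b-\gamma w<0\le a$. You instead factor the claim into two orthogonal one-sided lemmas — one purely about $a$ (namely $a\ge\floor{a}_I-\gamma w$, with the $-\gamma w$ slack absorbing the sub-threshold regime $a<\min I$), and one purely about $b$ (namely $\floor{b}_I\ge(1-\eps)b$, where the facts $\max I\ge w$ and $b\le w$ ensure $\floor{b}_I$ is a genuine grid point within a $(1+\eps)$ factor of $b$) — and chain them through the hypothesis $\floor{a}_I\ge\floor{b}_I$. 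Both proofs rest on the same three structural facts about $I$ ($\min I=\gamma w$, $\max I\ge w$, consecutive ratio $1+\eps$) and both are complete; yours is more modular and the two sub-lemmas would be independently reusable, while the paper's nested cases happen to yield the marginally sharper conclusion $a\ge(1-\eps)b$ (no slack) whenever $a\ge\gamma w$.
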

\begin{proof}
We consider two cases ($a\ge\gamma w$ and $a<\gamma w$) and prove the statement by case analysis.

\paragraph{Case 1: $a\ge\gamma w$.} In this case, if $b<\gamma w$, then $b<a$, which implies that $a\ge(1-\eps)b$. If $b\ge\gamma w$, then we have that $\min I\le b\le w\le\max I$. Since the rounding operator $\floor{\cdot}_I$ maps $b$ to its greatest lower bound in $I$, it follows that $\floor{b}_I\le b\le(1+\eps)\cdot\floor{b}_I$. By our assumption that $\floor{a}_I\ge\floor{b}_I$, it follows that $b\le(1+\eps)\cdot\floor{a}_I\le(1+\eps)\cdot a$, which implies that $a\ge(1-\eps)b$.

\paragraph{Case 2: $a<\gamma w$.} In this case, by definition of $I$ and Definition~\ref{def:rounding_op}, we have that $\floor{a}_I=\min I$. Since we assume that $\floor{a}_I\ge\floor{b}_I$, it follows that $\floor{b}_I=\min I$, which implies that $b<(1+\eps)\gamma w$. Hence, it follows that $(1-\eps)b-\gamma w<0$, which implies the statement since $a\ge0$.
\end{proof}

Next, we denote by $V_{\final}$ the set of elements that appear in the third phase of Algorithm~\ref{alg:greedy_filtering}, i.e., $V_{\final}:=\{\pi(i) \mid i=\ceil{\eps n}+r\cdot\ceil{\frac{\eps n}{r}}+1,\dots,n\}$. Recall that $X_{\alg}$ denotes the solution set returned by Algorithm~\ref{alg:greedy_filtering}, and $O$ denotes the optimal solution. In Lemma~\ref{lem:streaming_regular_case}, we show that $X_{\alg}$ is almost a $\nicefrac{1}{2}$ approximation of $O\cap V_{\final}$, conditioned on three high-probability events in the regular case.
\begin{lemma}\label{lem:streaming_regular_case}
Let $E_1$ and $E_2$ be the events defined in Lemma~\ref{lem:streaming_w_upper_bound_max_singleton_value} and Lemma~\ref{lem:streaming_unlikely_case} respectively. Moreover, let $E_3$ denote the event that $f(X|O)\ge-\eta\cdot f(O)$ for all $X\subseteq\bigcup_{i=1}^r V_i$, where $\eta:=\sqrt{\frac{2\eps n+2r}{n}}$. Suppose that the input parameter $\eps$ for Algorithm~\ref{alg:greedy_filtering} is in the range $\left(0,\frac{1}{2}\right)$, and that $\max_{e\in [n]}f(\{e\})< \frac{r}{\eps}\cdot\min_{e\in T}f(\{e\})$. Then, conditioned on events $E_1,E_2,E_3$, Algorithm~\ref{alg:greedy_filtering} guarantees that
$\textstyle f(X_{\alg})\ge\left(\frac{1}{2}-\eps\right)\cdot (f(O\cap V_{\final})-5\eta\cdot f(O))$.
\end{lemma}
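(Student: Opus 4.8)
The plan is to show that $X_{\alg}$ competes against $O \cap V_{\final}$ by exhibiting a feasible set $Y \subseteq S_r \cup H$ whose value is at least $\left(\frac{1}{2} - \eps\right)(f(O \cap V_{\final}) - 5\eta f(O))$; since the final exhaustive search in Algorithm~\ref{alg:greedy_filtering} considers every feasible subset of $T \cup S_r \cup H$, this immediately gives the claimed bound. The key observation is a dichotomy for each element $e \in O \cap V_{\final}$: since $e$ arrives in the third phase, either $e \in H$ (it passed the filter), or $e$ failed the filter, meaning that for \emph{every} $i \in [r]$ with $S_{i-1} \cup \{e\} \in \M$, we have $\floor{f(\{e\} \mid S_{i-1})}_I \le \floor{f(\{s_i\} \mid S_{i-1})}_I$. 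The elements of the first type are available in $H$; the elements of the second type are ``dominated'' by the greedy selectors $s_i$ in a rounded sense, so their marginal contribution is absorbed by $f(S_r)$.

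First I would set up the comparison between $O \cap V_{\final}$ and $S_r$. Let $O_{\final} := O \cap V_{\final}$ and write $O_{\final} = O_L \cupdot O_H$ where $O_H := O_{\final} \cap H$ and $O_L := O_{\final} \setminus H$. Using base exchange (Lemma~\ref{lem:base_exchange} or Lemma~\ref{lem:base_matching}), I would match the elements of $O_L$ (after extending to a base) injectively to the selection steps $i \in [r]$ so that each $e$ matched to step $i$ satisfies $S_{i-1} \cup \{e\} \in \M$; the greedy choice of $s_i$ as the \emph{best} feasible element of $V_i \cup \{s_1\}$ plays the role here, but crucially the filter-failure of $e$ tells us $\floor{f(\{e\}\mid S_{i-1})}_I \le \floor{f(\{s_i\}\mid S_{i-1})}_I$ even though $e \notin V_i$. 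Applying Claim~\ref{claim:streaming_rounding_error} with $\gamma w = \eps^2 w / r^2$ (and noting $f(\{e\}\mid S_{i-1}) \le f(\{e\}) \le \max_{e'} f(\{e'\}) < w$ by the regular-case assumption and event $E_1$, using Lemma~\ref{lem:streaming_w_upper_bound_max_singleton_value}), I get $f(\{s_i\}\mid S_{i-1}) \ge (1-\eps) f(\{e\}\mid S_{i-1}) - \eps^2 w / r^2$ for the matched $e$. Summing the telescoping marginals $\sum_i f(\{s_i\}\mid S_{i-1}) = f(S_r)$, and summing the $r$ error terms to $\eps^2 w / r \le \eps \cdot f(O)$ (since $w = \frac{r}{\eps} \max_{V_0} f(\{e\}) \le \frac{r}{\eps} f(O)$), I obtain $f(S_r) \ge (1-\eps)\sum_{e \in O_L} f(\{e\} \mid S_{i(e)-1}) - \eps f(O) \ge (1-\eps)\left(f(O_L \cup S_r) - f(S_r)\right) - \eps f(O)$ by submodularity of the marginals, hence $f(S_r) \ge \left(\frac{1}{2} - O(\eps)\right) f(O_L \cup S_r) - O(\eps) f(O)$.

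Then I would handle $O_H \subseteq H$: the set $O_H \cup (\text{a completion within } S_r)$ is feasible by base exchange, so the exhaustive search sees a set of value at least $f(O_H)$, and more usefully $f(O_H \cup S_r)$-type quantities. The two halves are combined by noting $f(O_L \cup S_r) + f(O_H \cup S_r) \ge f(O_{\final} \cup S_r) + f(S_r) \ge f(O_{\final} \cup S_r)$ (submodularity), so $\max$ of the two feasible candidate values is at least $\frac{1}{2} f(O_{\final} \cup S_r) - O(\eps) f(O) \ge \frac{1}{2} f(O_{\final}) - O(\eps) f(O)$. The role of events $E_2$ and $E_3$: $E_2$ ensures the break at Line~\ref{algline:streaming_break_condition} is never triggered, so $H$ really contains \emph{all} filter-passing elements of $V_{\final}$ — in particular all of $O_H$ — which is what licenses the dichotomy above; $E_3$ ensures that absorbing the $O(\eps)$-mass of the stream $\bigcup_i V_i$ and $V_0$ into $O$ costs only $O(\eta) f(O)$, converting statements about $f(\cdot \cup S_r)$ into statements about $f(\cdot)$ — this is where the $5\eta f(O)$ slack comes from, and where the dummy-element convention ($s_i = s_1$ with zero marginal) must be checked not to break the telescoping.

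The main obstacle I anticipate is the matching/base-exchange bookkeeping in the regular-case core: the greedy solution $S_r$ is built window-by-window from $V_1, \dots, V_r$, but the elements of $O_L$ live in $V_{\final}$, so one cannot directly say $s_i$ beats $e$ because $e$ was available in $V_i$. The filter condition is precisely the bridge — it certifies $\floor{f(\{e\}\mid S_{i-1})}_I \le \floor{f(\{s_i\}\mid S_{i-1})}_I$ for the \emph{particular} index $i$ to which $e$ is matched — but one must choose the matching $O_L \to [r]$ carefully (via Lemma~\ref{lem:base_matching} applied to $S_r$ and a base extending $O_L$) so that feasibility $S_{i-1} \cup \{e\} \in \M$ holds at the matched step, and simultaneously ensure that the filter was indeed evaluated at that step for $e$ (which it was, since the filter loops over all $i \in [r]$). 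Getting the constants to land at exactly $\eps' = 8\sqrt{2\eps + 2r/n}$ is routine once the structural argument is in place, so I would not grind through it here.
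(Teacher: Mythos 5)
Your structural setup is on the right track — you correctly split $O_{\final} = O_L \cupdot O_H$, invoke base exchange/matching to pair $O_L$ with selection steps, and use the filter failure plus Claim~\ref{claim:streaming_rounding_error} to control the rounding error. But there is a genuine gap in how the two-candidate combination is supposed to close.

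Your central inequality is $f(S_r) \ge (1-\eps)\, f(O_L\mid S_r) - O(\eps)\, f(O)$, obtained by telescoping the \emph{full} sum $\sum_{i\in[r]} f(\{s_i\}\mid S_{i-1}) = f(S_r)$ and discarding the unmatched (nonnegative) terms. The paper instead proves the sharper $f(S_L)\ge (1-\eps)\, f(O_L'\mid S_r) - O(\eps)\, f(O)$ for a \emph{specific} subset $S_L\subseteq S_r$ obtained from Lemma~\ref{lem:base_exchange}, via the sub-telescoping $f(S_L)=\sum_j f(\{s_{i_j}\}\mid S_{L,j-1}) \ge \sum_j f(\{s_{i_j}\}\mid S_{i_j-1})$. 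This distinction is not cosmetic: the second feasible candidate needed is precisely $S_L\cup O_H$ (feasible because $O_S\cupdot S_L\cupdot O_H'\cupdot S_O$ is a base), and the chain that delivers $\nicefrac{1}{2}$ is
\[
f(S_r) + f(S_L\cup O_H) = f(S_r) + f(S_L) + f(O_H\mid S_L) \ge f(S_r) + f(O_L\mid S_r) + f(O_H\mid O_L\cup S_r) = f(O_{\final}\cup S_r),
\]
up to $O(\eps)$-error. The first step needs $f(S_L)\ge f(O_L\mid S_r)$, not $f(S_r)\ge f(O_L\mid S_r)$: $S_L$ must sit as the \emph{conditioning set} for the subsequent $f(O_H\mid S_L)$ term so that submodularity can upgrade it to $f(O_H\mid O_L\cup S_r)$, and this forces the comparison to be against $f(S_L)$, not $f(S_r)$.

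Your proposed alternative — invoking $f(O_L\cup S_r)+f(O_H\cup S_r)\ge f(O_{\final}\cup S_r)+f(S_r)$ and then taking a max over ``two feasible candidates'' — does not link these quantities to values the exhaustive search can actually achieve. From $f(S_r)\gtrsim \frac{1-\eps}{2-\eps}f(O_L\cup S_r)$, a natural second candidate like $O_H$ (with $f(O_H)\ge f(O_H\mid O_L\cup S_r)$) only yields $\max\ge \frac{1}{4}f(O_{\final}\cup S_r) - O(\eps)\,f(O)$, a factor two off. Neither is $O_H\cup S_r$ feasible, nor does ``$O_H$ plus a completion'' have value $f(O_H\cup S_r)$; pinning down $S_L\cup O_H$ as the candidate is the missing idea. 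There is also a smaller omitted piece: the matching covers only $O_L'=O_L\setminus O_S$, and the leftover $O_S\cap O_L$ requires a separate argument (the paper's Step 1, which uses that $O_S\neq\emptyset$ forces $S_r$ to be non-maximal, so some selector $s_i=s_1$ has zero marginal and the filter condition at that index bounds the marginals of $O_S\cap O_L$).
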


We postpone the proofs of Lemma~\ref{lem:streaming_regular_case} and Theorem~\ref{thm:streaming} to Section~\ref{sec:proof_of_lem_streaming_regular_case} and Section~\ref{sec:proof_of_thm_streaming} of the appendix respectively. Here, we provide a simplified proof sketch to illustrate the main idea.

\begin{proof}[Proof sketch]\renewcommand{\qedsymbol}{}
To highlight the main idea, we make the simplifying assumption that both the set $S_r$ constructed in the second phase of Algorithm~\ref{alg:greedy_filtering} and the optimal solution $O$ are bases of matroid $\M$. We further assume that all elements in the optimal solution $O$ arrive in the third phase of Algorithm~\ref{alg:greedy_filtering} (in reality, we expect a random $1-2\eps$ fraction of them to appear in the third phase, which, by Lemma~\ref{lem:subsample_exactly}, accounts for $1-2\eps$ fraction of the optimal value in expectation).

First, we partition the optimal solution into two sets: $O=O_L\cupdot O_H$, where $O_L:=O\setminus H$ and $O_H:=O\cap H$. That is, $O_H$ contains the optimal elements that are included in $H$ by Algorithm~\ref{alg:greedy_filtering}, and $O_L$ consists of those that are not included in $H$ (note that conditioned on the high-probability event $E_2$ defined in Lemma~\ref{lem:streaming_unlikely_case}, these elements were not added to $H$ because they did not satisfy the filter condition at Line~\ref{algline:streaming_filter_condition} of Algorithm~\ref{alg:greedy_filtering}, not because of the break condition at Line~\ref{algline:streaming_break_condition}). By Lemma~\ref{lem:base_exchange}, we can find a partition $S_r=S_L\cupdot S_H$ such that $S_L\cupdot O_H$ and $O_L\cupdot S_H$ are also bases. Moreover, because both $S_r=S_L\cupdot S_H$ and $O_L\cupdot S_H$ are bases, by Lemma~\ref{lem:base_matching}, there exists a bijection $h:S_L\to O_L$ such that for all $s_i\in S_L$, substituting element $s_i$ in $S_r$ with element $h(s_i)$ yields a new base. In particular, this implies that element $h(s_i)$ could be added to set $S_{i-1}$ without violating the matroid constraint. Despite this, element $h(s_i)\in O_L$ was not selected by the selector $s_i$ (since otherwise it would have been included in $H$). Thus, it must hold that $f(\{h(s_i)\}|S_{i-1})\le f(\{s_i\}|S_{i-1})$ (barring the rounding error incurred by the rounding operator $\floor{\cdot}_I$, which can be handled by Claim~\ref{claim:streaming_rounding_error}). Hence, we can derive that
\begin{align}\label{eq:S_L_vs_O_L_simplifed}
f(S_L)&=\sum_{i\in[r]\textrm{ s.t.}\,s_i\in S_L} f(\{s_i\}|S_{i-1}\cap S_L) &&\text{(By telescoping sum)}\nonumber\\
&\ge\sum_{i\in[r]\textrm{ s.t.}\,s_i\in S_L} f(\{s_i\}|S_{i-1}) &&\text{(By submodularity)}\nonumber\\
&\ge\sum_{i\in[r]\textrm{ s.t.}\,s_i\in S_L} f(\{h(s_i)\}|S_{i-1})\nonumber\\
&=\sum_{o\in O_L} f(\{o\}|S_{i-1}) &&\text{(Since $h:S_L\to O_L$ is a bijection)}\nonumber\\
&\ge f(O_L|S_{i-1})\ge f(O_L|S_r) &&\text{(By submodularity)}.
\end{align}

Then, we consider the following two solution sets: $S_r$ and $S_L\cupdot O_H$. Recall that both $S_r$ and $S_L\cupdot O_H$ are bases, and thus, they are feasible solutions. Moreover, both $S_r$ and $S_L\cupdot O_H$ are subsets of $S_r\cup H$, and hence, they are candidate solutions in the final exhaustive search of Algorithm~\ref{alg:greedy_filtering}. Now we lower bound the sum of the solution values of $S_r$ and $S_L\cupdot O_H$, as follows,
\begin{align*}
f(S_r)+f(S_L\cupdot O_H)&=f(S_r)+f(S_L)+f(O_H|S_L)\\
&\ge f(S_r)+f(O_L|S_r)+f(O_H|S_L) &&\text{(By Ineq.~\eqref{eq:S_L_vs_O_L_simplifed})}\\
&\ge f(S_r)+f(O_L|S_r)+f(O_H|O_L\cup S_r)&&\text{(By submodularity and $S_L\subseteq S_r$)}\\
&= f(O\cup S_r) &&\text{(Since $O_L\cupdot O_H=O$)}\\
&\ge (1-\eta)\cdot f(O),
\end{align*}
where the last inequality holds because of the event $E_3$ in the statement of Lemma~\ref{lem:streaming_regular_case} (which is a high-probability event by Claim~\ref{claim:eps_fraction_does_not_hurt}) and the fact that $S_r\subseteq\bigcup_{i=1}^r V_i$. It follows that one of $S_r$ and $S_L\cupdot O_H$ must be a nearly $\nicefrac{1}{2}$ approximation of the optimal solution $O$. This establishes Lemma~\ref{lem:streaming_regular_case}.

To complete the proof of Theorem~\ref{thm:streaming}, we can assume w.l.o.g.~that $\max_{e\in [n]}f(\{e\})< \frac{r}{\eps}\cdot\min_{e\in T}f(\{e\})$ (because otherwise Theorem~\ref{thm:streaming} follows immediately from Lemma~\ref{lem:streaming_easy_case}), and then combine Lemma~\ref{lem:streaming_regular_case} with the high-probability bounds we have established for events $E_1,E_2,E_3$.
\end{proof}

\section{Interlude: continuous greedy filtering}\label{section:interlude}
In this section, we present \textsc{Continuous-Greedy-Filtering} (Subroutine~\ref{sub:continuous_greedy_filtering}), a subroutine that will serve as a building block of our final $(1-\nicefrac{1}{e}-\eps)$-approximation offline algorithm. Subroutine~\ref{sub:continuous_greedy_filtering} combines the techniques used in our streaming algorithm (Algorithm~\ref{alg:greedy_filtering}) with a fast continuous greedy algorithm introduced by~\citet{BV14}. It operates in three phases.

\begin{figure}[!t]
\noindent\begin{minipage}{\textwidth}
\renewcommand\footnoterule{}
\begin{algorithm}[H]
\KwIn{Matroid $\M\subseteq 2^{[n]}$ with rank $r\in\Z_{\ge0}$, submodular function $f:2^{[n]}\to\R_{\ge0}$ (and its multi-linear extension\footnote{We note that throughout the subroutine, the multi-linear extension $F$ is evaluated only on fractional solutions with a support of size at most $\frac{r}{\eps}$, and hence can be computed exactly in $2^{\cO(r/\eps)}$ time.} $F:[0,1]^n\to\R_{\ge0}$), and parameter $\eps$ such that $\frac{1}{\eps}\in \N^+$.}
\SetAlgorithmName{Subroutine}~~
\SetKw{Break}{break}
 \If{$r=0$}{\tcp{In case Subroutine~\ref{sub:continuous_greedy_filtering} is invoked on an input matroid with rank zero}
    $S\gets\emptyset$ and $H\gets\emptyset$\;
    \Return $S$ and $H$\;
 }
 \tcp{Phase 1}
 $v\gets\max_{e\in [n]} f(\{e\})$\;
 \tcp{Phase 2}
 $x^0\gets\mathbf{0}$\;
 \For{$t=1,\dots,\frac{1}{\eps}$}{
     $x^t\gets x^{t-1}$ and $S_0^t\gets \emptyset$\;
     \For{$i=1,\dots,r$}{
        Sample a set\footnote{We define $V_i^t$ explicitly only for the purpose of the analysis. Instead of explicitly sampling and storing $V_i^t$, the algorithm can iterate through all elements in $[n]$ and skip each element with probability $1-\frac{\eps^3}{r}$.} $V_i^t$ by including each element of $[n]$ independently with probability $\frac{\eps^3}{r}$\;
        \If{$\exists\,e\in V_i^t$ s.t.~$S_{i-1}^t\cup\{e\}\in\M$ and $F(\eps\cdot\mathbf{1}_{e}|x^{t-1}+\eps\cdot\mathbf{1}_{S_{i-1}^t})\ge0$\label{algline:offline_greedy_selection_if_condition}}{
            $s_i^t\gets\argmax_{e\in V_i^t \textnormal{ s.t.}\,S_{i-1}^t\cup\{e\}\in\M} F(\eps\cdot\mathbf{1}_{e}|x^{t-1}+\eps\cdot\mathbf{1}_{S_{i-1}^t})$\;\label{algline:offline_greedy_selection_arg_max}
            $S_i^t\gets S_{i-1}^t\cup\{s_i^t\}$\;
        }\Else{
            $S_i^t\gets S_{i-1}^t$\;
        }
        \tcp{If $s_i^t$ exists, $\Delta_{i}^t=\mathbf{1}_{s_i^t}$; otherwise, $\Delta_{i}^t=\mathbf{0}$ acts as a dummy element}
        $\Delta_{i}^t\gets \mathbf{1}_{S_{i}^t}-\mathbf{1}_{S_{i-1}^t}$\;
     }
     $x^t\gets x^{t-1}+\eps\cdot\mathbf{1}_{S_r^t}$\;
 }
 \tcp{Phase 3}
 $I\gets\{(1+\eps)^i\cdot\frac{\eps^2 v}{r}\mid i\in\{0,...,\ceil{\log_{1+\eps}(r/\eps)}\}\}$\;
 $H\gets\emptyset$\;
 \For{$j=1,\dots,n$}{
    \If{$\exists\,t\in\left[\frac{1}{\eps}\right],\,i\in[r]$ s.t.~$S_{i-1}^t\cup\{j\}\in\M$ and $\smallfloor{F(\eps\cdot\mathbf{1}_j|x^{t-1}+\eps\cdot\mathbf{1}_{S_{i-1}^t})}_I>\smallfloor{F(\eps\cdot\Delta_{i}^t|x^{t-1}+\eps\cdot\mathbf{1}_{S_{i-1}^t})}_I$\label{algline:offline_filter_condition}}{
        $H\gets H\cup\{j\}$\;
    }
    \If{$|H|>\frac{r\ln(r/\eps^2)\cdot|I|}{\eps^4}$\label{algline:offline_break_condition}}{
        \Break;
    }
 }
 $S\gets\bigcup_{t=1}^{1/\eps} S_r^t$\;
 \Return $S$ and $H$\;
 \caption{\textsc{Continuous-Greedy-Filtering}$(f,\M,r,\eps)$}
 \label{sub:continuous_greedy_filtering} 
\end{algorithm}
\end{minipage}
\end{figure}

\begin{paragraph}{Phase 1: Computing the highest singleton value.}
Subroutine~\ref{sub:continuous_greedy_filtering} first identifies the element in $[n]$ with the highest singleton value and assigns this value to $v$. Since Subroutine~\ref{sub:continuous_greedy_filtering} is designed for the offline setting, $v$ can be computed exactly by iterating through all elements in $[n]$.
\end{paragraph}

\begin{paragraph}{Phase 2: Constructing a fractional solution using the continuous greedy algorithm.}
Then, Subroutine~\ref{sub:continuous_greedy_filtering} uses the continuous greedy algorithm to construct a fractional solution from $\frac{r}{\eps}$ subsampled sets: It runs for $\frac{1}{\eps}$ epochs. In each epoch $t\in\left[\frac{1}{\eps}\right]$, it builds a new integral solution $S_r^t$ from scratch in $r$ steps. In each step $i\in[r]$, it samples a small subset of elements $V_i^t$. Among the elements in $V_i^t$ that can be added to the current integral solution without violating the matroid constraint, Subroutine~\ref{sub:continuous_greedy_filtering} identifies the element with the highest ``marginal value'' (as specified at Line~\ref{algline:offline_greedy_selection_arg_max}) with respect to the current fractional solution. If such an element exists and its marginal value is non-negative, Subroutine~\ref{sub:continuous_greedy_filtering} includes it in the current integral solution and adds an $\eps$ fraction of it to the current fractional solution.
\end{paragraph}

\begin{paragraph}{Phase 3: Filtering all elements in $[n]$.}
Finally, Subroutine~\ref{sub:continuous_greedy_filtering} filters all elements in $[n]$, selecting a subset $H$ from them: For each element $e\in[n]$, it adds element $e$ to $H$ if, for some step $i\in[r]$ of some epoch $t\in\left[\frac{1}{\eps}\right]$ in the second phase, (i) element $e$ can be added to the integral solution at that step without violating the matroid constraint, and (ii) it has a strictly higher marginal value than element $s_i^t$ with respect to the fractional solution at that step (or has a strictly positive marginal value in case $s_i^t$ does not exist). Here, the marginal values are also compared after applying the rounding operator $\floor{\cdot}_I$ with some well-chosen discretization $I$, in order to keep the memory usage nearly linear in the matroid rank w.h.p. In the subroutine, we also impose a hard memory limit at 
Line~\ref{algline:offline_break_condition}. At the end of Subroutine~\ref{sub:continuous_greedy_filtering}, it returns the set $H$, along with the union $\bigcup_{t=1}^{1/\eps} S_r^t$ of all the integral solutions constructed during the second phase.
\vspace{0.5\baselineskip}
\end{paragraph}

Now we let $O\in\M$ denote \emph{any} optimal solution, and define $O_H:=O\cap H$ and $O_L:=O\setminus H$ (i.e., $O_H$ contains the optimal elements that are selected in the third phase of Subroutine~\ref{sub:continuous_greedy_filtering}, and $O_L$ consists of those that are not selected). In Theorem~\ref{thm:offline_subroutine}, we show that Subroutine~\ref{sub:continuous_greedy_filtering} satisfies a property that is crucial for proving the $1-\nicefrac{1}{e}-\eps$ approximation guarantee of our final algorithm.

\begin{theorem}\label{thm:offline_subroutine}
Given any input submodular function $f:2^{[n]}\to\R_{\ge0}$, matroid $\M\subseteq2^{[n]}$ with rank $r\in\Z_{\ge0}$, and parameter $\eps<\frac{1}{4}$ such that $\frac{1}{\eps}\in\N^+$, Subroutine~\ref{sub:continuous_greedy_filtering} guarantees that with probability at least $1-2\eps$, there exists a solution set $X_{\alg}\subseteq S$ such that $X_{\alg}\in\M$ and
\[
f(X_{\alg})\ge \left(1-\frac{1}{e}\right)\cdot (f(O_L)-8\eps\cdot f(O)),
\]
where the set $S$ is the first output of Subroutine~\ref{sub:continuous_greedy_filtering}.
\end{theorem}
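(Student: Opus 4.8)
The plan is to analyze the continuous greedy process of Phase~2 as producing a fractional solution $x := x^{1/\eps} = \eps \sum_{t=1}^{1/\eps} \mathbf{1}_{S_r^t}$, and to show that $F(x)$ is close to $(1-1/e)\cdot f(O_L)$, minus lower-order terms of size $O(\eps\cdot f(O))$. Once that is established, Lemma~\ref{lem:pipage} immediately yields the desired integral $X_{\alg} \subseteq S = \bigcup_{t=1}^{1/\eps} S_r^t$ with $X_{\alg} \in \M$ and $f(X_{\alg}) \ge F(x) \ge (1-1/e)(f(O_L) - 8\eps f(O))$. The $1-2\eps$ failure probability will come from two bad events: (i) the break condition at Line~\ref{algline:offline_break_condition} is triggered, which we bound exactly as in Lemma~\ref{lem:streaming_unlikely_case} (the ineffective/effective-selector machinery of Claim~\ref{claim:ineffective_selector} and Lemma~\ref{lem:H_i_setminus_H_i-1} carries over, now with the selectors being the pairs $(t,i)$ and $\Delta_i^t$); and (ii) within some step $(t,i)$, the subsampled set $V_i^t$ fails to ``hit'' a good element, i.e.\ the greedy step does not pick up enough marginal value — a union bound over the $r/\eps$ steps each failing with probability $\le \eps^2$ (tuned via the sampling rate $\eps^3/r$) gives total probability $\le \eps$.

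The core of the argument is the per-epoch gain bound. Fix epoch $t$ and condition on $x^{t-1}$. I would argue that, analogously to the streaming analysis, one can find a partition of a base extending $S_r^t$ and use the base-exchange lemmas (Lemma~\ref{lem:base_exchange}, Lemma~\ref{lem:base_matching}) to set up a bijection between the ``missing'' part of $S_r^t$ and the corresponding part of $O_L$; the point of the filtering phase is that any $o \in O_L$ which could have been feasibly added at step $i$ of epoch $t$ was \emph{not} added to $H$, hence its rounded marginal did not exceed that of the chosen $\Delta_i^t$, so $F(\eps\mathbf{1}_o \mid x^{t-1}+\eps\mathbf{1}_{S_{i-1}^t})$ is within a $(1-\eps)$ factor of (plus an additive $\gamma v$ slack from) $F(\eps\Delta_i^t \mid x^{t-1}+\eps\mathbf{1}_{S_{i-1}^t})$, via Claim~\ref{claim:streaming_rounding_error} with $\gamma = \eps^2/r$. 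Summing the greedily-chosen marginals along the $r$ steps and comparing against the multilinear marginal of $O_L$ gives, by multilinearity and submodularity of $F$ (concavity along nonnegative directions),
\[
F(x^t) - F(x^{t-1}) \;\ge\; \eps\cdot\big(F(x^{t-1} + \eps\mathbf{1}_{O_L}) - F(x^{t-1})\big) - O(\eps^2)\cdot f(O)
\;\ge\; \eps\cdot\big(f(O_L) - F(x^{t-1})\big) - O(\eps^2)\cdot f(O),
\]
where the second step uses $F(x^{t-1}+\eps\mathbf{1}_{O_L}) \ge \eps f(O_L) + (1-\eps)F(x^{t-1})$ by Lemma~\ref{lem:subsample_exactly}-type reasoning on the multilinear extension, and the non-negativity check at Line~\ref{algline:offline_greedy_selection_if_condition} ensures we never lose by a step. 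The additive error absorbs the subsampling loss (a step misses with probability $\le\eps^2$, contributing at most $\eps^2\cdot v \le \eps^2 f(O)$ in expectation, so conditioning on event (ii) not happening, the accumulated loss over $1/\eps$ epochs is $O(\eps)\cdot f(O)$) and the rounding slack $\eps\cdot r\cdot\gamma v = \eps^3 v$ per epoch.

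Unrolling this recursion over the $1/\eps$ epochs is the standard continuous-greedy telescoping: writing $\phi_t := f(O_L) - F(x^t)$, the recursion reads $\phi_t \le (1-\eps)\phi_{t-1} + O(\eps^2) f(O)$, so $\phi_{1/\eps} \le (1-\eps)^{1/\eps}\phi_0 + O(\eps) f(O) \le e^{-1} f(O_L) + O(\eps) f(O)$, hence $F(x) = F(x^{1/\eps}) \ge (1-1/e) f(O_L) - O(\eps) f(O)$, and tracking constants carefully makes the error at most $(1-1/e)\cdot 8\eps f(O)$. The main obstacle I anticipate is the matroid bookkeeping in the per-step comparison: unlike the clean cardinality case, here $S_r^t$ and $O_L$ need not be bases of the same size and the bijection from Lemma~\ref{lem:base_matching} must be set up on the right restricted/contracted matroid so that ``$o$ feasibly addable at step $i$'' genuinely follows from the exchange property — this is exactly the subtlety the streaming proof sketch glosses over with its ``simplifying assumption'' that $S_r$ and $O$ are bases, and the honest argument (presumably in the appendix for both) needs Lemma~\ref{lem:base_exchange} applied to a base of $\M$ containing $S_r^t$ together with a base containing $O_L$, then Lemma~\ref{lem:base_matching} to extract the step-respecting bijection, plus submodularity to pass from telescoped marginals over $S_L$-type subsets back to the full marginal $f(O_L \mid \cdot)$.
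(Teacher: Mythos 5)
Your high-level plan matches the paper's: bound $F(x^{1/\eps})$ by a per-epoch continuous-greedy recursion, telescope, and apply Lemma~\ref{lem:pipage} to round. You also correctly identify the central mechanism — an element $o\in O_L$ did not enter $H$, hence its rounded marginal at step $(t,i)$ did not exceed the chosen $\Delta_i^t$, so greedy tracks $O_L$ deterministically, and you correctly anticipate the base-exchange/bijection bookkeeping via Lemma~\ref{lem:base_exchange} and Lemma~\ref{lem:base_matching} (the paper does exactly this in Appendix~\ref{sec:proof_of_lem_offline_regular_case}). However, there are two genuine gaps.

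First, your characterization of the second bad event is wrong. You frame it as ``the subsampled set $V_i^t$ fails to hit a good element,'' with failure probability $\le\eps^2$ per step union-bounded over $r/\eps$ steps — but this gives $r\eps$, not $\eps$, so the arithmetic does not close, and more importantly the filtering already makes the greedy's choice at least as good as any surviving $o\in O_L$ \emph{without any probabilistic hitting argument}. The actual event the paper needs (its $E_2$ in Lemma~\ref{lem:offline_regular_case}) controls a different thing entirely: since $f$ is non-monotone, adding the random set $\cR(x^t)$ (whose support lies in $\bigcup_{t,i}V_i^t$) to $O_L$ can \emph{decrease} the value, and $E_2$ says $f(X|O)\ge-\eps\cdot f(O)$ for all such $X$. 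This is proven via Claim~\ref{claim:eps_fraction_does_not_hurt}/Lemma~\ref{lem:subsample_at_most}, using that each element lands in $\bigcup_{t,i}V_i^t$ with probability at most $\frac{r}{\eps}\cdot\frac{\eps^3}{r}=\eps^2$, giving failure probability $\le\eps^2/\eps=\eps$ — a single Markov argument, not a per-step union bound.

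Second, the inequality you invoke, $F(x^{t-1}+\eps\mathbf{1}_{O_L})\ge\eps f(O_L)+(1-\eps)F(x^{t-1})$ ``by Lemma~\ref{lem:subsample_exactly}-type reasoning,'' is false for non-monotone $f$. Take $f$ to be the cut function on two vertices $a,b$ with one edge, $O_L=\{a\}$, $x^{t-1}=(0,\nicefrac{1}{2})$: then $F(x^{t-1})=\nicefrac{1}{2}=F(x^{t-1}+\eps\mathbf{1}_{O_L})$, while the claimed lower bound is $\nicefrac{1}{2}+\nicefrac{\eps}{2}$. This is precisely where non-monotonicity bites, and it is why the paper routes the per-epoch bound through $\E[f(O_L\mid\cR(x^t))]$ and uses the $E_2$ event to conclude $\E[f(O_L\cup\cR(x^t))]\ge f(O_L)-\eps\cdot f(O)$; neither Lemma~\ref{lem:subsample_exactly} nor any ``subsample-exactly-type reasoning'' on the multilinear extension gives you that for free. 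Without $E_2$ (or a correct substitute) the per-epoch recursion does not hold, so the telescoping fails.
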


To prove Theorem~\ref{thm:offline_subroutine}, we assume w.l.o.g.~that the matroid rank $r$ is non-zero, as the statement holds trivially if $r=0$ (we include this edge case in Theorem~\ref{thm:offline_subroutine} because our final algorithm might invoke Subroutine~\ref{sub:continuous_greedy_filtering} on a matroid with rank zero). We first establish Lemma~\ref{lem:offline_break_condition}, which shows that w.h.p., the break condition at Line~\ref{algline:offline_break_condition} of Subroutine~\ref{sub:continuous_greedy_filtering} is never reached. The proof of Lemma~\ref{lem:offline_break_condition} relies on techniques similar to those developed in the proof of Lemma~\ref{lem:streaming_unlikely_case}, which we defer to Section~\ref{sec:proof_of_lem_offline_break_condition} of the appendix.
\begin{lemma}\label{lem:offline_break_condition}
Let $E_1$ denote the event that $|H|\le\frac{r\ln(r/\eps^2)\cdot|I|}{\eps^4}$. Then, we have that $\Pr[E_1]\ge1-\eps$.
\end{lemma}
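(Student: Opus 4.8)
## Proof Proposal for Lemma~\ref{lem:offline_break_condition}

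\textbf{Overall approach.} The plan is to mirror the structure of the proof of Lemma~\ref{lem:streaming_unlikely_case}, which handled the analogous ``too many elements pass the filter'' event in the streaming setting. The key conceptual device there was the notion of \emph{selectors}: each element $s_i$ chosen in Phase~2 acts as a selector in Phase~3, and an element passes the filter only if it satisfies the selection condition of some selector. Here the selectors are the (dummy or real) elements $s_i^t$ (equivalently the increments $\Delta_i^t$) for $t\in[1/\eps]$, $i\in[r]$, so there are $r/\eps$ of them rather than $r$. I would order the selectors lexicographically by $(t,i)$ and define the analogues $G_i^t$ (elements selected by $s_i^t$ that are sampled \emph{after} step $i$ of epoch $t$ in the appropriate sense), $H_i^t$ (the running union of the $G$'s up to selector $(t,i)$), the notion of \emph{ineffective} selectors (those whose rounded marginal value is $\ge$ that of some earlier selector), and \emph{effective} selectors. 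The rounding operator $\floor{\cdot}_I$ bounds the number of effective selectors by $|I|$.

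\textbf{Key steps, in order.} First I would reprove the analogue of Claim~\ref{claim:ineffective_selector}: if selector $s_i^t$ is ineffective, then every element it selects was already selected by the earlier selector witnessing its ineffectiveness, so $H$ grows only at effective selectors. The argument is identical in spirit --- submodularity of $F(\eps\cdot\mathbf{1}_j \mid \cdot)$ in the conditioning point (the conditioning points $x^{t-1}+\eps\cdot\mathbf{1}_{S_{i-1}^t}$ are nested in the natural order along a single epoch, and across epochs $x^{t-1}$ is coordinatewise increasing), together with monotonicity of $\floor{\cdot}_I$ and nestedness of the integral solutions in the matroid, forces the selection condition of the later selector to imply that of the earlier one. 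A minor subtlety is that across epochs the ``current integral solution'' $S_{i-1}^t$ restarts from $\emptyset$, so the containment $S_{i-1}^t \cup\{e\}\in\M$ is not automatically inherited; I would handle this by noting the relevant comparison only needs the conditioning \emph{fractional} point to be dominated, and for the matroid part one compares against the correct earlier selector's integral set, which is a subset. (I would state the ineffectiveness relation carefully to make this go through, possibly comparing each selector only to earlier selectors \emph{within the same epoch} for the matroid part and handling cross-epoch repetitions via the fractional-value argument alone.) Second, I would prove the analogue of Lemma~\ref{lem:H_i_setminus_H_i-1}: conditioned on the running bound holding for all earlier selectors, the probability that selector $(t,i)$ is effective \emph{and} contributes more than $\frac{r\ln(r/\eps^2)}{\eps^4}$ new elements to $H$ is at most $\frac{\eps^2}{r}$ (so that a union bound over all $r/\eps$ selectors loses only $\eps$). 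This is exactly where the subsampling probability $\frac{\eps^3}{r}$ enters: if an effective selector $s_i^t$ had a huge pool $G$ of elements it would select, then with overwhelming probability one of them would land in $V_i^t$ and be chosen as $s_i^t$ instead (or would raise the argmax above the threshold), contradicting $s_i^t$ being the selected element; the pool size past which this contradiction kicks in, multiplied by the $r/\eps^2$ logarithmic slack needed for a high-probability bound, gives the stated threshold. I would point to Section~\ref{sec:proof_of_lem_H_i_setminus_H_i-1} and adapt the calculation with the new parameters. Third, I would assemble: $H\subseteq H_r^{1/\eps}$, at most $|I|$ effective selectors, each contributing at most $\frac{r\ln(r/\eps^2)}{\eps^4}$ new elements with the claimed probability, hence $|H|\le \frac{r\ln(r/\eps^2)\cdot|I|}{\eps^4}$ with probability at least $1-\eps$, which is $\Pr[E_1]\ge 1-\eps$.

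\textbf{Main obstacle.} The technical heart --- and the step I expect to be most delicate --- is the analogue of Lemma~\ref{lem:H_i_setminus_H_i-1}. Unlike the streaming setting, where randomness came from the arrival order and one argued via a martingale-type bound over windows, here the randomness is the independent $\frac{\eps^3}{r}$-subsampling of each $V_i^t$, and one must argue that an effective selector cannot have selected a ``small'' marginal-value element when a large reservoir of higher-value feasible elements existed. The care needed is that ``higher marginal value'' is with respect to the fractional point $x^{t-1}+\eps\cdot\mathbf{1}_{S_{i-1}^t}$, which itself depends on the random choices made at earlier steps of epoch $t$; I would condition on the entire history up through step $i-1$ of epoch $t$ (which determines $x^{t-1}$, $S_{i-1}^t$, and hence the set $G_i^t$ of would-be-selected elements), and only then invoke the independence of the fresh sample $V_i^t$. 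With that conditioning in place, the bound that $|V_i^t\cap G_i^t|$ is small with the required probability is a routine binomial tail estimate, and I would defer the arithmetic to the appendix.
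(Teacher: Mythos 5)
Your high-level strategy (selectors, effective/ineffective distinction via the rounding operator, bounding effective selectors by $|I|$, and the binomial tail estimate driven by the $\frac{\eps^3}{r}$ subsampling) is the right one and matches the paper. However, your plan has a real gap at precisely the place you flag as a ``minor subtlety.'' You propose to order all $\frac{r}{\eps}$ selectors lexicographically by $(t,i)$ and declare $\Delta_i^t$ ineffective if its rounded marginal value is at least that of some earlier selector, which would cap the number of effective selectors at $|I|$ \emph{globally}. The ineffectiveness claim then needs: if $e$ satisfies the selection condition of $\Delta_i^t$, it already satisfied that of the earlier witness $\Delta_j^{t'}$. The fractional half goes through (for $t'\le t$ and $j\le i$ the point $x^{t'-1}+\eps\cdot\mathbf{1}_{S_{j-1}^{t'}}$ is coordinatewise dominated by $x^{t-1}+\eps\cdot\mathbf{1}_{S_{i-1}^t}$, so submodularity and monotonicity of $\floor{\cdot}_I$ apply), but the matroid half does not: when $t'<t$, $S_{j-1}^{t'}$ is \emph{not} a subset of $S_{i-1}^t$ --- the integral solutions are rebuilt from scratch each epoch --- so $S_{i-1}^t\cup\{e\}\in\M$ does not imply $S_{j-1}^{t'}\cup\{e\}\in\M$. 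Your proposed fix, ``handling cross-epoch repetitions via the fractional-value argument alone,'' does not close this, because the filter condition at Line~\ref{algline:offline_filter_condition} is a conjunction: to conclude $e$ was already added by $\Delta_j^{t'}$ you need \emph{both} the marginal-value inequality and $S_{j-1}^{t'}\cup\{e\}\in\M$. So the global ``$\le|I|$ effective selectors'' count is not available.

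The paper sidesteps this by defining effective/ineffective selectors \emph{within each epoch only} (Claim~\ref{claim:subroutine_ineffective_selector}), where the integral solutions $S_0^t\subseteq\cdots\subseteq S_{r}^t$ are nested and both halves of the implication hold. The accounting is then: at most $|I|$ effective selectors \emph{per epoch}, each contributing at most $\frac{r\ln(r/\eps^2)}{\eps^3}$ elements (Lemma~\ref{lem:H_i_t_setminus_H_i-1_t}), yielding $|H_r^t|\le\frac{r\ln(r/\eps^2)\,|I|}{\eps^3}$ per epoch with probability $\ge1-\eps^2$, and then a union bound over the $\frac{1}{\eps}$ epochs gives the final $\frac{r\ln(r/\eps^2)\,|I|}{\eps^4}$ and the $1-\eps$ probability. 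Note that the extra $\frac{1}{\eps}$ in the exponent of the final bound comes from summing over epochs, not from the per-selector threshold --- your per-selector threshold of $\frac{r\ln(r/\eps^2)}{\eps^4}$ is looser than the $\frac{r\ln(r/\eps^2)}{\eps^3}$ that the sampling calculation gives, which is a symptom of the global vs.\ per-epoch accounting being conflated. Your final bound comes out numerically the same because you simultaneously under-count the effective selectors (by $\frac{1}{\eps}$) and over-size the per-selector contribution (by $\frac{1}{\eps}$), but the argument as described does not go through; the per-epoch decomposition is the needed ingredient.
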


Then, in Lemma~\ref{lem:offline_regular_case}, we show that conditioned on event $E_1$ and another high-probability event $E_2$, the fractional solution $x^{1/\eps}$ constructed in Subroutine~\ref{sub:continuous_greedy_filtering} is a nearly $1-\frac{1}{e}$ approximation of $O_L$.
\begin{lemma}\label{lem:offline_regular_case}
Suppose that the input parameter $\eps$ to Subroutine~\ref{sub:continuous_greedy_filtering} is in the range $\left(0,\frac{1}{4}\right)$. Let $E_1$ be the event defined in Lemma~\ref{lem:offline_break_condition}, and let $E_2$ denote the event that $f(X|O)\ge-\eps\cdot f(O)$ for all $X\subseteq\bigcup_{t\in\left[\frac{1}{\eps}\right],\,i\in[r]} V_i^t$, where the sets $V_i^t$ are the subsampled sets in Subroutine~\ref{sub:continuous_greedy_filtering}. Then, conditioned on events $E_1$ and $E_2$, Subroutine~\ref{sub:continuous_greedy_filtering} guarantees that
$F(x^{1/\eps})\ge\left(1-\frac{1}{e}\right)\cdot (f(O_L)-8\eps\cdot f(O))$.
\end{lemma}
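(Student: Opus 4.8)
The plan is a discretized continuous-greedy analysis carried out one epoch at a time, conditioned on $E_1$ and $E_2$; I assume without loss of generality that $r\ge 1$ (the $r=0$ case is the trivial branch of the subroutine) and that $\M$ has no loops, so that $v=\max_{e\in[n]}f(\{e\})\le f(O)$. The core is a per-epoch progress bound: for every $t\in\left[\tfrac1\eps\right]$,
\[
F(x^t)-F(x^{t-1})\ \ge\ (1-c\eps)\,\eps\cdot\bigl(f(O_L)-\eps f(O)-F(x^t)\bigr)\ -\ O(\eps^2)f(O)
\]
for a universal constant $c>0$. Granting this, set $b_t:=f(O_L)-\eps f(O)-F(x^t)$; the bound rearranges to $b_t\le\bigl(1-(1-c\eps)\eps\bigr)b_{t-1}+O(\eps^2)f(O)$, and unrolling over the $\tfrac1\eps$ epochs — using $\bigl(1-(1-c\eps)\eps\bigr)^{1/\eps}\le e^{-1}e^{O(\eps)}$ and $b_0=f(O_L)-\eps f(O)-f(\emptyset)\le f(O_L)$ — yields $F(x^{1/\eps})\ge(1-\nicefrac1e)f(O_L)-O(\eps)f(O)$, which is the claimed bound once the hidden constants are chosen so that the total additive loss is at most $8\,(1-\nicefrac1e)\,\eps f(O)$.

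To prove the per-epoch bound, fix $t$ and write $z^t_i:=x^{t-1}+\eps\mathbf{1}_{S^t_i}$, so $z^t_0=x^{t-1}$, $z^t_r=x^t$, and telescoping gives $F(x^t)-F(x^{t-1})=\sum_{i=1}^r F(\eps\Delta^t_i\mid z^t_{i-1})$ with every summand nonnegative (the greedy adds only elements of nonnegative marginal). First I build an injection $\phi_t:O_L\to[r]$ with $o\notin S^t_{\phi_t(o)-1}$ and $S^t_{\phi_t(o)-1}\cup\{o\}\in\M$ for all $o$: since $S^t_0\subseteq\cdots\subseteq S^t_r$ is an increasing chain of independent sets and $O_L\subseteq O$ is independent, the positions valid for each $o$ form a prefix of $[r]$, and for each $k$ the number of $o\in O_L$ whose valid prefix has length $\le k$ is at most $|O_L\cap\mathrm{span}(S^t_k)|\le|S^t_k|\le k$, so Hall's condition holds (this is a variant of Lemma~\ref{lem:base_matching}). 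Next, for $o\in O_L$ with $i=\phi_t(o)$ I claim $F(\eps\Delta^t_i\mid z^t_{i-1})\ge (1-\eps)\,F(\eps\mathbf{1}_o\mid z^t_{i-1})-\tfrac{\eps^2 v}{r}$: if $o\in V^t_i$ the greedy compared $o$ against $s^t_i$ at step $i$, so $F(\eps\Delta^t_i\mid z^t_{i-1})\ge F(\eps\mathbf{1}_o\mid z^t_{i-1})$ (also when $\Delta^t_i=\mathbf{0}$, as then the right-hand side is negative); if $o\notin V^t_i$, then $o\in O_L$ means $o\notin H$, and under $E_1$ the break at Line~\ref{algline:offline_break_condition} was not reached, so $o$ failed the filter test at Line~\ref{algline:offline_filter_condition} for $(t,i)$, i.e.\ $\smallfloor{F(\eps\mathbf{1}_o\mid z^t_{i-1})}_I\le\smallfloor{F(\eps\Delta^t_i\mid z^t_{i-1})}_I$, and Claim~\ref{claim:streaming_rounding_error} with $w=\eps v$, $\gamma=\eps/r$, $a=F(\eps\Delta^t_i\mid z^t_{i-1})\ge0$, and $b=F(\eps\mathbf{1}_o\mid z^t_{i-1})=\eps\,\partial_o F(z^t_{i-1})\le\eps f(\{o\})\le\eps v=w$ finishes the claim. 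Summing over $o\in O_L$ through $\phi_t$ and using $|O_L|\le r$,
\[
F(x^t)-F(x^{t-1})\ \ge\ (1-\eps)\sum_{o\in O_L}F\bigl(\eps\mathbf{1}_o\mid z^t_{\phi_t(o)-1}\bigr)\ -\ \eps^2 v\ =\ (1-\eps)\,\eps\sum_{o\in O_L}\partial_o F\bigl(z^t_{\phi_t(o)-1}\bigr)\ -\ \eps^2 v.
\]

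It remains to lower-bound $\sum_{o\in O_L}\partial_o F(z^t_{\phi_t(o)-1})$ by roughly $f(O_L)-\eps f(O)-F(x^t)$. The $E_2$ ingredient is clean: since $\mathrm{supp}(x^t)\subseteq\bigcup_{t'\in[1/\eps],\,i\in[r]}V^{t'}_i$ and $O_L\subseteq O$, submodularity gives $f(X\mid O_L)\ge f(X\mid O)\ge-\eps f(O)$ for all $X\subseteq\mathrm{supp}(x^t)$, hence $F(x^t\vee\mathbf{1}_{O_L})=\E\bigl[f(\cR(x^t)\cup O_L)\bigr]\ge f(O_L)-\eps f(O)$; one then passes from $F(x^t\vee\mathbf{1}_{O_L})-F(x^t)$ back to the gradient sum using $z^t_{\phi_t(o)-1}\le x^t$ together with monotonicity of $\nabla F$ (by submodularity) and the concavity of $F$ along the positive direction from $x^t$ to $x^t\vee\mathbf{1}_{O_L}$ — equivalently the measured-step inequality $F\bigl((1-\eps)x^t+\eps(x^t\vee\mathbf{1}_{O_L})\bigr)\ge(1-\eps)F(x^t)+\eps F(x^t\vee\mathbf{1}_{O_L})$, which sidesteps coordinate truncation. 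This conversion is the main obstacle: for non-monotone $f$ the naive bound $\langle\nabla F(x^t),\mathbf{1}_{O_L}\rangle\ge F(x^t\vee\mathbf{1}_{O_L})-F(x^t)$ can fail because $\nabla F$ may have negative entries, and the coordinates of $x^t$ need not stay below $1$ in the last epoch; what makes it go through is precisely that the filter has already quarantined into $H$ every element whose marginal beats the greedy's pick at some intermediate point $z^t_{i-1}$, so the elements of $O_L$ provably have small marginals along the whole run, while $O_L\subseteq O$ and $E_2$ control the cost of adjoining the sampled-set–supported $x^t$ to $O_L$. The residual work is bookkeeping — tracking the $O(\eps)$ losses from Claim~\ref{claim:streaming_rounding_error}, from the $(1-\eps)$ greedy-versus-filter slack, from $E_2$, and from the $\tfrac1\eps$-fold unrolling of the recursion — so that they total at most $8\eps f(O)$.
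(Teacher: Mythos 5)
Your Hall's-theorem construction of the injection $\phi_t:O_L\to[r]$ is a genuinely different and somewhat more streamlined route than the paper's. The paper first passes to the restriction $\M_{S_r^t\cup O}$, augments to bases, invokes Greene--Woodall (Lemma~\ref{lem:base_exchange}) to split $S_r^t\setminus S_O^t=S_L^t\cupdot S_H^t$, and then Brualdi (Lemma~\ref{lem:base_matching}) to get a bijection $h_t:S_L^t\to O_L^t$, with the remaining part $O_S^t\cap O_L$ handled separately via the dummy steps. Your Hall argument (valid-prefix structure plus $|O_L\cap\mathrm{span}(S_k^t)|\le|S_k^t|\le k$) produces directly an injection from all of $O_L$ into positions $[r]$ with the swap property, so there is no need to isolate $O_S^t\cap O_L$ or invoke the two base-exchange lemmas. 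The per-step comparison (greedy-vs.-$o$ if $o\in V_i^t$; filter-vs.-$o$ via Claim~\ref{claim:streaming_rounding_error} if $o\notin V_i^t$, using $E_1$) and the final unrolling match the paper's logic. This part of your proposal is correct and a nice simplification.

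The ``conversion'' paragraph, however, contains a real gap, and ironically it is located exactly where you announced you had sidestepped a difficulty. You assert that the naive bound
\[
\langle\nabla F(x^t),\mathbf{1}_{O_L}\rangle=\sum_{o\in O_L}\partial_o F(x^t)\ \ge\ F(x^t\vee\mathbf{1}_{O_L})-F(x^t)
\]
``can fail because $\nabla F$ may have negative entries.'' It cannot. Writing $\partial_o F(x)=\E[f(\{o\}\mid \cR(x)\setminus\{o\})]$, submodularity gives $\partial_o F(x)\ge\E[f(\{o\}\mid\cR(x))]$, and then $\sum_{o\in O_L} f(\{o\}\mid R)\ge f(O_L\mid R)$ for every fixed $R$, so
\[
\sum_{o\in O_L}\partial_o F(x^t)\ \ge\ \E[f(O_L\mid\cR(x^t))]\ =\ F(x^t\vee\mathbf{1}_{O_L})-F(x^t),
\]
with no sign conditions needed. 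This is precisely the chain the paper uses (it underlies Ineq.~\eqref{eq:one_epoch_of_continuous_greedy_illustration_2} in the sketch and the bound in Step~1 of Section~\ref{sec:proof_of_lem_offline_regular_case}). By contrast, the ``measured-step'' concavity inequality you propose as the fix only gives $F(x^t\vee\mathbf{1}_{O_L})-F(x^t)\le\sum_{o\in O_L}(1-x^t_o)\,\partial_o F(x^t)$, and when $\partial_o F(x^t)<0$ with $x^t_o>0$ the factor $(1-x^t_o)<1$ makes $(1-x^t_o)\partial_o F(x^t)>\partial_o F(x^t)$, so you cannot drop it to recover $\sum_o\partial_o F(x^t)$; the signed-gradient problem you describe lives in the concavity route, not in the naive one. (Your secondary worry is also moot: each element is picked at most once per epoch, so $x^t_e\le t\eps\le1$ throughout, including the last epoch.) Replace that paragraph with the submodularity chain above, composed with DR-submodularity $F(\eps\mathbf{1}_o\mid z^t_{\phi_t(o)-1})\ge F(\eps\mathbf{1}_o\mid x^t)$, and your proof closes.
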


We postpone the proof of Lemma~\ref{lem:offline_regular_case} to Section~\ref{sec:proof_of_lem_offline_regular_case} and provide a simplified proof sketch.

\begin{proof}[Proof sketch]\renewcommand{\qedsymbol}{}
To illustrate the main idea, we make the simplifying assumption that the integral solution sets $S_r^t$ constructed in Subroutine~\ref{sub:continuous_greedy_filtering} for all $t\in\left[\frac{1}{\eps}\right]$, and the optimal solution $O$ are all bases of matroid $\M$. 

Recall that the optimal solution $O$ is split into $O_L$ and $O_H$, where $O_H$ contains the optimal elements that are included in $H$ by Subroutine~\ref{sub:continuous_greedy_filtering}, and $O_L$ consists of those that are not included in $H$ (note that conditioned on the high-probability event $E_1$ defined in Lemma~\ref{lem:offline_break_condition}, these elements were not added to $H$ because they did not satisfy the filter condition at Line~\ref{algline:offline_filter_condition} of Subroutine~\ref{sub:continuous_greedy_filtering}, not because of the break condition at Line~\ref{algline:offline_break_condition}).
Given the partition $O=O_L\cupdot O_H$, for each $t\in\left[\frac{1}{\eps}\right]$, we can find a partition $S_r^t=S_L^t\cupdot S_H^t$ such that $S_L^t\cupdot O_H$ and $O_L\cupdot S_H^t$ are also bases. Moreover, because both $S_r^t=S_L^t\cupdot S_H^t$ and $O_L\cupdot S_H^t$ are bases, by Lemma~\ref{lem:base_matching}, there exists a bijection $h_t:S_L^t\to O_L$ such that for all $s_i^t\in S_L^t$, replacing element $s_i^t$ in $S_r^t$ with element $h_t(s_i^t)$ results in a new base. In particular, this implies that element $h_t(s_i^t)$ could be added to set $S_{i-1}^t$ without violating the matroid constraint. Despite this, element $h_t(s_i^t)\in O_L$ did not satisfy the selection condition at Line~\ref{algline:offline_filter_condition} corresponding to $\Delta_i^t$  (since otherwise it would have been included in $H$). This implies that 
\begin{equation}\label{eq:did_not_pass_Delta_i_t}
F(\eps\cdot\mathbf{1}_{h_t(s_i^t)}|x^{t-1}+\eps\cdot\mathbf{1}_{S_{i-1}^t})\le F(\eps\cdot\Delta_{i}^t|x^{t-1}+\eps\cdot\mathbf{1}_{S_{i-1}^t})=F(\eps\cdot\mathbf{1}_{s_i^t}|x^{t-1}+\eps\cdot\mathbf{1}_{S_{i-1}^t})
\end{equation}
(ignoring the error incurred by the rounding operator $\floor{\cdot}_I$, which can be addressed by Claim~\ref{claim:streaming_rounding_error}). Now we derive that
\begin{align}\label{eq:one_epoch_of_continuous_greedy_illustration_1}
F(x^t)-F(x^{t-1})&=\sum_{i\in[r]} F(\eps\cdot\mathbf{1}_{s_i^t}|x^{t-1}+\eps\cdot\mathbf{1}_{S_{i-1}^t}) &&\text{(By telescoping sum)}\nonumber\\
&\ge\sum_{i\in[r] \textrm{ s.t.}\,s_i^t\in S_L^t} F(\eps\cdot\mathbf{1}_{s_i^t}|x^{t-1}+\eps\cdot\mathbf{1}_{S_{i-1}^t}) &&\text{(By the if condition at Line~\ref{algline:offline_greedy_selection_if_condition})}\nonumber\\
&\ge \sum_{i\in[r] \textrm{ s.t.}\,s_i^t\in S_L^t} F(\eps\cdot\mathbf{1}_{h_t(s_i^t)}|x^{t-1}+\eps\cdot\mathbf{1}_{S_{i-1}^t}) &&\text{(By Ineq.~\eqref{eq:did_not_pass_Delta_i_t})}\nonumber\\
&\ge \sum_{i\in[r] \textrm{ s.t.}\,s_i^t\in S_L^t} F(\eps\cdot\mathbf{1}_{h_t(s_i^t)}|x^t) &&\text{(By submodularity)}\nonumber\\
&= \sum_{o\in O_L} F(\eps\cdot\mathbf{1}_{o}|x^t) &&\text{(Since $h_t:S_L^t\to O_L$ is a bijection)}.
\end{align}
By Definition~\ref{def:multi-linear}, $F(\eps\cdot\mathbf{1}_{o}|x^t)$ represents the additional value gained by increasing the probability of element $o$ appearing in the random set $\cR(x^t)$ by $\eps$, and thus, $F(\eps\cdot\mathbf{1}_{o}|x^t)=\eps\cdot\E[f(\{o\}|\cR(x^t)\setminus\{o\})]$. Combining this with Ineq.~\eqref{eq:one_epoch_of_continuous_greedy_illustration_1}, we obtain that
\begin{align}\label{eq:one_epoch_of_continuous_greedy_illustration_2}
F(x^t)-F(x^{t-1})&\ge \sum_{o\in O_L} \eps\cdot\E[f(\{o\}|\cR(x^t)\setminus\{o\})]\nonumber\\
&\ge \sum_{o\in O_L} \eps\cdot\E[f(\{o\}|\cR(x^t))] &&\text{(By submodularity)}\nonumber\\
&\ge \eps\cdot\E[f(O_L|\cR(x^t))] &&\text{(By submodularity)}\nonumber\\
&= \eps\cdot(\E[f(O_L\cup \cR(x^t))]-F(x^t)) &&\text{(By Definition~\ref{def:multi-linear})}\nonumber\\
&\ge \eps\cdot(f(O_L)-\eps\cdot f(O)-F(x^t)),
\end{align}
where the last inequality follows because conditioned on event $E_2$ in the statement of Lemma~\ref{lem:offline_regular_case}, the random set $\cR(x^t)$ (which is always a subset of $\bigcup_{t\in\left[\frac{1}{\eps}\right],\,i\in[r]} V_i^t$ in Subroutine~\ref{sub:continuous_greedy_filtering}) satisfies that $f(\cR(x^t)|O)\ge-\eps\cdot f(O)$.

Then, by iteratively applying Ineq.~\eqref{eq:one_epoch_of_continuous_greedy_illustration_2} from $t=1$ to $t=\frac{1}{\eps}$ (as in the analysis of the continuous greedy algorithm, with the difference being that the approximation guarantee here is with respect to $f(O_L)-\eps\cdot f(O)$ rather than $f(O)$), we can show that $F(x^{1/\eps})\ge\left(1-\frac{1}{e}\right)\cdot (f(O_L)-c\cdot\eps\cdot f(O))$ for some constant $c$, which establishes Lemma~\ref{lem:offline_regular_case}.
\end{proof}

Finally, Theorem~\ref{thm:offline_subroutine} follows by combining Lemma~\ref{lem:offline_regular_case} with Lemma~\ref{lem:offline_break_condition} and Claim~\ref{claim:eps_fraction_does_not_hurt}.
\begin{proof}[Proof of Theorem~\ref{thm:offline_subroutine}]
We assume w.l.o.g.~that the matroid rank $r$ is non-zero.
First, we show that the event $E_2$ defined in Lemma~\ref{lem:offline_regular_case} occurs w.h.p. Notice that in Subroutine~\ref{sub:continuous_greedy_filtering}, for any $t\in\left[\frac{1}{\eps}\right]$ and $i\in[r]$, the probability that an element $e\in[n]$ appears in the subsampled set $V_i^t$ is $\frac{\eps^3}{r}$. By a union bound, the probability that element $e$ appears in $\bigcup_{t\in\left[\frac{1}{\eps}\right],\,i\in[r]} V_i^t$ is at most $\frac{r}{\eps}\cdot\frac{\eps^3}{r}=\eps^2$. Therefore, it follows from Claim~\ref{claim:eps_fraction_does_not_hurt} that event $E_2$ occurs with probability at least $1-\eps$. Then, combining this with Lemma~\ref{lem:offline_break_condition} through a union bound, the probability that both events $E_1$ and $E_2$ occur is at least $1-2\eps$. Thus, by Lemma~\ref{lem:offline_regular_case}, it holds with probability at least $1-2\eps$ that $F(x^{1/\eps})\ge\left(1-\frac{1}{e}\right)\cdot (f(O_L)-8\eps\cdot f(O))$. This implies Theorem~\ref{thm:offline_subroutine} by Lemma~\ref{lem:pipage}, because $x^{1/\eps}=\sum_{t=1}^{1/\eps}\eps\cdot\mathbf{1}_{S_r^t}$ and $S=\bigcup_{t=1}^{1/\eps}S_r^t$.
\end{proof}

On a separate note, Subroutine~\ref{sub:continuous_greedy_filtering} itself does not guarantee a better-than-$\nicefrac{1}{2}$ approximation. This is because Subroutine~\ref{sub:continuous_greedy_filtering} can be interpreted as an $\widetilde{\cO}\left(\frac{r}{\poly(\eps)}\right)$-memory random-order streaming algorithm (by estimating the highest singleton value as in Algorithm~\ref{alg:greedy_filtering} and replacing the subsampled sets with small segments of the random stream), and the hardness result by~\citet{RZ22b} asserts that such random-order streaming algorithms cannot exceed $\nicefrac{1}{2}$ approximation.

\section{A \texorpdfstring{$(1-\nicefrac{1}{e}-\eps)$-approximation}{(1-1/e-eps)-approximation} offline algorithm}\label{section:offline}
In this section, we present our final algorithm (Algorithm~\ref{alg:recursive_continuous_greedy_filtering}) that achieves a nearly $1-\nicefrac{1}{e}$ approximation by recursively applying \textsc{Continuous-Greedy-Filtering} (Subroutine~\ref{sub:continuous_greedy_filtering}) to progressively refined instances. Specifically, Algorithm~\ref{alg:recursive_continuous_greedy_filtering} performs a recursive process consisting of $\frac{1}{\alpha}$ levels.

\begin{algorithm}[ht]
\KwIn{Matroid $\M\subseteq 2^{[n]}$ with rank $r\in\N^+$, submodular function $f:2^{[n]}\to\R_{\ge0}$, sets $S^{\acc},O_H^{\acc}\subseteq[n]$, parameter $\alpha$ such that $\frac{1}{\alpha}\in \N^+$, and depth counter $k\in\left[\frac{1}{\alpha}\right]$.}
\SetAlgorithmName{Algorithm}~~
Define $f^{(k)}:2^{[n]\setminus O_H^{\acc}}\to\R_{\ge0}$ such that $f^{(k)}(X):=f(X\cup O_H^{\acc})$ for all $X\subseteq [n]\setminus O_H^{\acc}$\;
Define $\M^{(k)}\subseteq2^{[n]\setminus O_H^{\acc}}$ as $\M^{(k)}:=\M/O_H^{\acc}$\;
$S^{(k)},H^{(k)}\gets\textsc{Continuous-Greedy-Filtering}(f^{(k)},\M^{(k)},r_{\M^{(k)}},\alpha^2)$; \tcp{Subroutine~\ref{sub:continuous_greedy_filtering}}
$S^{\acc}\gets S^{\acc}\cup S^{(k)}$\;
\If{$k=\frac{1}{\alpha}$}{
    $X_{\alg}\gets\argmax_{X\subseteq S^{\acc}\cup O_H^{\acc}\textrm{ s.t.}\,X\in\M}f(X)$\label{algline:offline_exhaustive_search}\;
    \Return $X_{\alg}$\;
}
$X_{\alg}\gets\emptyset$\;
\For{$O_H^{(k)}\subseteq H^{(k)}$\label{algline:recursion_for_loop}}{
    $O_H^{\acc'}\gets O_H^{\acc}\cup O_H^{(k)}$\;
    $X_{\alg}'\gets\textsc{Recursive-Continuous-Greedy-Filtering}(f,\M,r,S^{\acc},O_H^{\acc'},\alpha,k+1)$\;
    \If{$f(X_{\alg}')>f(X_{\alg})$}{
        $X_{\alg}\gets X_{\alg}'$\;
    }        
}
\Return $X_{\alg}$\;

 \caption{\textsc{Recursive-Continuous-Greedy-Filtering}$(f,\M,r,S^{\acc},O_H^{\acc},\alpha,k)$}
 \label{alg:recursive_continuous_greedy_filtering}
\end{algorithm}

\begin{paragraph}{Initial phase.}
We initiate the first call to Algorithm~\ref{alg:recursive_continuous_greedy_filtering} by setting $k=1$, $S^{\acc}=\emptyset$, and $O_H^{\acc}=\emptyset$. We denote the optimal solution to the input instance $(f,\M)$ by $O^{(1)}$.  At the root level $k=1$, Algorithm~\ref{alg:recursive_continuous_greedy_filtering} first invokes Subroutine~\ref{sub:continuous_greedy_filtering} with parameter $\eps=\alpha^2$ on the input instance $(f,\M)$, which returns two sets $S^{(1)}$ and $H^{(1)}$. Then, Algorithm~\ref{alg:recursive_continuous_greedy_filtering} enumerate all subsets of $H^{(1)}$ in the for loop at Line~\ref{algline:recursion_for_loop}. Crucially, there will be an iteration of this loop where $O_H^{(1)}=O^{(1)}\cap H^{(1)}$ (all other iterations will be irrelevant to the analysis). Within this particular iteration (as well as in all other iterations), Algorithm~\ref{alg:recursive_continuous_greedy_filtering} makes a recursive call to itself by setting $S^{\acc}=S^{(1)}$ and $O_H^{\acc}=O_H^{(1)}$. These sets will be accumulated throughout the recursion.
\end{paragraph}

\begin{paragraph}{Recursion phase.}
At each internal recursion level $k\in\{2,\dots,\frac{1}{\alpha}-1\}$, Algorithm~\ref{alg:recursive_continuous_greedy_filtering} receives the accumulated sets $S^{\acc}=\bigcup_{j=1}^{k-1} S^{(j)}$ and $O_H^{\acc}=\bigcup_{j=1}^{k-1} O_H^{(j)}$ from the previous $k-1$ levels. It first constructs a refined instance $(f^{(k)},\M^{(k)})$, where the matroid $\M^{(k)}$ is the contraction of the original matroid $\M$ by the accumulated set $\bigcup_{j=1}^{k-1} O_H^{(j)}$, and the non-negative submodular function $f^{(k)}$ is constructed by evaluating the original function $f$ on the union of the accumulated set $\bigcup_{j=1}^{k-1} O_H^{(j)}$ and the input set to $f^{(k)}$. We define $O^{(k)}:=O^{(1)}\setminus\left(\bigcup_{j=1}^{k-1}O_H^{(j)}\right)$. Next, Algorithm~\ref{alg:recursive_continuous_greedy_filtering} calls Subroutine~\ref{sub:continuous_greedy_filtering} with parameter $\eps=\alpha^2$ on the instance $(f^{(k)},\M^{(k)})$, which returns two sets $S^{(k)}$ and $H^{(k)}$. Then, Algorithm~\ref{alg:recursive_continuous_greedy_filtering} enumerates all subsets of $H^{(k)}$. Crucially, there will be an iteration of this loop where $O_H^{(k)}=O^{(k)}\cap H^{(k)}$ (all other iterations will be irrelevant to the analysis). In this particular iteration (as well as in all other iterations), Algorithm~\ref{alg:recursive_continuous_greedy_filtering} makes a recursive call to itself, passing the updated accumulated sets $S^{\acc}=\bigcup_{j=1}^k S^{(j)}$ and $O_H^{\acc}=\bigcup_{j=1}^k O_H^{(j)}$ as arguments.

It is important to keep in mind that, in the recursion tree of Algorithm~\ref{alg:recursive_continuous_greedy_filtering} (illustrated in Figure~\ref{fig:illustration_recursive}), there is a unique root-to-leaf path such that for each $k\in\left[\frac{1}{\alpha}-1\right]$, the $(k+1)$-th call to Algorithm~\ref{alg:recursive_continuous_greedy_filtering} along the path (corresponding to the $(k+1)$-th node in the path) is invoked by the iterate $O_H^{(k)}=O^{(k)}\cap H^{(k)}$ (from the for loop at Line~\ref{algline:recursion_for_loop}) during the $k$-th call, where $O^{(k)}=O^{(1)}\setminus\left(\bigcup_{j=1}^{k-1}O_H^{(j)}\right)$. We will focus on this particular path when we prove the approximation guarantee of Algorithm~\ref{alg:recursive_continuous_greedy_filtering}.
\end{paragraph}

\begin{figure}[ht]
    \centering
    \vspace*{0.1in}
    \includegraphics[scale=0.25]{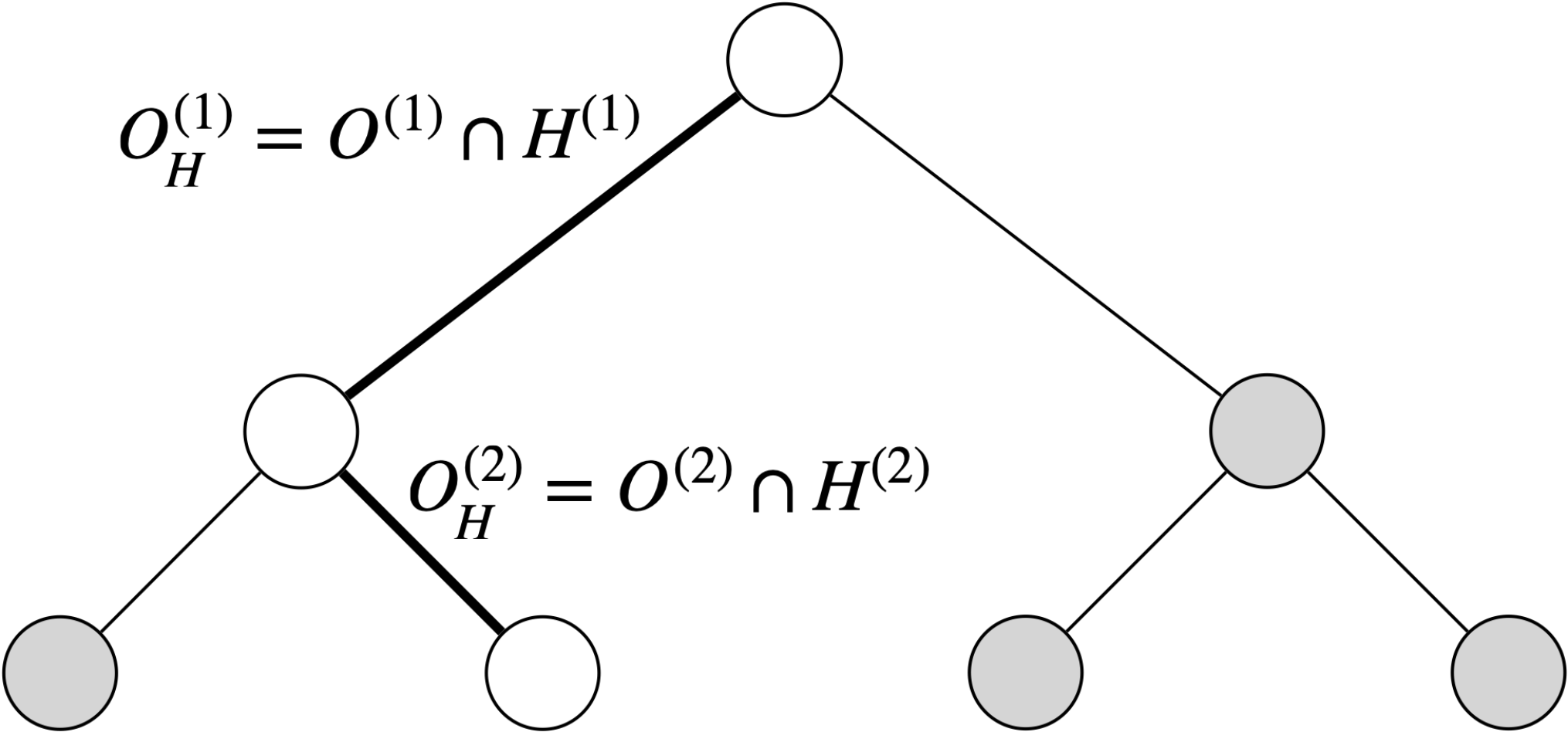}
    \vspace*{0.15in}
    \caption{This figure shows an example depth-$2$ recursion tree of Algorithm~\ref{alg:recursive_continuous_greedy_filtering}. The empty nodes form a unique path (made bold in the figure) in which, for each $k\in[2]$, the $(k+1)$-th call is invoked by the iterate $O_H^{(k)}=O^{(k)}\cap H^{(k)}$ during the $k$-th call, where $O^{(k)}=O^{(1)}\setminus\left(\bigcup_{j=1}^{k-1}O_H^{(j)}\right)$.}
    \label{fig:illustration_recursive}
\end{figure}

\begin{paragraph}{Return phase.}
At the leaf level $k=\frac{1}{\alpha}$, Algorithm~\ref{alg:recursive_continuous_greedy_filtering} first calls Subroutine~\ref{sub:continuous_greedy_filtering} on a refined instance $(f^{(1/\alpha)},\M^{(1/\alpha)})$ (constructed similarly to the instances in the previous levels), which returns two sets $S^{(1/\alpha)}$ and $H^{(1/\alpha)}$. Then, it conducts an exhaustive search to find the most valuable subset $X_{\alg}$ of $S^{\acc}\cup O_H^{\acc}=\left(\bigcup_{j=1}^{1/\alpha} S^{(j)}\right)\cup\left(\bigcup_{j=1}^{1/\alpha-1} O_H^{(j)}\right)$ that belongs to matroid $\M$, and returns $X_{\alg}$ to the previous level. Then, at each level $k\in\left[\frac{1}{\alpha}-1\right]$, among all solution sets returned by the recursive calls, Algorithm~\ref{alg:recursive_continuous_greedy_filtering} identifies and returns the most valuable one to the previous level.
\vspace{0.5\baselineskip}
\end{paragraph}

\begin{paragraph}{Time and space complexity.} Now we analyze the time and space complexity of Algorithm~\ref{alg:recursive_continuous_greedy_filtering} (assuming that we set $k=1$, $S^{\acc}=\emptyset$, and $O_H^{\acc}=\emptyset$ in the initial call).
We first observe that at each level $k\in\left[\frac{1}{\alpha}\right]$, the sets $S^{(k)}$ and $H^{(k)}$, which are outputs of Subroutine~\ref{sub:continuous_greedy_filtering} with parameter $\alpha^2$, have sizes $|S^{(k)}|\le\frac{r}{\alpha^2}$ (because $S^{(k)}$ is either the union of $\frac{1}{\alpha^2}$ solution sets constructed in the second phase of Subroutine~\ref{sub:continuous_greedy_filtering} or an empty set) and $|H^{(k)}|=\cO\left(\frac{r\ln(r)^2}{\poly(\alpha)}\right)$ (because of the break condition at Line~\ref{algline:offline_break_condition} of Subroutine~\ref{sub:continuous_greedy_filtering}). From this, we observe the following:
\begin{enumerate}
\item[\textit{i.}] At each leaf node of the recursion tree of Algorithm~\ref{alg:recursive_continuous_greedy_filtering}, the accumulated set $S^{\acc}=\bigcup_{k=1}^{1/\alpha} S^{(k)}$ has size $|S^{\acc}|\le\frac{r}{\alpha^3}$ (because each set $S^{(k)}$ has size at most $\frac{r}{\alpha^2}$), and the accumulated set $O^{\acc}_H=\bigcup_{k=1}^{1/\alpha} O_H^{(k)}$ has size $|O^{\acc}_H|=\cO\left(\frac{r\ln(r)^2}{\poly(\alpha)}\right)$ (because $|O_H^{(k)}|\le|H^{(k)}|=\cO\left(\frac{r\ln(r)^2}{\poly(\alpha)}\right)$).
\item[\textit{ii.}] Each non-leaf node of the recursion tree has at most $2^{\cO\big(\frac{r\ln(r)^2}{\poly(\alpha)}\big)}$ child nodes (because each set $H^{(k)}$ has size $\cO\left(\frac{r\ln(r)^2}{\poly(\alpha)}\right)$), which implies that the total number of leaves is $2^{\cO\big(\frac{r\ln(r)^2}{\poly(\alpha)}\big)}$, since the recursion tree has $\frac{1}{\alpha}$ levels in total.
\end{enumerate}
Then, we notice that the runtime of each call to Subroutine~\ref{sub:continuous_greedy_filtering} at each node is $n\cdot 2^{\cO\left(r/\alpha^2\right)}$, where the $2^{\cO\left(r/\alpha^2\right)}$ factor accounts for the exact computation of the multi-linear extension (which, if desired, can be replaced by polynomial-time estimation using standard Monte-Carlo sampling). Moreover, the exhaustive search step at each leaf node takes $2^{\cO\big(\frac{r\ln(r)^2}{\poly(\alpha)}\big)}$ time, since $|S^{\acc}|+|O^{\acc}_H|=\cO\left(\frac{r\ln(r)^2}{\poly(\alpha)}\right)$. Hence, summing the runtime over all nodes, we obtain that the total runtime is $n\cdot 2^{\cO\big(\frac{r\ln(r)^2}{\poly(\alpha)}\big)}$. For the space complexity, we assume that Algorithm~\ref{alg:recursive_continuous_greedy_filtering} traverses the recursion tree in depth-first order. We notice that at any node in any level $\ell\in\left[\frac{1}{\alpha}\right]$ of the recursion tree, Algorithm~\ref{alg:recursive_continuous_greedy_filtering} only needs to store the sets $S^{(k)},H^{(k)},O_H^{(k)}$ for all $k\in[\ell]$ from its ancestor nodes and itself, and among $S^{(k)},H^{(k)},O_H^{(k)}$, the set $H^{(k)}$ has the largest size, which is $\cO\left(\frac{r\ln(r)^2}{\poly(\alpha)}\right)$. Therefore, the space complexity is $\cO\left(\frac{r\ln(r)^2}{\poly(\alpha)}\right)$.
\vspace{0.5\baselineskip}
\end{paragraph}

Next, in Theorem~\ref{thm:offline}, we show that Algorithm~\ref{alg:recursive_continuous_greedy_filtering} achieves a nearly $1-\nicefrac{1}{e}$ approximation.
At a high level, the argument is as follows: Consider the specific path in the recursion tree of Algorithm~\ref{alg:recursive_continuous_greedy_filtering}, as illustrated in Figure~\ref{fig:illustration_recursive}, but with depth $\frac{1}{\alpha}$ for arbitrarily small $\alpha$. Suppose that for all $k\in\left[\frac{1}{\alpha}\right]$, the marginal value of the set $O^{(k)}_H$ with respect to $\bigcup_{j=1}^{k-1} O^{(j)}_H$ is at least $\alpha$ fraction of the optimal value, i.e., at least $\alpha\cdot f(O^{(1)})$. Then, the union $\bigcup_{j=1}^{1/\alpha} O^{(j)}_H$, which is a candidate solution in the exhaustive search step at the leaf node, must capture the entire optimal value. On the other hand, if for any $k\in\left[\frac{1}{\alpha}\right]$, the marginal value of the set $O^{(k)}_H$ with respect to $\bigcup_{j=1}^{k-1} O^{(j)}_H$ is less than $\alpha$ fraction of the optimal value, then by submodularity, the set $O^{(1)}\setminus O^{(k)}_H$ should account for the majority of the optimal value. We can show that (using Theorem~\ref{thm:offline}) w.h.p., the union of the set $\bigcup_{j=1}^{k-1} O^{(j)}_H$ and a subset of the first output of Subroutine~\ref{sub:continuous_greedy_filtering} during the $k$-th call along the path, which is a candidate solution in the exhaustive search step, is a nearly $1-\nicefrac{1}{e}$ approximation to $O^{(1)}\setminus O^{(k)}_H$.
\begin{theorem}\label{thm:offline}
Given any input submodular function $f:2^{[n]}\to\R_{\ge0}$, matroid $\M\subseteq2^{[n]}$ with rank $r\in\N^+$, and parameter $\alpha\in\left(0,\frac{1}{2}\right)$ such that $\frac{1}{\alpha}\in\N^+$, by setting $S^{\acc}=\emptyset$, $O_H^{\acc}=\emptyset$, and $k=1$, Algorithm~\ref{alg:recursive_continuous_greedy_filtering} obtains a $1-\frac{1}{e}-7\alpha$ approximation with $n\cdot 2^{\cO\big(\frac{r\ln(r)^2}{\poly(\alpha)}\big)}$ runtime and $\cO\left(\frac{r\ln(r)^2}{\poly(\alpha)}\right)$ memory.
\end{theorem}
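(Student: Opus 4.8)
Let $O^{(1)}$ be an optimal solution of $(f,\M)$ and $\mathrm{OPT}:=f(O^{(1)})$; if $1-\frac{1}{e}-7\alpha\le 0$ there is nothing to prove, so assume $\alpha$ is small. The plan is to analyze the \emph{special path} of the recursion tree described before the theorem, along which, for each $k\in\left[\frac{1}{\alpha}-1\right]$, the level-$(k+1)$ call is the one produced by the iterate $O_H^{(k)}=O^{(k)}\cap H^{(k)}$ at Line~\ref{algline:recursion_for_loop}, where $O^{(k)}:=O^{(1)}\setminus\bigcup_{j<k}O_H^{(j)}$. First I would do the bookkeeping: write $P_{k-1}:=\bigcup_{j<k}O_H^{(j)}$ for the set $O_H^{\acc}$ received at level $k$. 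Since the $O_H^{(j)}$ are pairwise disjoint subsets of $O^{(1)}$, we get $O^{(k)}\cupdot P_{k-1}=O^{(1)}$, and hence: (a)~$O^{(k)}\in\M^{(k)}=\M/P_{k-1}$ and $f^{(k)}(O^{(k)})=f(O^{(1)})=\mathrm{OPT}$; (b)~$O^{(k)}_L\cup P_{k-1}=O^{(1)}\setminus O^{(k)}_H$ where $O^{(k)}_H:=O^{(k)}\cap H^{(k)}$; (c)~for any $X\subseteq S^{(k)}$ with $X\in\M^{(k)}$, the set $X\cup P_{k-1}$ lies in $\M$, has $f$-value $f^{(k)}(X)$, and is contained in the accumulated set $S^{\acc}\cup O_H^{\acc}$ at the leaf --- hence it is a candidate in the exhaustive search at Line~\ref{algline:offline_exhaustive_search} along the special path (and so is $O_H^{\acc}$ itself at the leaf).

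Next I would apply the key property of the subroutine at every level of the path. For each $k$, apply Theorem~\ref{thm:offline_subroutine} to $(f^{(k)},\M^{(k)})$ with parameter $\eps=\alpha^2$ (its hypotheses hold since $\alpha<\frac{1}{2}$ and $\frac{1}{\alpha}\in\N^+$), taking the \emph{feasible} set $O^{(k)}$ in place of the optimal solution --- this is legitimate because the proof of that theorem, via Lemma~\ref{lem:offline_regular_case}, uses only feasibility of the set, not its optimality. Let $\mathcal E_k$ be the event that the conclusion holds, i.e.\ there is $X^{(k)}\subseteq S^{(k)}$ with $X^{(k)}\in\M^{(k)}$ and
\[
f^{(k)}(X^{(k)})\ \ge\ \Bigl(1-\tfrac{1}{e}\Bigr)\bigl(f^{(k)}(O^{(k)}_L)-8\alpha^2\,\mathrm{OPT}\bigr)\ =\ \Bigl(1-\tfrac{1}{e}\Bigr)\bigl(f(O^{(1)}\setminus O^{(k)}_H)-8\alpha^2\,\mathrm{OPT}\bigr),
\]
using (a) and (b). Since the level-$k$ instance along the path is a deterministic function of the internal randomness of the earlier calls along the path, conditioning on that randomness keeps $\Pr[\mathcal E_k]\ge 1-2\alpha^2$; a union bound over the $\frac{1}{\alpha}$ levels of the path then shows that $\mathcal B:=\bigcap_k\mathcal E_k$ holds with probability at least $1-2\alpha$.

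The argument then splits into two cases, and the goal is to show that, conditioned on $\mathcal B$, the output of Algorithm~\ref{alg:recursive_continuous_greedy_filtering} has value at least $\bigl(1-\tfrac{1}{e}\bigr)(1-\alpha-8\alpha^2)\,\mathrm{OPT}$. If $f(O_H^{(k)}|P_{k-1})\ge\alpha\cdot\mathrm{OPT}$ for every $k\in\left[\frac{1}{\alpha}-1\right]$, then telescoping gives $f(O_H^{\acc})=f\bigl(P_{1/\alpha-1}\bigr)\ge\bigl(\tfrac{1}{\alpha}-1\bigr)\alpha\cdot\mathrm{OPT}=(1-\alpha)\mathrm{OPT}$, and since $O_H^{\acc}$ is a feasible candidate at the leaf, this case is done. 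Otherwise pick any $k$ with $f(O_H^{(k)}|P_{k-1})<\alpha\cdot\mathrm{OPT}$; since $P_{k-1}\subseteq O^{(1)}\setminus O^{(k)}_H$, submodularity gives $f(O_H^{(k)}|O^{(1)}\setminus O^{(k)}_H)\le f(O_H^{(k)}|P_{k-1})$, so $f(O^{(1)}\setminus O^{(k)}_H)=\mathrm{OPT}-f(O_H^{(k)}|O^{(1)}\setminus O^{(k)}_H)>(1-\alpha)\mathrm{OPT}$; feeding this into $\mathcal E_k$, the feasible candidate $X^{(k)}\cup P_{k-1}$ has $f$-value $f^{(k)}(X^{(k)})>\bigl(1-\tfrac{1}{e}\bigr)(1-\alpha-8\alpha^2)\mathrm{OPT}$. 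Combining both cases with $\Pr[\mathcal B]\ge 1-2\alpha$ yields $\E[f(X_{\alg})]\ge(1-2\alpha)\bigl(1-\tfrac{1}{e}\bigr)(1-\alpha-8\alpha^2)\,\mathrm{OPT}\ge\bigl(1-\tfrac{1}{e}-7\alpha\bigr)\mathrm{OPT}$, which is a routine estimate for $\alpha\in\bigl(0,\tfrac{1}{2}\bigr)$ (one in fact obtains $1-\tfrac{1}{e}-6\alpha$). The runtime and memory bounds were already established in the complexity discussion preceding the theorem.

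I expect the main difficulty to be the set-theoretic bookkeeping rather than any single inequality: keeping track of which sets are feasible in which contracted matroid $\M/P_{k-1}$, verifying the identity $O^{(k)}_L\cup P_{k-1}=O^{(1)}\setminus O^{(k)}_H$, checking that the claimed sets are genuinely enumerated by the leaf's exhaustive search, and --- the subtlest point --- accounting for the fact that the special path is itself a random object, so that the union bound for $\Pr[\mathcal B]$ must be carried out conditionally, level by level, together with the observation that Theorem~\ref{thm:offline_subroutine} is being invoked with the possibly non-optimal feasible set $O^{(k)}$ of the refined instance.
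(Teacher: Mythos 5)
Your proposal is correct and follows essentially the same strategy as the paper's proof: trace the special root-to-leaf path along which $O_H^{(k)}=O^{(k)}\cap H^{(k)}$, invoke Theorem~\ref{thm:offline_subroutine} at each level of that path, and argue by dichotomy that either the accumulated $O_H^{\acc}$ alone already captures a $(1-\alpha)$ fraction of $\mathrm{OPT}$, or some level $k$ has $f(O_H^{(k)}|P_{k-1})$ small and the corresponding candidate $X^{(k)}\cup P_{k-1}$ does the job. Your union bound over the $\nicefrac{1}{\alpha}$ levels plays the role of the paper's product of conditional probabilities (Ineq.~\eqref{eq:probability_of_joint_event}), and your two-case split is just a repackaging of the paper's ``minimum $\ell$'' argument in Step~3; the final arithmetic is the same up to constants.

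One small inaccuracy worth flagging, though it does no damage: you claim that applying Theorem~\ref{thm:offline_subroutine} with the possibly non-optimal feasible set $O^{(k)}$ ``is legitimate because the proof of that theorem \ldots uses only feasibility of the set, not its optimality.'' That is not quite right --- the proof of Lemma~\ref{lem:offline_regular_case} does rely on $v=\max_{e\in[n]}f(\{e\})\le f(O)$, which requires $O$ to be an optimal solution, not merely a feasible one (otherwise the rounding error terms $\tfrac{\eps^2 v}{1-\eps}$ would not be controlled by $\eps^2 f(O)$). In fact the concern is unnecessary: your own observation (a), namely $f^{(k)}(O^{(k)})=f(O^{(1)})=\mathrm{OPT}$, combined with the fact that every $X\in\M^{(k)}$ lifts to $X\cup P_{k-1}\in\M$ with $f^{(k)}(X)=f(X\cup P_{k-1})\le\mathrm{OPT}$, already shows that $O^{(k)}$ \emph{is} an optimal solution of $(f^{(k)},\M^{(k)})$. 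This is exactly what the paper establishes as its Step~1, after which Theorem~\ref{thm:offline_subroutine} applies verbatim and your argument goes through.
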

\begin{proof}
We let $O^{(1)}$ denote the optimal solution to the input instance $(f,\M)$. Throughout the proof, we focus on the unique root-to-leaf path in the recursion tree of Algorithm~\ref{alg:recursive_continuous_greedy_filtering} (as illustrated in Figure~\ref{fig:illustration_recursive}) such that for each $k\in\left[\frac{1}{\alpha}-1\right]$, the $(k+1)$-th call along the path is invoked by the iterate $O_H^{(k)}=O^{(k)}\cap H^{(k)}$ during the $k$-th call, where $O^{(k)}=O^{(1)}\setminus\left(\bigcup_{j=1}^{k-1}O_H^{(j)}\right)$. For simplicity, we will refer to the $k$-th call along this path simply as the ``$k$-th call''. For completeness, we also let $O^{(1/\alpha)}=O^{(1)}\setminus\left(\bigcup_{j=1}^{1/\alpha-1}O_H^{(j)}\right)$ and $O_H^{(1/\alpha)}=O^{(1/\alpha)}\cap H^{(1/\alpha)}$.

\subsubsection*{Step 1: For all $k\in\left[\frac{1}{\alpha}\right]$, $O^{(k)}$ is an optimal solution to $(f^{(k)},\M^{(k)})$}
First, we show that $O^{(k)}$ is an optimal solution to the instance $(f^{(k)},\M^{(k)})$ constructed in the $k$-th call. Specifically, recall that $\M^{(k)}=\M/O_H^{\acc}=\M/\left(\bigcup_{j=1}^{k-1}O_H^{(j)}\right)$. Because $O^{(1)}\in\M$ and $\bigcup_{j=1}^{k-1}O_H^{(j)}\subseteq O^{(1)}$, it follows that $O^{(k)}=O^{(1)}\setminus\left(\bigcup_{j=1}^{k-1}O_H^{(j)}\right)$ belongs to matroid $\M^{(k)}$, and thus, $O^{(k)}$ is a feasible solution to the instance $(f^{(k)},\M^{(k)})$. Moreover, since $f^{(k)}(X)=f\left(X\cup O_H^{\acc}\right)=f\left(X\cup \left(\bigcup_{j=1}^{k-1}O_H^{(j)}\right)\right)$ for all $X\subseteq[n]\setminus\left(\bigcup_{j=1}^{k-1}O_H^{(j)}\right)$, and $O^{(1)}$ is the optimal solution to the original input instance $(f,\M)$, it follows that $O^{(k)}=O^{(1)}\setminus\left(\bigcup_{j=1}^{k-1}O_H^{(j)}\right)$ is an optimal solution to the instance $(f^{(k)},\M^{(k)})$.

\subsubsection*{Step 2: For all $k\in\left[\frac{1}{\alpha}\right]$, $X_{\alg}^{(k)}$ is a nearly $1-\nicefrac{1}{e}$ approximation for $(f^{(k)},\M^{(k)})$, w.h.p.}
Then, for each $k\in\left[\frac{1}{\alpha}\right]$, we define the set $O_L^{(k)}:=O^{(k)}\setminus O_H^{(k)}$ (and thus, $O_L^{(k)}=O^{(k)}\setminus H^{(k)}$ since $O_H^{(k)}=O^{(k)}\cap H^{(k)}$), and we let $E_k$ be the following event:
\begin{itemize}
    \item Event $E_k$: There exists $X_{\alg}^{(k)}\subseteq S^{(k)}$ such that $X_{\alg}^{(k)}\in\M^{(k)}$ and
    $$f^{(k)}(X_{\alg}^{(k)})\ge \left(1-\frac{1}{e}\right)\cdot (f^{(k)}(O_L^{(k)})-8\alpha^2\cdot f^{(k)}(O^{k})),$$
    where $f^{(k)}$ is the input function to Subroutine~\ref{sub:continuous_greedy_filtering} and $S^{(k)}$ is the first output of Subroutine~\ref{sub:continuous_greedy_filtering} during the $k$-th call.
\end{itemize}
Since Subroutine~\ref{sub:continuous_greedy_filtering} is invoked with parameter $\alpha^2$ in Algorithm~\ref{alg:recursive_continuous_greedy_filtering}, and given that $\alpha\in\left(0,\frac{1}{2}\right)$, it follows by Theorem~\ref{thm:offline_subroutine} that $\Pr[E_1]\ge1-2\alpha^2$. Similarly, for each $k\in\left\{2,\dots,\frac{1}{\alpha}\right\}$, conditioned on the events $E_j$ for all $j\in[k-1]$, we have that $\Pr[E_k\mid \bigwedge\nolimits_{j\in\left[k-1\right]}E_j]\ge1-2\alpha^2$ by Theorem~\ref{thm:offline_subroutine} (because the approximation guarantee of Subroutine~\ref{sub:continuous_greedy_filtering} during the $k$-th call, as specified by event $E_k$, holds with probability at least $1-2\alpha^2$ by Theorem~\ref{thm:offline_subroutine}, regardless of the previous calls). Hence, we have that
\begin{equation}\label{eq:probability_of_joint_event}
\Pr\Bigl[\bigwedge\nolimits_{k\in\left[\frac{1}{\alpha}\right]}E_k\Bigr]=\prod\nolimits_{k\in\left[\frac{1}{\alpha}\right]}\Pr\Bigl[E_k \mathrel{\Big|} \bigwedge\nolimits_{j\in\left[k-1\right]}E_j\Bigr]\ge(1-2\alpha^2)^{\frac{1}{\alpha}}\ge1-2\alpha.
\end{equation}

\subsubsection*{Step 3: $X_{\alg}^{(\ell)}\cup\left(\bigcup_{j=1}^{\ell-1}O_H^{(j)}\right)$ is a nearly $1-\nicefrac{1}{e}$ approximation for $(f,\M)$, w.h.p.}
Next, we consider the minimum $\ell\in\left[\frac{1}{\alpha}\right]$ (note that $\ell$ is a random variable) such that
\begin{equation}\label{eq:minimum_l_with_less_than_alpha_opt}
f\Bigl(O_H^{(\ell)}\Big|\Bigl(\bigcup_{j=1}^{\ell-1}O_H^{(j)}\Bigr)\Bigr)\le\alpha\cdot f(O^{(1)}),
\end{equation}
which always exists since
\[
\sum_{k\in\left[\frac{1}{\alpha}\right]}f\Bigl(O_H^{(k)}\Big|\Bigl(\bigcup_{j=1}^{k-1}O_H^{(j)}\Bigr)\Bigr)=f\Bigl(\bigcup_{k=1}^{1/\alpha}O_H^{(k)}\Bigr)\le f(O^{(1)}),
\]
where the inequality follows because $\bigcup_{k=1}^{1/\alpha}O_H^{(k)}$ is a subset of $O^{(1)}$, and $O^{(1)}$ is the optimal solution to the original input instance $(f,\M)$. Then, we derive that
\begin{align}\label{eq:1-alpha_opt}
f(O^{(1)})&=f\Bigl(O^{(\ell)}\cup\Bigl(\bigcup_{j=1}^{\ell-1}O_H^{(j)}\Bigr)\Bigr) &&\text{(Since $\textstyle O^{(\ell)}=O^{(1)}\setminus\left(\bigcup_{j=1}^{\ell-1}O_H^{(j)}\right)$)}\nonumber\\
&=f\Bigl(O_L^{(\ell)}\cup\Bigl(\bigcup_{j=1}^{\ell-1}O_H^{(j)}\Bigr)\Bigr)+f\Bigl(O_H^{(\ell)}\Big|O_L^{(\ell)}\cup\Bigl(\bigcup_{j=1}^{\ell-1}O_H^{(j)}\Bigr)\Bigr) &&\text{(Since $O_L^{(\ell)}=O^{(\ell)}\setminus O_H^{(\ell)}$)}\nonumber\\
&\le f\Bigl(O_L^{(\ell)}\cup\Bigl(\bigcup_{j=1}^{\ell-1}O_H^{(j)}\Bigr)\Bigr)+f\Bigl(O_H^{(\ell)}\Big|\bigcup_{j=1}^{\ell-1}O_H^{(j)}\Bigr) &&\text{(By submodularity)}\nonumber\\
&\le f\Bigl(O_L^{(\ell)}\cup\Bigl(\bigcup_{j=1}^{\ell-1}O_H^{(j)}\Bigr)\Bigr)+\alpha\cdot f(O^{(1)}) &&\text{(By Ineq.~\eqref{eq:minimum_l_with_less_than_alpha_opt})}.
\end{align}

Now consider the solution set $X_{\alg}^{(\ell)}\cup\left(\bigcup_{j=1}^{\ell-1}O_H^{(j)}\right)$ (recall that $X_{\alg}^{(\ell)}$ is from the definition of the event $E_{\ell}$), which belongs to $\M$ because $X_{\alg}^{(\ell)}\in\M^{(\ell)}$ and $\M^{(\ell)}=\M/\left(\bigcup_{j=1}^{\ell-1}O_H^{(j)}\right)$. Moreover, since $X_{\alg}^{(\ell)}\subseteq S^{(\ell)}$, we have that $X_{\alg}^{(\ell)}\cup\left(\bigcup_{j=1}^{\ell-1}O_H^{(j)}\right)\subseteq S^{(\ell)}\cup\left(\bigcup_{j=1}^{\ell-1}O_H^{(j)}\right)$, and thus, $X_{\alg}^{(\ell)}\cup\left(\bigcup_{j=1}^{\ell-1}O_H^{(j)}\right)$ is a candidate solution in the exhaustive search step at Line~\ref{algline:offline_exhaustive_search} during the final $\frac{1}{\alpha}$-th call. Conditioned on the joint event $\bigwedge\nolimits_{k\in\left[\frac{1}{\alpha}\right]}E_i$, we lower bound its value as follows,
\begin{align*}
&\,f\Bigl(X_{\alg}^{(\ell)}\cup\Bigl(\bigcup_{j=1}^{\ell-1}O_H^{(j)}\Bigr)\Bigr)\\
=&\,f^{(\ell)}(X_{\alg}^{(\ell)}) &&\text{(By definition of $f^{(\ell)}$)}\\
\ge&\,\left(1-\frac{1}{e}\right)\cdot (f^{(\ell)}(O_L^{(\ell)})-8\alpha^2\cdot f^{(\ell)}(O^{(\ell)})) &&\text{(By event $E_{\ell}$)}\\
=&\,\left(1-\frac{1}{e}\right)\cdot \Bigl(f\Bigl(O_L^{(\ell)}\cup\Bigl(\bigcup_{j=1}^{\ell-1}O_H^{(j)}\Bigr)\Bigr)-8\alpha^2\cdot f(O^{(1)})\Bigr) &&\text{(By definition of $f^{(\ell)}$ and $O^{(\ell)}$)}\\
\ge&\,\left(1-\frac{1}{e}\right)\cdot ((1-\alpha)\cdot f(O^{(1)})-8\alpha^2\cdot f(O^{(1)})) &&\text{(By Ineq.~\eqref{eq:1-alpha_opt})}\\
\ge&\,\left(1-\frac{1}{e}\right)\cdot(1-5\alpha)\cdot f(O^{(1)}) &&\text{(Since $\textstyle\alpha<\frac{1}{2}$)}\\
\ge&\,\left(1-\frac{1}{e}-5\alpha\right)\cdot f(O^{(1)}).
\end{align*}
Then, Theorem~\ref{thm:offline} follows by taking into account the probability of the joint event $\bigwedge\nolimits_{k\in\left[\frac{1}{\alpha}\right]}E_k$, which is at least $1-2\alpha$ by Ineq.~\eqref{eq:probability_of_joint_event}
\end{proof}

\bibliography{cite}

\appendix
\section{Supplementary proofs}\label{sec:supplementary_proofs}
\subsection{Proof of Lemma~\ref{lem:H_i_setminus_H_i-1}}\label{sec:proof_of_lem_H_i_setminus_H_i-1}
\begin{proof}[Proof of Lemma~\ref{lem:H_i_setminus_H_i-1}]
\subsubsection*{Case 1: $i=1$}
We first prove the lemma for $i=1$, in which case Ineq.~\eqref{eq:H_i_setminus_H_i-1} reduces to
\begin{equation}
\Pr\left[|H_1| > \frac{r\ln(r/\eps)}{\eps}\right]\le\frac{\eps}{r},
\end{equation}
because, by definition, we have that $H_0=\emptyset$, $k_0=0$ and $k_1=1$ (the first selector is always effective). Recall that $H_1=G_1$ by definition of $H_1$, and $$G_1=\{e\in[n]\setminus (V_0\cup V_1)\mid \floor{f(\{e\})}_I>\floor{f(\{s_1\})}_I\},$$
by definition of $G_1$ in Eq.~\eqref{eq:G_i}. Thus, $H_1$ consists of elements that appear after the first window $V_1$ in the second phase of Algorithm~\ref{alg:greedy_filtering} and have a strictly higher (rounded) singleton value than the most valuable element $s_1$ in $V_1$. Therefore, $|H_1|>\frac{r\ln(r/\eps)}{\eps}$ holds only if none of the top $\ceil{\frac{r\ln(r/\eps)}{\eps}}$ elements in $[n]\setminus V_0$ with the highest singleton values appear in $V_1$. Because $V_1$ contains more than a uniformly random $\frac{\eps}{r}$ fraction of elements in $[n]\setminus V_0$, the probability that this happens is at most $(1-\frac{\eps}{r})^{\ceil{\frac{r\ln(r/\eps)}{\eps}}}\le\frac{\eps}{r}$. Hence, we have that $\Pr\big[|H_1|>\frac{r\ln(r/\eps)}{\eps}\big]\le\frac{\eps}{r}$.
\subsubsection*{Case 2: $i\in\{2,\dots,r\}$}
Now we prove the lemma for any $i\in\{2,\dots,r\}$. Recall that $S_{\ell}=\{s_1,\dots,s_{\ell}\}$ for each $\ell\in[r]$ and $S_0=\emptyset$ in Algorithm~\ref{alg:greedy_filtering}. We let $T_i^{(1)}$ be 
the set of elements that appear after the $(i-1)$-th window and can be added to set $S_{i-1}$ without violating the matroid constraint, whose (rounded) marginal values with respect to $S_{i-1}$ are no less than $\min_{j\in[i-1]}\floor{f(\{s_j\}|S_{j-1})}_{I}$, i.e.,
\[\textstyle
    T_i^{(1)}:=\{e\in[n]\setminus(\bigcup_{j=0}^{i-1} V_j)\mid S_{i-1}\cup\{e\}\in\M \textrm{ and } \floor{f(\{e\}|S_{i-1})}_{I}\ge\min_{j\in[i-1]}\floor{f(\{s_j\}|S_{j-1})}_{I}\}
\]
Then, we let $T_i^{(2)}$ be the set of elements that appear after the $(i-1)$-th window and can be added to set $S_{i-1}$ without violating the matroid constraint, excluding $T_i^{(1)}$, i.e.,
\[\textstyle
T_i^{(2)}:=\{e\in [n]\setminus(\bigcup_{j=0}^{i-1} V_j)\mid S_{i-1}\cup\{e\}\in\M\}\setminus T_i^{(1)}.
\]
Moreover, we define two events $A_i^{(1)}$ and $A_i^{(2)}$ as follows:
\begin{itemize}
    \item Event $A_i^{(1)}$: None of the elements in $T_i^{(1)}$ appear in the $i$-th window $V_i$.
    \item Event $A_i^{(2)}$: None of the top $\ceil{\frac{r\ln(r/\eps)}{\eps}}-|T_i^{(1)}|$ elements in $T_i^{(2)}$, with the highest marginal values with respect to $S_{i-1}$, appear in the $i$-th window $V_i$. (If $\ceil{\frac{r\ln(r/\eps)}{\eps}}-|T_i^{(1)}|\le 0$ or $\ceil{\frac{r\ln(r/\eps)}{\eps}}-|T_i^{(1)}|>|T_i^{(2)}|$, we assume that this event holds trivially.)
\end{itemize}
Next, we show that conditioned on $|H_{i-1}| \le k_{i-1}\cdot \frac{r\ln(r/\eps)}{\eps}$, the event $|H_i| > k_i\cdot \frac{r\ln(r/\eps)}{\eps}$ implies events $A_i^{(1)}$ and $A_i^{(2)}$.

\subsubsection*{Step 1: Conditioned on $|H_{i-1}| \le k_{i-1}\cdot \frac{r\ln(r/\eps)}{\eps}$, the event $|H_i| > k_i\cdot \frac{r\ln(r/\eps)}{\eps}$ implies $A_i^{(1)}$}
We first prove that conditioned on $|H_{i-1}| \le k_{i-1}\cdot \frac{r\ln(r/\eps)}{\eps}$, if event $A_i^{(1)}$ does not occur, then the event $|H_i| > k_i\cdot \frac{r\ln(r/\eps)}{\eps}$ cannot occur either. Specifically, if $A_i^{(1)}$ does not occur, then there is an element $e_1\in T_i^{(1)}$ that appears in the $i$-th window $V_i$. Given how element $s_i$ is chosen at Line~\ref{algline:streaming_greedy_selection} of Algorithm~\ref{alg:greedy_filtering}, $s_i$ must satisfy that $\floor{f(\{s_{i}\}\mid S_{i-1})}_I\ge\floor{f(\{e_1\}\mid S_{i-1})}_I$. Moreover, since $e_1\in T_i^{(1)}$, it follows from the definition of $T_i^{(1)}$ that $\floor{f(\{e_1\}\mid S_{i-1})}_I\ge\min_{j\in[i-1]}\floor{f(\{s_j\}|S_{j-1})}_{I}$. Therefore, we have that $\floor{f(\{s_{i}\}\mid S_{i-1})}_I\ge\min_{j\in[i-1]}\floor{f(\{s_j\}|S_{j-1})}_{I}$, which implies that $s_i$ is an ineffective selector.
Because $s_i$ is an ineffective selector, we have that $k_i=k_{i-1}$ by definition of $k_i$ and $k_{i-1}$, and that $H_i=H_{i-1}$ by Claim~\ref{claim:ineffective_selector}. Conditioned on $|H_{i-1}| \le k_{i-1}\cdot \frac{r\ln(r/\eps)}{\eps}$, this implies that $|H_i| \le k_i\cdot \frac{r\ln(r/\eps)}{\eps}$, and hence, the event $|H_i| > k_i\cdot \frac{r\ln(r/\eps)}{\eps}$ cannot occur.

\subsubsection*{Step 2: Conditioned on $|H_{i-1}| \le k_{i-1}\cdot \frac{r\ln(r/\eps)}{\eps}$, the event $|H_i| > k_i\cdot \frac{r\ln(r/\eps)}{\eps}$ implies $A_i^{(2)}$}
Now we show that conditioned on $|H_{i-1}| \le k_{i-1}\cdot \frac{r\ln(r/\eps)}{\eps}$, if the event $|H_i| > k_i\cdot \frac{r\ln(r/\eps)}{\eps}$ occurs, then event $A_i^{(2)}$ also occurs (we assume w.l.o.g.~that $\ceil{\frac{r\ln(r/\eps)}{\eps}}-|T_i^{(1)}|\ge1$, since otherwise $A_i^{(2)}$ holds trivially by definition). Specifically, conditioned on $|H_{i-1}| \le k_{i-1}\cdot \frac{r\ln(r/\eps)}{\eps}$, if the event $|H_i| > k_i\cdot \frac{r\ln(r/\eps)}{\eps}$ occurs, then it must hold that $k_i=k_{i-1}+1$ (because otherwise $k_i=k_{i-1}$, and $s_i$ would be an ineffective selector, which would imply that $|H_i|=|H_{i-1}|\le k_{i-1}\cdot \frac{r\ln(r/\eps)}{\eps}=k_i\cdot \frac{r\ln(r/\eps)}{\eps}$ by Claim~\ref{claim:ineffective_selector}), and hence, $|H_i\setminus H_{i-1}|=|H_i|-|H_{i-1}|>\frac{r\ln(r/\eps)}{\eps}$. By definition of $H_i$ and $H_{i-1}$ in Eq.~\eqref{eq:H_i}, we have that $H_i\setminus H_{i-1}=G_i\setminus(\bigcup_{j=1}^{i-1} G_j)$, which implies that $|G_i|\ge|H_i\setminus H_{i-1}|>\frac{r\ln(r/\eps)}{\eps}$. This in turn implies that $|G_i\setminus T_i^{(1)}|\ge|G_i|-|T_i^{(1)}|>\frac{r\ln(r/\eps)}{\eps}-|T_i^{(1)}|$.

Note that by definition of $G_i$ (Eq.~\eqref{eq:G_i}) and $T_i^{(1)},T_i^{(2)}$, the set $G_i\setminus T_i^{(1)}$ is a subset of elements in $T_i^{(2)}$ that have higher (rounded) marginal values with respect to $S_{i-1}$ than element $s_i$. Since $|G_i\setminus T_i^{(1)}|>\frac{r\ln(r/\eps)}{\eps}-|T_i^{(1)}|$, there must be \emph{strictly} more than $\frac{r\ln(r/\eps)}{\eps}-|T_i^{(1)}|$ elements in $T_i^{(2)}$ that have higher (rounded) marginal values with respect to $S_{i-1}$ than element $s_i$. Next, we derive a contradiction to this, assuming event $A_i^{(2)}$ does not occur.

Suppose for contradiction that $A_i^{(2)}$ does not occur. That is, one of the top $\ceil{\frac{r\ln(r/\eps)}{\eps}}-|T_i^{(1)}|$ elements in $T_i^{(2)}$ with the highest marginal values with respect to $S_{i-1}$, which we denote by $e_2\in T_i^{(2)}$, appears in the $i$-th window $V_i$. Given how element $s_i$ is chosen at Line~\ref{algline:streaming_greedy_selection} of Algorithm~\ref{alg:greedy_filtering}, $s_i$ must have at least the same (rounded) marginal value with respect to $S_{i-1}$ as element $e_2$. Hence, the (rounded) marginal value of element $s_i$ is no less than that of any element in $T_i^{(2)}$, except for the other top $\ceil{\frac{r\ln(r/\eps)}{\eps}}-|T_i^{(1)}|-1$ elements (excluding $e_2$), which is a contradiction.

\subsubsection*{Step 3: Proving Ineq.~\eqref{eq:H_i_setminus_H_i-1} under the additional condition $\ceil{\frac{r\ln(r/\eps)}{\eps}}>|T_i^{(1)}|+|T_i^{(2)}|$}
Next, we prove that Ineq.~\eqref{eq:H_i_setminus_H_i-1} holds trivially if we additionally condition on the event $\ceil{\frac{r\ln(r/\eps)}{\eps}}>|T_i^{(1)}|+|T_i^{(2)}|$. Specifically, we show that
\begin{equation}\label{eq:H_i_setminus_H_i-1_not_enough_feasible_elements}
\Pr\left[|H_i| \le k_i\cdot \frac{r\ln(r/\eps)}{\eps} \,\middle\vert\, |H_{i-1}| \le k_{i-1}\cdot \frac{r\ln(r/\eps)}{\eps},\, \ceil{\frac{r\ln(r/\eps)}{\eps}}>|T_i^{(1)}|+|T_i^{(2)}|\right]=1.
\end{equation}

To this end, we notice that conditioned on $\ceil{\frac{r\ln(r/\eps)}{\eps}}>|T_i^{(1)}|+|T_i^{(2)}|$, the set $T_i^{(1)}\cup T_i^{(2)}$, which consists of all elements that appear after the $(i-1)$-th window and can be added to $S_{i-1}$ without violating the matroid constraint, has size at most $\ceil{\frac{r\ln(r/\eps)}{\eps}}-1\le\frac{r\ln(r/\eps)}{\eps}$. By definition of $G_i$ in Eq.~\eqref{eq:G_i}, we have that $G_i\subseteq T_i^{(1)}\cup T_i^{(2)}$, and hence, $|G_i|\le\frac{r\ln(r/\eps)}{\eps}$. This implies that $|H_i\setminus H_{i-1}|\le\frac{r\ln(r/\eps)}{\eps}$ since $H_i\setminus H_{i-1}\subseteq G_i$. Conditioned on $|H_{i-1}| \le k_{i-1}\cdot \frac{r\ln(r/\eps)}{\eps}$, it follows that $|H_i|=|H_i\setminus H_{i-1}|+|H_{i-1}|\le (k_{i-1}+1)\cdot \frac{r\ln(r/\eps)}{\eps}$, which implies the event $|H_i|\le k_i\cdot \frac{r\ln(r/\eps)}{\eps}$ if $k_i=k_{i-1}+1$. If instead $k_i=k_{i-1}$, then element $s_i$ is an ineffective selector, and hence, by Claim~\ref{claim:ineffective_selector}, we have that $|H_i|=|H_{i-1}|$, which also implies the event $|H_i|\le k_i\cdot \frac{r\ln(r/\eps)}{\eps}$, conditioned on $|H_{i-1}| \le k_{i-1}\cdot \frac{r\ln(r/\eps)}{\eps}$. Thus, Ineq.~\eqref{eq:H_i_setminus_H_i-1_not_enough_feasible_elements} follows.

\subsubsection*{Step 4: Proving Ineq.~\eqref{eq:H_i_setminus_H_i-1} under the additional condition $\ceil{\frac{r\ln(r/\eps)}{\eps}}\le|T_i^{(1)}|+|T_i^{(2)}|$}
Finally, we prove that Ineq.~\eqref{eq:H_i_setminus_H_i-1} also holds if we additionally condition on $\ceil{\frac{r\ln(r/\eps)}{\eps}}\le|T_i^{(1)}|+|T_i^{(2)}|$.
Since we have shown that conditioned on $|H_{i-1}| \le k_{i-1}\cdot \frac{r\ln(r/\eps)}{\eps}$, the event $|H_i| > k_i\cdot \frac{r\ln(r/\eps)}{\eps}$ implies events $A_i^{(1)}$ and $A_i^{(2)}$, it suffices to show that $$\Pr\left[A_i^{(1)}\wedge A_i^{(2)}\,\middle\vert\, \left(\forall\,j\in\{0,\dots,i-1\},\,|H_j|\le k_j\cdot \frac{r\ln(r/\eps)}{\eps}\right),\,\ceil{\frac{r\ln(r/\eps)}{\eps}}\le|T_i^{(1)}|+|T_i^{(2)}|\right]\le\frac{\eps}{r}.$$
Moreover, observe that the numbers $k_1,\dots,k_{i-1}$, the sets $H_1,\dots,H_{i-1}$, $T_{i}^{(1)}$ and $T_{i}^{(2)}$, and the events $A_{i}^{(1)}$ and $A_{i}^{(2)}$ are fully determined by the first $i-1$ windows $V_0,\dots,V_{i-1}$ in Algorithm~\ref{alg:greedy_filtering}. Hence, it suffices to prove that $\Pr[A_i^{(1)}\wedge A_i^{(2)} \mid V_0,\dots,V_{i-1}]\le\frac{\eps}{r}$, for any fixed $V_0,\dots,V_{i-1}$ that are consistent with the condition $\ceil{\frac{r\ln(r/\eps)}{\eps}}\le|T_i^{(1)}|+|T_i^{(2)}|$.

To this end, we consider any $V_0,\dots,V_{i-1}$ that satisfy the condition $\ceil{\frac{r\ln(r/\eps)}{\eps}}\le|T_i^{(1)}|+|T_i^{(2)}|$. We notice that since the $i$-th window $V_i$ contains more than a uniformly random $\frac{\eps}{r}$ fraction of elements in $[n]\setminus(\bigcup_{j=0}^{i-1} V_j)$, the probability that none of the elements in $T_i^{(1)}$ appear in $V_i$ is at most $(1-\frac{\eps}{r})^{|T_i^{(1)}|}$, namely,
\begin{equation}\label{eq:A_i_1_probability}
\Pr\left[A_i^{(1)} \,\middle\vert\, V_0,\dots,V_{i-1}\right]\le\left(1-\frac{\eps}{r}\right)^{|T_i^{(1)}|}.
\end{equation}
If $V_0,\dots,V_{i-1}$ further satisfy the condition $\ceil{\frac{r\ln(r/\eps)}{\eps}}-|T_i^{(1)}|\ge1$, then it holds that $1\le\ceil{\frac{r\ln(r/\eps)}{\eps}}-|T_i^{(1)}|\le|T_i^{(2)}|$, and hence, the top $\ceil{\frac{r\ln(r/\eps)}{\eps}}-|T_i^{(1)}|$ elements in $T_i^{(2)}$, with the highest marginal values with respect to $S_{i-1}$, are well-defined. The probability that none of these top $\ceil{\frac{r\ln(r/\eps)}{\eps}}-|T_i^{(1)}|$ elements of $T_i^{(2)}$ appear in $V_i$ is at most $(1-\frac{\eps}{r})^{\ceil{\frac{r\ln(r/\eps)}{\eps}}-|T_i^{(1)}|}$, and this probability can only decrease if we additionally condition on event $A_i^{(1)}$. Hence, when $V_0,\dots,V_{i-1}$ further satisfy $\ceil{\frac{r\ln(r/\eps)}{\eps}}-|T_i^{(1)}|\ge1$, we have that 
\begin{equation}\label{eq:A_i_2_probability}
\Pr\left[A_i^{(2)} \,\middle\vert\, V_0,\dots,V_{i-1} \textrm{ and } A_i^{(1)}\right]\le\left(1-\frac{\eps}{r}\right)^{\ceil{\frac{r\ln(r/\eps)}{\eps}}-|T_i^{(1)}|}.
\end{equation}
On the other hand, if $V_0,\dots,V_{i-1}$ do not satisfy $\ceil{\frac{r\ln(r/\eps)}{\eps}}-|T_i^{(1)}|\ge1$, which implies that $\ceil{\frac{r\ln(r/\eps)}{\eps}}-|T_i^{(1)}|\le0$, then Ineq.~\eqref{eq:A_i_2_probability} holds trivially. Thus, Ineq.~\eqref{eq:A_i_2_probability} holds regardless of whether $V_0,\dots,V_{i-1}$ satisfy $\ceil{\frac{r\ln(r/\eps)}{\eps}}-|T_i^{(1)}|\ge1$. Finally, we derive that
\begin{align*}
    &\Pr\left[A_i^{(1)}\wedge A_i^{(2)} \,\middle\vert\, V_0,\dots,V_{i-1}\right]\\
    =&\Pr\left[A_i^{(1)}\,\middle\vert\, V_0,\dots,V_{i-1}\right]\times
    \Pr\left[A_i^{(2)}\,\middle\vert\, V_0,\dots,V_{i-1} \textrm{ and } A_i^{(1)}\right]\\
    \le& \left(1-\frac{\eps}{r}\right)^{|T_i^{(1)}|}\times
    \Pr\left[A_i^{(2)}\,\middle\vert\, V_0,\dots,V_{i-1} \textrm{ and } A_i^{(1)}\right] &&\text{(By Ineq.~\eqref{eq:A_i_1_probability})}\\
    \le& \left(1-\frac{\eps}{r}\right)^{|T_i^{(1)}|}\times\left(1-\frac{\eps}{r}\right)^{\ceil{\frac{r\ln(r/\eps)}{\eps}}-|T_i^{(1)}|} &&\text{(By Ineq.~\eqref{eq:A_i_2_probability})}\\
    =& \left(1-\frac{\eps}{r}\right)^{\ceil{\frac{r\ln(r/\eps)}{\eps}}}\le\frac{\eps}{r},
\end{align*}
which finishes the proof of Lemma~\ref{lem:H_i_setminus_H_i-1}.
\end{proof}

\subsection{Proof of Lemma~\ref{lem:streaming_regular_case}}\label{sec:proof_of_lem_streaming_regular_case}
\begin{proof}[Proof of Lemma~\ref{lem:streaming_regular_case}]
\subsubsection*{The proof setup}
We begin by introducing the notations that will be used in the analysis and deriving their relations using basic structural properties of matroids. First, we denote $O_{\final}:=O\cap V_{\final}$ and consider the restriction $\M_{S_r\cup O_{\final}}$ of matroid $\M$ to set $S_r\cup O_{\final}$. Because the restriction $\M_{S_r\cup O_{\final}}$ is a matroid, we can augment $S_r\in\M_{S_r\cup O_{\final}}$ with some subset $O_S\subseteq O_{\final}$ such that $S_r\cupdot O_S$ is a base of $\M_{S_r\cup O_{\final}}$, and similarly, we can augment $O_{\final}\in\M_{S_r\cup O_{\final}}$ with some subset $S_O\subseteq S_r$ such that $O_{\final}\cupdot S_O$ is a base of $\M_{S_r\cup O_{\final}}$.

Then, we partition the set $O_{\final}$ into two disjoint subsets: $O_L:=O_{\final}\setminus H$ and $O_H:=O_{\final}\cap H$. That is, $O_H$ is the subset of elements from $O_{\final}$ that are included in $H$ by Algorithm~\ref{alg:greedy_filtering}, and $O_L$ is the subset of elements from $O_{\final}$ that are not included in $H$ (importantly, conditioned on event $E_2$ defined in Lemma~\ref{lem:streaming_unlikely_case}, these elements were not added to $H$ because they did not satisfy the filter condition at Line~\ref{algline:streaming_filter_condition} of Algorithm~\ref{alg:greedy_filtering}, not because of the break condition at Line~\ref{algline:streaming_break_condition}). We denote $O_L':=O_L\setminus O_S$ and $O_H':=O_H\setminus O_S$. Consequently, we have the following relation:
\begin{equation*}
O_S\cup O_H'=((O_S\cap O_L)\cup(O_S\cap O_H))\cup O_H'=(O_S\cap O_L)\cup O_H,
\end{equation*}
and thus, $O_{\final}=O_L\cup O_H=O_S\cupdot O_L'\cupdot O_H'$ can be decomposed as follows,
\begin{equation}\label{eq:O_final_decomposition}
O_{\final}=O_L'\cup (O_S\cup O_H')=O_L'\cup (O_S\cap O_L)\cup O_H.
\end{equation}
Moreover, the base $O_{\final}\cupdot S_O$ of matroid $\M_{S_r\cup O_{\final}}$ can now be expressed as $O_S\cupdot O_L'\cupdot O_H'\cupdot S_O$. We note that the other base $S_r\cupdot O_S$ of matroid $\M_{S_r\cup O_{\final}}$ can be written as $O_S\cupdot(S_r\setminus S_O)\cupdot S_O$.

Now we consider the contraction $\M_{S_r\cup O_{\final}}/(O_S\cupdot S_O)$ of matroid $\M_{S_r\cup O_{\final}}$. Notice that both $O_L'\cupdot O_H'$ and $S_r\setminus S_O$ are bases of matroid $\M_{S_r\cup O_{\final}}/(O_S\cupdot S_O)$. Hence, by Lemma~\ref{lem:base_exchange}, there exists a partition $S_r\setminus S_O=S_L\cupdot S_H$ such that both $S_L\cupdot O_H'$ and $O_L'\cupdot S_H$ are bases of $\M_{S_r\cup O_{\final}}/(O_S\cupdot S_O)$, which implies that both $O_S\cupdot S_L\cupdot O_H'\cupdot S_O$ and $O_S\cupdot O_L'\cupdot S_H\cupdot S_O$ are bases of $\M_{S_r\cup O_{\final}}$.

In particular, since both $O_S\cupdot O_L'\cupdot S_H\cupdot S_O$ and $O_S\cupdot S_L\cupdot S_H\cupdot S_O=O_S\cupdot S_r$ are bases of $\M_{S_r\cup O_{\final}}$, by Lemma~\ref{lem:base_matching}, there exists a bijection $h:S_L\to O_L'$ such that for all $s\in S_L$,
\[
\textrm{$((O_S\cupdot S_L\cupdot S_H\cupdot S_O)\setminus\{s\})\cup\{h(s)\}$ is a base of $\M_{S_r\cup O_{\final}}$.}
\]
Since $S_L\cupdot S_H\cupdot S_O=S_r$, it follows that $(S_r\setminus\{s\})\cup\{h(s)\}\in\M_{S_r\cup O_{\final}}$ for all $s\in S_L$, which implies that for all $s\in S_L$,
\begin{equation}\label{eq:h_maps_S_L_to_O_L'}
(S_r\setminus\{s\})\cup\{h(s)\}\in\M.
\end{equation}
(To help the reader keep track of the notations, we illustrate the relations between $O_S\cupdot O_L'\cupdot O_H'\cupdot S_O$ and $O_S\cupdot S_L\cupdot S_H\cupdot S_O$, and between $O_S\cupdot S_L\cupdot S_H\cupdot S_O$ and $O_S\cupdot O_L'\cupdot S_H\cupdot S_O$ in Figure~\ref{fig:illustration_streaming}.)

\begin{figure}[ht]
    \centering
    \includegraphics[scale=0.3]{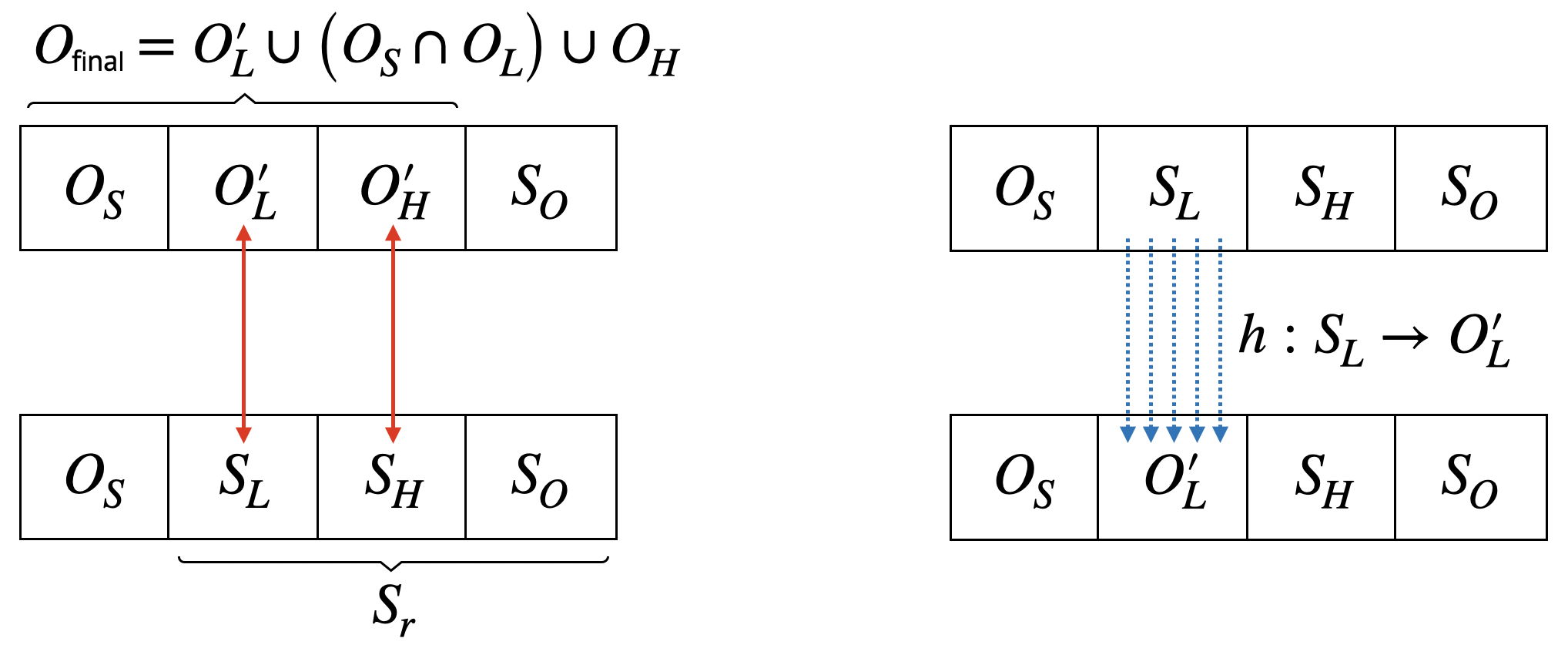}
    \caption{On the left, both $O_S\cupdot O_L'\cupdot O_H'\cupdot S_O$ and $O_S\cupdot S_L\cupdot S_H\cupdot S_O$ are bases of matroid $\M_{S_r\cup O_{\final}}$, and we can exchange $O_L'$ with $S_L$, or $O_H'$ with $S_H$, resulting in new bases. On the right, both $O_S\cupdot S_L\cupdot S_H\cupdot S_O$ and $O_S\cupdot O_L'\cupdot S_H\cupdot S_O$ are bases of matroid $\M_{S_r\cup O_{\final}}$, and there is a bijection $h:S_L\to O_L'$ such that for all $s\in S_L$, replacing element $s$ in $O_S\cupdot S_L\cupdot S_H\cupdot S_O$ with element $h(s)$ yields a new base. Finally, we remind the reader that $O_{\final}=O_S\cupdot O_L'\cupdot O_H'$ and $S_r=S_L\cupdot S_H\cupdot S_O$, and $O_{\final}$ can be also be decomposed as $O_L'\cup (O_S\cap O_L)\cup O_H$ by Eq.~\eqref{eq:O_final_decomposition}.}
    \label{fig:illustration_streaming}
\end{figure}

Furthermore, since we assume that $\max_{e\in [n]}f(\{e\})< \frac{r}{\eps}\cdot\min_{e\in T}f(\{e\})$ and condition on event $E_1$ in the lemma statement, we have that $w\ge\max_{e\in[n]}f(\{e\})$ by Lemma~\ref{lem:streaming_w_upper_bound_max_singleton_value}. Thus, we will treat $w$ as an upper bound on the singleton value of any element throughout the proof.

Next, we upper bound the marginal values of two subsets of the optimal solution, $O_S\cap O_L$ and $O_L'$, with respect to the solution set $S_r$.

\subsubsection*{Step 1: Upper bounding $f(O_S\cap O_L|S_r)$}
We consider two cases $O_S\cap O_L\neq\emptyset$ and $O_S\cap O_L=\emptyset$ and prove Ineq.~\eqref{eq:O_S_cap_O_L} for both cases.
\begin{equation}\label{eq:O_S_cap_O_L}
f(O_S\cap O_L|S_r)\le\frac{\eps^2 w}{(1-\eps)r}.
\end{equation}
\paragraph{Case 1: $O_S\cap O_L\neq\emptyset$.} Because $S_r\cupdot O_S$ is a base of $\M_{S_r\cup O_{\final}}$, we have that $S_r\cupdot O_S\in\M$. Because $O_S$ is non-empty and $\M$ has rank $r$, it follows that $|S_r|<r$. Given how $S_r$ is constructed in Algorithm~\ref{alg:greedy_filtering}, if $|S_r|<r$, then there must exist $i\in\{2,\dots,r\}$ for which the algorithm sets $s_i$ to $s_1$. It follows that $f(\{s_i\}|S_{i-1})=f(\{s_1\}|S_{i-1})=0$.

Moreover, for any element $o\in O_S\cap O_L$, since $S_r\cupdot O_S\in\M$, element $o$ can be added to $S_r$ without violating the matroid constraint, which implies that $S_{i-1}\cup\{o\}\in\M$. However, since $o\in O_L$ (and hence $o\notin H$), element $o$ was not selected by the selector $s_i$, despite $S_{i-1}\cup\{o\}\in\M$. Therefore, it must hold that $\floor{f(\{o\}|S_{i-1})}_I
\le\floor{f(\{s_i\}|S_{i-1})}_I=\floor{0}_I$. Since $f(\{o\}|S_{i-1})\le f(\{o\})\le w$ by submodularity, we can apply Claim~\ref{claim:streaming_rounding_error} (by setting $a$, $b$, $\gamma$ to be $0$, $f(\{o\}|S_{i-1})$, $\frac{\eps^2}{r^2}$, respectively), and we obtain that $f(\{o\}|S_{i-1})\le\frac{\eps^2 w}{(1-\eps)r^2}$. Hence, we can upper bound $f(O_S\cap O_L|S_r)$ as follows,
\begin{align*}
f(O_S\cap O_L|S_r)&\le\sum_{o\in O_S\cap O_L}f(\{o\}|S_r) &&\text{(By submodularity)}\nonumber\\
&\le\sum_{o\in O_S\cap O_L}f(\{o\}|S_{i-1}) &&\text{(By submodularity)}\nonumber\\
&\le|O_S\cap O_L|\cdot\frac{\eps^2 w}{(1-\eps)r^2}\le r\cdot\frac{\eps^2 w}{(1-\eps)r^2}=\frac{\eps^2 w}{(1-\eps)r}.
\end{align*}
\paragraph{Case 2: $O_S\cap O_L=\emptyset$.} We note that Ineq.~\eqref{eq:O_S_cap_O_L} holds trivially if $O_S\cap O_L=\emptyset$.

\subsubsection*{Step 2: Upper bounding $f(O_L'|S_r)$}
We denote $\ell:=|S_L|$. Since $S_L\subseteq S_r$ and given how $S_r$ is constructed in Algorithm~\ref{alg:greedy_filtering}, the elements in $S_L$ are $s_{i_1},\dots,s_{i_{\ell}}$ for some indices $i_1<i_2<\cdots<i_{\ell}$ in $[r]$. Then, we let $S_{L,\,j}:=\{s_{i_1},\dots,s_{i_j}\}$ for all $j\in[\ell]$, and let $S_{L,0}:=\emptyset$. We observe that for all $j\in[\ell]$, it holds that $S_{L,\,j-1}\subseteq S_{i_j-1}$ because $i_{j-1}\le i_j-1$. Recall that, as shown in Eq.~\eqref{eq:h_maps_S_L_to_O_L'}, there is a bijection $h:S_L\to O_L'$ such that $(S_r\setminus\{s_{i_j}\})\cup\{h(s_{i_j})\}\in\M$ for all $j\in[\ell]$, which implies that for all $j\in[\ell]$, $(S_{i_j}\setminus\{s_{i_j}\})\cup\{h(s_{i_j})\}\in\M$, since $S_{i_j}\subseteq S_r$. This is equivalent to $S_{i_j-1}\cup\{h(s_{i_j})\}\in\M$, since $S_{i_j-1}=S_{i_j}\setminus\{s_{i_j}\}$.

Furthermore, for any $j\in[\ell]$, since $h(s_{i_j})\in O_L'$ (and hence $h(s_{i_j})\notin H$), element $h(s_{i_j})$ was not selected by the selector $s_{i_j}$, despite $S_{i_j-1}\cup\{h(s_{i_j})\}\in\M$. Thus, it must hold that $\floor{f(\{h(s_{i_j})\}|S_{i_j-1})}_I\le\floor{f(\{s_{i_j}\}|S_{i_j-1})}_I$. Note that $f(\{s_{i_j}\}|S_{i_j-1})\ge0$, because element $s_{i_j}$ should have at least the same marginal value as element $s_1$, by Line~\ref{algline:streaming_greedy_selection} of Algorithm~\ref{alg:greedy_filtering}. Moreover, we have that $f(\{h(s_{i_j})\}|S_{i_j-1})\le f(\{h(s_{i_j})\})\le w$ by submodularity. Hence, we can apply Claim~\ref{claim:streaming_rounding_error} (by setting $a$, $b$, $\gamma$ to be $f(\{s_{i_j}\}|S_{i_j-1})$, $f(\{h(s_{i_j})\}|S_{i_j-1})$, $\frac{\eps^2}{r^2}$, respectively), and we obtain that
\begin{equation}\label{eq:s_i_j_vs_o_i_j}
f(\{s_{i_j}\}|S_{i_j-1})\ge(1-\eps)\cdot f(\{h(s_{i_j})|S_{i_j-1})-\frac{\eps^2w}{r^2}.
\end{equation}
Now we lower bound $f(S_L)$ as follows,
\begin{align*}
f(S_L)&=\sum_{j=1}^{\ell} f(\{s_{i_j}\}|S_{L,\,j-1}) &&\text{(By telescoping sum)}\nonumber\\
&\ge\sum_{j=1}^{\ell} f(\{s_{i_j}\}|S_{i_j-1}) &&\text{(By $S_{L,\,j-1}\subseteq S_{i_j-1}$ and submodularity)}\nonumber\\
&\ge\sum_{j=1}^{\ell} (1-\eps)\cdot f(\{h(s_{i_j})\}|S_{i_j-1})-\ell\cdot\frac{\eps^2w}{r^2} &&\text{(By Ineq.~\eqref{eq:s_i_j_vs_o_i_j})}\nonumber\\
&\ge\sum_{o\in O_L'} (1-\eps)\cdot f(\{o\}|S_{i_j-1})-\ell\cdot\frac{\eps^2w}{r^2} &&\text{(Since $h:S_L\to O_L'$ is a bijection)}\nonumber\\
&\ge (1-\eps)\cdot f(O_L'|S_{i_j-1})-\ell\cdot\frac{\eps^2w}{r^2} &&\text{(By submodularity)}\nonumber\\
&\ge (1-\eps)\cdot f(O_L'|S_r)-\ell\cdot\frac{\eps^2w}{r^2} &&\text{(By submodularity)}\nonumber\\
&\ge (1-\eps)\cdot f(O_L'|S_r)-\frac{\eps^2w}{r} &&\text{(Since $\ell\le r$)},
\end{align*}
which implies that
\begin{equation}\label{eq:S_L_vs_O_L'}
    f(O_L'|S_r)\le \frac{f(S_L)}{1-\eps}+\frac{\eps^2w}{(1-\eps)r}.
\end{equation}

\subsubsection*{Step 3: One of the three sets $S_L,S_r,S_L\cup O_H$ is a good solution}
Next, we upper bound $f(O_{\final}\cup S_r)$ using solution values of $S_L,S_r,S_L\cup O_H$, as follows,
\begin{align}\label{eq:upper_bound_O_final_cup_S_r}
f(O_{\final}\cup S_r)&=f(O_L'\cup (O_S\cap O_L) \cup O_H\cup S_r)\nonumber\\
&\qquad\text{(By Eq.~\eqref{eq:O_final_decomposition})}\nonumber\\
&=f(O_L'\cup (O_S\cap O_L)|S_r\cup O_H)+f(S_r\cup O_H)\nonumber\\
&=f(O_L'\cup (O_S\cap O_L)|S_r\cup O_H)+f(S_O\cup S_L\cup S_H\cup O_H)\nonumber\\
&\qquad\textrm{(Since $S_r=S_O\cupdot S_L\cupdot S_H$)}\nonumber\\
&=f(O_L'\cup (O_S\cap O_L)|S_r\cup O_H)+f(S_O\cup S_H|S_L\cup O_H)+f(S_L\cup O_H)\nonumber\\
&\le f(O_L'\cup (O_S\cap O_L)|S_r)+f(S_O\cup S_H|S_L)+f(S_L\cup O_H)\nonumber\\
&\qquad\text{(By submodularity)}\nonumber\\
&\le f(O_L'|S_r)+f(O_S\cap O_L|S_r)+f(S_O\cup S_H|S_L)+f(S_L\cup O_H)\nonumber\\
&\qquad\text{(By submodularity)}\nonumber\\
&\le \frac{f(S_L)}{1-\eps}+\frac{2\eps^2w}{(1-\eps)r}+f(S_O\cup S_H|S_L)+f(S_L\cup O_H)\nonumber\\
&\qquad\text{(By Ineq.~\eqref{eq:O_S_cap_O_L} and Ineq.~\eqref{eq:S_L_vs_O_L'})}\nonumber\\
&\le \frac{f(S_L)}{1-\eps}+\frac{4\eps^2w}{r}+f(S_O\cup S_H|S_L)+f(S_L\cup O_H)\nonumber\\
&\qquad\text{(Since $\textstyle\eps<\frac{1}{2}$)}\nonumber\\
&= \frac{\eps\cdot f(S_L)}{1-\eps}+f(S_O\cup S_H\cup S_L)+f(S_L\cup O_H)+\frac{4\eps^2w}{r}\nonumber\\
&=\frac{\eps\cdot f(S_L)}{1-\eps}+f(S_r)+f(S_L\cup O_H)+\frac{4\eps^2w}{r}\nonumber\\
&\qquad\textrm{(Since $S_r=S_O\cupdot S_L\cupdot S_H$)}\nonumber\\
&=\frac{2-\eps}{1-\eps}\cdot\left(\frac{\eps}{2-\eps}f(S_L)+\frac{1-\eps}{2-\eps}\cdot f(S_r)+\frac{1-\eps}{2-\eps}\cdot f(S_L\cup O_H)\right)+\frac{4\eps^2w}{r}.
\end{align}
Moreover, because $S_r$ is a subset of elements constructed by Algorithm~\ref{alg:greedy_filtering} in the $r$ windows $\bigcup_{i=1}^r V_i$, conditioned on event $E_3$ in the lemma statement, we have that $f(S_r|O_{\final})\ge f(S_r|O)\ge-\eta\cdot f(O)$ (where the first inequality is by submodularity), and thus, $f(O_{\final}\cup S_r)\ge f(O_{\final})-\eta\cdot f(O)$. Combining this with Ineq.~\eqref{eq:upper_bound_O_final_cup_S_r}, we obtain that
\[
\frac{\eps}{2-\eps}f(S_L)+\frac{1-\eps}{2-\eps}\cdot f(S_r)+\frac{1-\eps}{2-\eps}\cdot f(S_L\cup O_H)\ge\frac{1-\eps}{2-\eps}\cdot\left(f(O_{\final})-\eta\cdot f(O)-\frac{4\eps^2w}{r}\right).
\]
Recall that $w=\frac{r}{\eps}\cdot \max_{e\in V_0}f(\{e\})$ in Algorithm~\ref{alg:greedy_filtering}, and hence, $w\le\frac{r}{\eps}\cdot f(O)$. It follows that
\begin{align*}
&\frac{\eps}{2-\eps}f(S_L)+\frac{1-\eps}{2-\eps}\cdot f(S_r)+\frac{1-\eps}{2-\eps}\cdot f(S_L\cup O_H)\\
\ge&\,\frac{1-\eps}{2-\eps}\cdot(f(O_{\final})-\eta\cdot f(O)-4\eps\cdot f(O))\\
\ge&\left(\frac{1}{2}-\eps\right)\cdot(f(O_{\final})-\eta\cdot f(O)-4\eps\cdot f(O)) &&\text{(Since $\textstyle0<\eps<\frac{1}{2}$)}\\
\ge&\left(\frac{1}{2}-\eps\right)\cdot (f(O_{\final})-5\eta\cdot f(O)) &&\text{(Since $\textstyle\eta=\sqrt{\frac{2\eps n+2r}{n}}\ge\eps$)}.
\end{align*}
Therefore, one of the sets $S_L,S_r,S_L\cup O_H$ must have a value at least $\left(\frac{1}{2}-\eps\right)\cdot (f(O_{\final})-5\eta\cdot f(O))$.
We finish the proof by noticing that $S_L$, $S_r$, and $S_L\cup O_H$ are all feasible solutions and are considered in the final exhaustive search of Algorithm~\ref{alg:greedy_filtering}.
\end{proof}

\subsection{Proof of Theorem~\ref{thm:streaming}}\label{sec:proof_of_thm_streaming}
\begin{proof}[Proof of Theorem~\ref{thm:streaming}]
We assume w.l.o.g.~that $\max_{e\in [n]}f(\{e\})< \frac{r}{\eps}\cdot\min_{e\in T}f(\{e\})$, since otherwise Theorem~\ref{thm:streaming} follows immediately from Lemma~\ref{lem:streaming_easy_case}. Recall that $V_{\final}=\{\pi(i) \mid i=\ceil{\eps n}+r\cdot\ceil{\frac{\eps n}{r}}+1,\dots,n\}$, and hence, $|V_{\final}|=n-\ceil{\eps n}-r\cdot\ceil{\frac{\eps n}{r}}\ge n-(\eps n+1)-r(\frac{\eps n}{r}+1)\ge n-2\eps n-2r$. Since $V_{\final}$ contains exactly $\frac{|V_{\final}|}{n}$ fraction of elements in the random stream $\pi$, every element in $[n]$ appears in $V_{\final}$ with probability exactly $\frac{|V_{\final}|}{n}\ge1-2\eps-\frac{2r}{n}$. It follows by Lemma~\ref{lem:subsample_exactly} that
\begin{equation}\label{eq:O_cap_V_final_vs_O}
\E[f(O\cap V_{\final})]\ge\frac{|V_{\final}|}{n}\cdot f(O)\ge\left(1-2\eps-\frac{2r}{n}\right)\cdot f(O).
\end{equation}

Now let $E_1$ and $E_2$ be the events defined in Lemma~\ref{lem:streaming_w_upper_bound_max_singleton_value} and Lemma~\ref{lem:H_i_setminus_H_i-1}. By Lemma~\ref{lem:streaming_w_upper_bound_max_singleton_value} and Lemma~\ref{lem:H_i_setminus_H_i-1}, we have that $\Pr[E_1]\ge1-\eps$ and $\Pr[E_2]\ge1-\eps$. Moreover, let $E_3$ be the event that $f(X|O)\ge-\eta \cdot f(O)$ for all $X\subseteq\bigcup_{i=1}^r V_i$, with $\eta=\sqrt{\frac{2\eps n+2r}{n}}$, as defined in Lemma~\ref{lem:streaming_regular_case}. Notice that $|\bigcup_{i=1}^r V_i|=\ceil{\eps n}+r\cdot\ceil{\frac{\eps n}{r}}\le2\eps n+2r$ in Algorithm~\ref{alg:greedy_filtering}, and thus, in the random stream $\pi$, the probability that an element $e\in[n]$ appears in $\bigcup_{i=1}^r V_i$ is at most $\frac{2\eps n+2r}{n}$. It follows from Claim~\ref{claim:eps_fraction_does_not_hurt} that $\Pr[E_3]\ge1-\frac{2\eps n+2r}{n\cdot \eta}=\eta$. By a union bound, we have that
\begin{equation}\label{eq:union_bound_E1_E2_E3}
\Pr[E_1\wedge E_2\wedge E_3]\ge 1-2\eps-\eta,
\end{equation}
and we derive that
\begin{align}\label{eq:O_cap_V_final_conditioned}
\E[f(O\cap V_{\final})]&=\Pr[E_1\wedge E_2\wedge E_3]\times \E[f(O\cap V_{\final})\mid E_1\wedge E_2\wedge E_3]+\Pr[\overline{E_1\wedge E_2\wedge E_3}]\nonumber\\
&\quad\times \E[f(O\cap V_{\final})\mid \overline{E_1\wedge E_2\wedge E_3}]\nonumber\\
&\le\Pr[E_1\wedge E_2\wedge E_3]\times \E[f(O\cap V_{\final})\mid E_1\wedge E_2\wedge E_3]\nonumber\\
&\quad+\left(2\eps+\eta\right)\cdot f(O)\qquad\qquad\qquad\qquad\qquad\text{(By $f(O\cap V_{\final})\le f(O)$ and Ineq.~\eqref{eq:union_bound_E1_E2_E3})}\nonumber\\
&\le\E[f(O\cap V_{\final})\mid E_1\wedge E_2\wedge E_3]+\left(2\eps+\eta\right)\cdot f(O).
\end{align}
Combining Ineq.~\eqref{eq:O_cap_V_final_vs_O} with Ineq.~\eqref{eq:O_cap_V_final_conditioned}, we obtain that
\begin{align}\label{eq:O_cap_V_final_vs_O_conditioned}
\E[f(O\cap V_{\final})\mid E_1\wedge E_2\wedge E_3]&\ge\left(1-4\eps-\frac{2r}{n}-\eta\right)\cdot f(O)\nonumber\\
&\ge\left(1-5\eta\right)\cdot f(O),
\end{align}
where the last inequality is because $\textstyle2\eps$ and $\frac{r}{n}$ (both of which are at most $1$) are upper bounded by $\sqrt{\frac{2\eps n+2r}{n}}=\eta$. Then, by Lemma~\ref{lem:streaming_regular_case}, we have that
\begin{align}\label{eq:X_alg_conditioned_lower_bound}
\E[f(X_{\alg})\mid E_1\wedge E_2\wedge E_3]\ge&\left(\frac{1}{2}-\eps\right)\cdot(\E[f(O\cap V_{\final})\mid E_1\wedge E_2\wedge E_3]- 5\eta\cdot f(O))\nonumber\\
\ge&\left(\left(\frac{1}{2}-\eps\right)\cdot(1-5\eta-5\eta)\right)\cdot f(O)\quad\quad\text{(By Ineq.~\eqref{eq:O_cap_V_final_vs_O_conditioned})}\nonumber\\
\ge&\left(\frac{1}{2}-\eps-5\eta\right)\cdot f(O)\nonumber\\
\ge&\left(\frac{1}{2}-6\eta\right)\cdot f(O)\qquad\qquad\qquad\qquad\quad\,\,\,\text{(Since $\textstyle\eta=\sqrt{\frac{2\eps n+2r}{n}}\ge\eps$)}.
\end{align}
Finally, we derive that
\begin{align*}
\E[f(X_{\alg})]=&\Pr[E_1\wedge E_2\wedge E_3]\times\E[f(X_{\alg})\mid E_1\wedge E_2\wedge E_3]\\
&\quad+\Pr[\overline{E_1\wedge E_2\wedge E_3}]\times\E[f(X_{\alg})\mid \overline{E_1\wedge E_2\wedge E_3}]\\
\ge&\Pr[E_1\wedge E_2\wedge E_3]\times\E[f(X_{\alg})\mid E_1\wedge E_2\wedge E_3] &&\text{(Since $f(X_{\alg})\ge0$)}\\
\ge&\textstyle\left(1-2\eps-\eta\right)\cdot\left(\frac{1}{2}-6\eta\right)\cdot f(O) &&\text{(By Ineq.~\eqref{eq:union_bound_E1_E2_E3} and~\eqref{eq:X_alg_conditioned_lower_bound})}\\
\ge&\textstyle\left(1-3\eta\right)\cdot\left(\frac{1}{2}-6\eta\right)\cdot f(O) &&\text{(Since $\textstyle\eta=\sqrt{\frac{2\eps n+2r}{n}}\ge\eps$)}\\
\ge&\textstyle\left(\frac{1}{2}-8\eta\right)\cdot f(O),
\end{align*}
which finishes the proof of Theorem~\ref{thm:streaming}.
\end{proof}

\subsection{Proof of Lemma~\ref{lem:offline_break_condition}}\label{sec:proof_of_lem_offline_break_condition}
We start by introducing important concepts that will be used throughout the proof. Each vector $\Delta_i^t=\mathbf{1}_{S_i^t}-\mathbf{1}_{S_{i-1}^t}$ constructed in the second phase of Subroutine~\ref{sub:continuous_greedy_filtering} acts as a \emph{selector} in the third phase: for each element $e\in[n]$, Subroutine~\ref{sub:continuous_greedy_filtering} will include $e$ in set $H$ if it satisfies the selection condition posed by $\Delta_i^t$, i.e., if $S_{i-1}^t\cup\{e\}\in\M$ and $\smallfloor{F(\eps\cdot\mathbf{1}_e|x^{t-1}+\eps\cdot\mathbf{1}_{S_{i-1}^t})}_I>\smallfloor{F(\eps\cdot\Delta_{i}^t|x^{t-1}+\eps\cdot\mathbf{1}_{S_{i-1}^t})}_I$ (we say that an element $e\in[n]$ is \emph{selected} by the selector $\Delta_i^t$ if it satisfies this condition). For each $t\in\left[\frac{1}{\eps}\right]$ and $i\in[r]$, we let $G_i^t$ denote the set of elements in $[n]$ that are selected by $\Delta_i^t$, i.e.,
\begin{equation}\label{eq:G_i_t}
\textstyle G_i^t:=\{e\in[n]\mid S_{i-1}^t\cup\{e\}\in\M \textrm{ and } \smallfloor{F(\eps\cdot\mathbf{1}_e|x^{t-1}+\eps\cdot\mathbf{1}_{S_{i-1}^t})}_I>\smallfloor{F(\eps\cdot\Delta_{i}^t|x^{t-1}+\eps\cdot\mathbf{1}_{S_{i-1}^t})}_I\}.
\end{equation}

Within each epoch $t\in\left[\frac{1}{\eps}\right]$, we order the selectors according to their indices: $\Delta_1^t,\dots,\Delta_r^t$, and for each $i\in[r]$, we define
\begin{equation}\label{eq:H_i_t}\textstyle
H_i^t:=\bigcup_{j=1}^{i} G_j^t \textrm{ (and we let $H_0^t:=\emptyset$ for completeness)}.
\end{equation}
We say that a selector $\Delta_i^t$ is \emph{ineffective} if there is an earlier selector $\Delta_j^t$ for some $j<i$ in epoch $t$ such that $\smallfloor{F(\eps\cdot\Delta_{i}^t|x^{t-1}+\eps\cdot\mathbf{1}_{S_{i-1}^t})}_I\ge\smallfloor{F(\eps\cdot\Delta_{j}^t|x^{t-1}+\eps\cdot\mathbf{1}_{S_{j-1}^t})}_I$ (otherwise we call $\Delta_i^t$ an \emph{effective} selector). In Claim~\ref{claim:subroutine_ineffective_selector}, we show that, as the name suggests, any element that is selected by an ineffective selector $\Delta_i^t$ would have already been selected by an earlier selector $\Delta_j^t$ for some $j<i$, which implies that $H_i^t=H_{i-1}^t$.
\begin{claim}\label{claim:subroutine_ineffective_selector}
Suppose that $\Delta_i^t$ is an ineffective selector, i.e., there exists some $j<i$ such that $\smallfloor{F(\eps\Delta_{i}^t|x^{t-1}+\eps\mathbf{1}_{S_{i-1}^t})}_I\ge\smallfloor{F(\eps\Delta_{j}^t|x^{t-1}+\eps\mathbf{1}_{S_{j-1}^t})}_I$. Then, any element $e\in[n]$ that is selected by $\Delta_i^t$ is also selected by $\Delta_j^t$. In particular, this implies that $H_i^t=H_{i-1}^t$.
\end{claim}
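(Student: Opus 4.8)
The plan is to follow the proof of Claim~\ref{claim:ineffective_selector} almost verbatim, replacing the discrete function $f$ by the multi-linear extension $F$ and the integral solutions and thresholds by their fractional counterparts in Subroutine~\ref{sub:continuous_greedy_filtering}, with the simplification that in the offline setting $G_i^t$ carries no ``appears after the $i$-th window'' restriction. Fix an epoch $t$ and let $j<i$ witness that $\Delta_i^t$ is ineffective. Since within epoch $t$ the sets $S_0^t,S_1^t,\dots,S_r^t$ are non-decreasing (each $S_\ell^t$ equals either $S_{\ell-1}^t$ or $S_{\ell-1}^t\cup\{s_\ell^t\}$), we have $S_{j-1}^t\subseteq S_{i-1}^t$, and hence $x^{t-1}+\eps\cdot\mathbf{1}_{S_{j-1}^t}\le x^{t-1}+\eps\cdot\mathbf{1}_{S_{i-1}^t}$ coordinatewise.

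I would first dispatch matroid feasibility: from $S_{j-1}^t\cup\{e\}\subseteq S_{i-1}^t\cup\{e\}$ and downward closure of matroids (Definition~\ref{def:matroid}), $S_{i-1}^t\cup\{e\}\in\M$ forces $S_{j-1}^t\cup\{e\}\in\M$. For the rounded marginal-value comparison, I would use submodularity of $f$ in the form that the marginal $F(\eps\cdot\mathbf{1}_e\mid y)$ of the multi-linear extension is non-increasing in $y$ (which follows from multi-linearity, since $\partial F/\partial x_e$ does not depend on $x_e$, together with submodularity, since $\partial F/\partial x_e$ is non-increasing in the other coordinates), to conclude that $F(\eps\cdot\mathbf{1}_e\mid x^{t-1}+\eps\cdot\mathbf{1}_{S_{j-1}^t})\ge F(\eps\cdot\mathbf{1}_e\mid x^{t-1}+\eps\cdot\mathbf{1}_{S_{i-1}^t})$, an inequality preserved by the monotone rounding operator $\smallfloor{\cdot}_I$. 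Chaining $\smallfloor{F(\eps\cdot\mathbf{1}_e\mid x^{t-1}+\eps\cdot\mathbf{1}_{S_{j-1}^t})}_I\ge\smallfloor{F(\eps\cdot\mathbf{1}_e\mid x^{t-1}+\eps\cdot\mathbf{1}_{S_{i-1}^t})}_I$ with the selection condition of $\Delta_i^t$ (a strict inequality against $\smallfloor{F(\eps\cdot\Delta_i^t\mid x^{t-1}+\eps\cdot\mathbf{1}_{S_{i-1}^t})}_I$) and then with the ineffectiveness hypothesis ($\smallfloor{F(\eps\cdot\Delta_i^t\mid x^{t-1}+\eps\cdot\mathbf{1}_{S_{i-1}^t})}_I\ge\smallfloor{F(\eps\cdot\Delta_j^t\mid x^{t-1}+\eps\cdot\mathbf{1}_{S_{j-1}^t})}_I$) yields $\smallfloor{F(\eps\cdot\mathbf{1}_e\mid x^{t-1}+\eps\cdot\mathbf{1}_{S_{j-1}^t})}_I>\smallfloor{F(\eps\cdot\Delta_j^t\mid x^{t-1}+\eps\cdot\mathbf{1}_{S_{j-1}^t})}_I$. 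Together with the feasibility just shown, $e$ is selected by $\Delta_j^t$.

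For the ``in particular'' part, the first part gives $G_i^t\subseteq G_j^t$, and since $j\le i-1$ we have $G_j^t\subseteq\bigcup_{\ell=1}^{i-1}G_\ell^t=H_{i-1}^t$, so $G_i^t\subseteq H_{i-1}^t$ and therefore $H_i^t=G_i^t\cup H_{i-1}^t=H_{i-1}^t$. I expect no genuine obstacle here; the only point needing a word of justification beyond transcribing the proof of Claim~\ref{claim:ineffective_selector} is the monotonicity of the marginal $F(\eps\cdot\mathbf{1}_e\mid\cdot)$ of the multi-linear extension, which is standard.
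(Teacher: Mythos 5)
Your proposal is correct and matches the paper's own proof essentially step for step: downward closure of the matroid for feasibility, monotonicity of $F(\eps\mathbf{1}_e\mid\cdot)$ (via DR-submodularity of the multi-linear extension) plus monotonicity of $\floor{\cdot}_I$ for the threshold comparison, chaining with the ineffectiveness hypothesis, and the identical set-containment argument for the ``in particular'' part. The only cosmetic difference is that you spell out why the marginal of the multi-linear extension is coordinatewise non-increasing, which the paper asserts directly from submodularity.
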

\begin{proof}
We prove the first part of the claim by showing that $S_{i-1}^t\cup\{e\}\in\M$ implies $S_{j-1}^t\cup\{e\}\in\M$, and that $\smallfloor{F(\eps\mathbf{1}_e|x^{t-1}+\eps\mathbf{1}_{S_{i-1}^t})}_I>\smallfloor{F(\eps\Delta_{i}^t|x^{t-1}+\eps\mathbf{1}_{S_{i-1}^t})}_I$ implies $\smallfloor{F(\eps\mathbf{1}_e|x^{t-1}+\eps\mathbf{1}_{S_{j-1}^t})}_I>\smallfloor{F(\eps\Delta_{j}^t|x^{t-1}+\eps\mathbf{1}_{S_{j-1}^t})}_I$.

First, since $\M$ is a matroid and $S_{j-1}^t\cup\{e\}\subseteq S_{i-1}^t\cup\{e\}$, the condition $S_{i-1}^t\cup\{e\}\in\M$ implies the condition $S_{j-1}^t\cup\{e\}\in\M$.

Moreover, because $F$ is the multi-linear extension of submodular function $f$ and $S_{j-1}^t\subseteq S_{i-1}^t$, it holds that $F(\eps\mathbf{1}_e|x^{t-1}+\eps\mathbf{1}_{S_{j-1}^t})\ge F(\eps\mathbf{1}_e|x^{t-1}+\eps\mathbf{1}_{S_{i-1}^t})$, which, by monotonicity of the rounding operator $\floor{\cdot}_I$ (Definition~\ref{def:rounding_op}), implies that $\smallfloor{F(\eps\mathbf{1}_e|x^{t-1}+\eps\mathbf{1}_{S_{j-1}^t})}_I\ge\smallfloor{F(\eps\mathbf{1}_e|x^{t-1}+\eps\mathbf{1}_{S_{i-1}^t})}_I$. Thus, the condition $\smallfloor{F(\eps\mathbf{1}_e|x^{t-1}+\eps\mathbf{1}_{S_{i-1}^t})}_I>\smallfloor{F(\eps\Delta_{i}^t|x^{t-1}+\eps\mathbf{1}_{S_{i-1}^t})}_I$ implies $\smallfloor{F(\eps\mathbf{1}_e|x^{t-1}+\eps\mathbf{1}_{S_{j-1}^t})}_I>\smallfloor{F(\eps\Delta_{i}^t|x^{t-1}+\eps\mathbf{1}_{S_{i-1}^t})}_I$. Since we assume that $\smallfloor{F(\eps\Delta_{i}^t|x^{t-1}+\eps\mathbf{1}_{S_{i-1}^t})}_I\ge\smallfloor{F(\eps\Delta_{j}^t|x^{t-1}+\eps\mathbf{1}_{S_{j-1}^t})}_I$, it follows that $\smallfloor{F(\eps\mathbf{1}_e|x^{t-1}+\eps\mathbf{1}_{S_{j-1}^t})}_I>\smallfloor{F(\eps\Delta_{j}^t|x^{t-1}+\eps\mathbf{1}_{S_{j-1}^t})}_I$.

Now we prove the second part of the claim: $H_i^t=H_{i-1}^t$. Recall that every element $e\in G_i^t$ satisfies the selection condition of $\Delta_{i}^t$, which implies that element $e$ also satisfies the selection condition of $\Delta_{j}^t$ (by the first part of the claim). It follows that every element $e\in G_i^t$ belongs to $G_j^t$. Hence, we have that $G_i^t\subseteq G_j^t\subseteq H_{i-1}^t$, since $H_{i-1}^t=\bigcup_{\ell=1}^{i-1} G_{\ell}^t$ and $j\le i-1$. Finally, $H_i^t=H_{i-1}^t$ follows from $H_i^t=G_i^t\cup H_{i-1}^t$ and $G_i^t\subseteq H_{i-1}^t$.
\end{proof}

Next, we establish Lemma~\ref{lem:H_i_t_setminus_H_i-1_t} before proving Lemma~\ref{lem:offline_break_condition}. Lemma~\ref{lem:H_i_t_setminus_H_i-1_t} is an analogue of Lemma~\ref{lem:H_i_setminus_H_i-1} and shares a similar intuition and proof structure.
\begin{lemma}\label{lem:H_i_t_setminus_H_i-1_t}
For each $t\in\left[\frac{1}{\eps}\right]$ and $i\in[r]$, let $k_i^t$ denote the number of effective selectors among the first $i$ selectors $\Delta_{1}^t,\dots,\Delta_{i}^t$ in epoch $t$ of Subroutine~\ref{sub:continuous_greedy_filtering}, and let $k_0^t=0$. Then, for all $t\in\left[\frac{1}{\eps}\right]$ and $i\in[r]$, we have that
\begin{equation}\label{eq:H_i_t_setminus_H_i-1_t}
\Pr\left[|H_i^t| > k_i^t\cdot \frac{r\ln(r/\eps^2)}{\eps^3} \,\middle\vert\, \forall\,j\in\{0,\dots,i-1\},\,|H_j^t| \le k_j^t\cdot \frac{r\ln(r/\eps^2)}{\eps^3}\right]\le\frac{\eps^2}{r}.
\end{equation}
\end{lemma}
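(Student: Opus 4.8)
The plan is to transcribe the proof of Lemma~\ref{lem:H_i_setminus_H_i-1} into the offline setting of Subroutine~\ref{sub:continuous_greedy_filtering} via three substitutions: the marginal $f(\{\cdot\}\mid S_{\cdot})$ is replaced by the ``marginal'' $F(\eps\cdot\mathbf{1}_{\cdot}\mid x^{t-1}+\eps\cdot\mathbf{1}_{S_{\cdot}^t})$ used by the subroutine; the windows of the random stream are replaced by the independently subsampled sets $V_i^t$; and the sampling rate $\eps/r$ is replaced by $\eps^3/r$. I would work inside a fixed epoch $t$ and condition on all randomness of epochs $1,\dots,t-1$ (which fixes $x^{t-1}$) together with the samples $V_1^t,\dots,V_{i-1}^t$ of epoch $t$ (which fixes $S_0^t,\dots,S_{i-1}^t$, the selectors $\Delta_1^t,\dots,\Delta_{i-1}^t$, and hence $k_1^t,\dots,k_{i-1}^t$ and $H_0^t,\dots,H_{i-1}^t$); relative to this conditioning, $V_i^t$ is still a fresh $\mathrm{Bernoulli}(\eps^3/r)$ sample. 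Writing $N:=\frac{r\ln(r/\eps^2)}{\eps^3}$, the only arithmetic fact used throughout is $(1-\eps^3/r)^{\lceil N\rceil}\le e^{-\eps^3 N/r}=\eps^2/r$. Since every set and event introduced below is determined by $V_1^t,\dots,V_{i-1}^t$ (given epochs $<t$), it suffices to establish the bound for each fixed such realization consistent with the conditioning event.

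For the base case $i=1$ (where $k_1^t=1$ and $H_1^t=G_1^t$), I would argue exactly as in the $i=1$ case of Lemma~\ref{lem:H_i_setminus_H_i-1}: $|H_1^t|>N$ forces all of the $\lceil N\rceil$ feasible singletons with the highest rounded marginals $\smallfloor{F(\eps\mathbf{1}_e\mid x^{t-1})}_I$ to avoid $V_1^t$, since if one were present then the argmax at Line~\ref{algline:offline_greedy_selection_arg_max} would give $s_1^t$ at least its marginal and monotonicity of $\floor{\cdot}_I$ (Definition~\ref{def:rounding_op}) would force $|G_1^t|\le N$; the probability of this is at most $(1-\eps^3/r)^{\lceil N\rceil}\le\eps^2/r$. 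For the inductive case $i\in\{2,\dots,r\}$, I would mirror Steps~1--4 of the proof of Lemma~\ref{lem:H_i_setminus_H_i-1}: let $T_i^{(1),t}$ be the set of elements $e$ with $S_{i-1}^t\cup\{e\}\in\M$ whose rounded marginal w.r.t.\ $x^{t-1}+\eps\mathbf{1}_{S_{i-1}^t}$ is $\ge\min_{j\in[i-1]}\smallfloor{F(\eps\Delta_j^t\mid x^{t-1}+\eps\mathbf{1}_{S_{j-1}^t})}_I$, let $T_i^{(2),t}$ be the remaining elements $e$ with $S_{i-1}^t\cup\{e\}\in\M$, and let $A_i^{(1),t}$ and $A_i^{(2),t}$ be, respectively, the events that none of $T_i^{(1),t}$ and none of the top $\lceil N\rceil-|T_i^{(1),t}|$ elements of $T_i^{(2),t}$ (ranked by marginal w.r.t.\ $x^{t-1}+\eps\mathbf{1}_{S_{i-1}^t}$) lies in $V_i^t$. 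Then, conditioned on $|H_{i-1}^t|\le k_{i-1}^t N$: (i)~if $\neg A_i^{(1),t}$ then $\Delta_i^t$ is ineffective, so $k_i^t=k_{i-1}^t$ and $H_i^t=H_{i-1}^t$ by Claim~\ref{claim:subroutine_ineffective_selector}, whence $|H_i^t|\le k_i^t N$; (ii)~if $|H_i^t|>k_i^t N$ then $k_i^t=k_{i-1}^t+1$ and $|G_i^t\setminus T_i^{(1),t}|>N-|T_i^{(1),t}|$, which via Line~\ref{algline:offline_greedy_selection_arg_max} and monotonicity of $\floor{\cdot}_I$ forces $A_i^{(2),t}$; and (iii)~if $\lceil N\rceil>|T_i^{(1),t}|+|T_i^{(2),t}|$ then $|G_i^t|\le N$ and $|H_i^t|>k_i^t N$ is impossible. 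Hence on the event $\lceil N\rceil\le|T_i^{(1),t}|+|T_i^{(2),t}|$, the event $|H_i^t|>k_i^t N$ implies $A_i^{(1),t}\wedge A_i^{(2),t}$; and because these two events concern the disjoint element sets $T_i^{(1),t}$ and the prescribed top elements of $T_i^{(2),t}$ while $V_i^t$ is sampled coordinate-independently, they are independent, so $\Pr[A_i^{(1),t}\wedge A_i^{(2),t}]=(1-\eps^3/r)^{|T_i^{(1),t}|}\cdot(1-\eps^3/r)^{\lceil N\rceil-|T_i^{(1),t}|}=(1-\eps^3/r)^{\lceil N\rceil}\le\eps^2/r$. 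Averaging over the conditioning yields Ineq.~\eqref{eq:H_i_t_setminus_H_i-1_t}.

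The step I expect to require genuine (if minor) care --- and which has no analogue in the streaming proof --- is the possibility that a selector $\Delta_i^t$ equals $\mathbf{0}$, which happens when the non-negativity test at Line~\ref{algline:offline_greedy_selection_if_condition} fails. I would handle it by observing that when $\Delta_i^t=\mathbf{0}$ one has $\smallfloor{F(\eps\Delta_i^t\mid\cdot)}_I=\min I$, so every element of $G_i^t$ has strictly positive marginal and therefore automatically misses $V_i^t$ (since in this case no feasible element of $V_i^t$ has non-negative marginal); moreover the top elements of $T_i^{(2),t}$ by marginal coincide with those of rounded marginal exceeding $\min I$, which keeps the counting in step~(ii) intact. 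One should also note that $\min_{j\in[i-1]}\smallfloor{F(\eps\Delta_j^t\mid\cdot)}_I\ge\min I$ always holds (each $\Delta_j^t$ has non-negative marginal or is $\mathbf{0}$), so that $T_i^{(1),t}$ and $T_i^{(2),t}$ partition the feasible elements exactly as the corresponding sets do in the streaming proof. Everything else is a routine transcription of the argument for Lemma~\ref{lem:H_i_setminus_H_i-1}.
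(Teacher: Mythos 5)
Your proof follows the paper's own argument (Appendix~\ref{sec:proof_of_lem_offline_break_condition}) step by step: the same case split on $i=1$ versus $i\ge2$, the same auxiliary sets $T_i^{(1),t},T_i^{(2),t}$ and events $A_i^{(1),t},A_i^{(2),t}$, the same conditioning on $V_1^t,\dots,V_{i-1}^t$ and $x^{t-1}$, and the same arithmetic $(1-\eps^3/r)^{\lceil N\rceil}\le\eps^2/r$ with $N=r\ln(r/\eps^2)/\eps^3$. Your two refinements --- invoking the genuine independence of $A_i^{(1),t}$ and $A_i^{(2),t}$ (since $V_i^t$ is an i.i.d.\ Bernoulli sample, so the paper's ``probability can only decrease'' hedge is not even needed here), and explicitly noting that the $\Delta_i^t=\mathbf{0}$ case is absorbed by $\floor{0}_I=\min I$ --- are cosmetic and do not change the argument.
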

\begin{proof}[Proof of Lemma~\ref{lem:H_i_t_setminus_H_i-1_t}]
To simplify the proof language, for any element $e\in[n]$ and vectors $x,y\in[0,1]^n$, we will refer to the values $F(\eps\mathbf{1}_e|y)$ and $F(\eps x|y)$ as the ``$(\eps,y)$-marginal values'' of element $e$ and vector $x$, respectively. We prove the statement for an arbitrary epoch $t\in\left[\frac{1}{\eps}\right]$ of Subroutine~\ref{sub:continuous_greedy_filtering}.
\subsubsection*{Case 1: $i=1$}
We first prove the statement for epoch $t$ and $i=1$, in which case Ineq.~\eqref{eq:H_i_t_setminus_H_i-1_t} reduces to
\begin{equation}
\Pr\left[|H_1^t| > \frac{r\ln(r/\eps^2)}{\eps^3}\right]\le\frac{\eps^2}{r},
\end{equation}
because, by definition, we have that $H_0^t=\emptyset$, $k_0^t=0$ and $k_1^t=1$ (the first selector in epoch $t$ is always effective). Recall that $H_1^t=G_1^t$ by definition of $H_1^t$, and $$G_1^t=\{e\in[n]\mid \smallfloor{F(\eps\mathbf{1}_e|x^{t-1})}_I>\smallfloor{F(\eps\Delta_{1}^t|x^{t-1})}_I\},$$
by definition of $G_1^t$ in Eq.~\eqref{eq:G_i_t}. Thus, $H_1^t$ consists of elements in $[n]$ that have a strictly higher (rounded) $(\eps,x^{t-1})$-marginal value than $\Delta_1^t$. We observe that given how $\Delta_1^t$ is constructed in Subroutine~\ref{sub:continuous_greedy_filtering}, its (rounded) $(\eps,x^{t-1})$-marginal value is no less than that of any element in the subsampled set $V_1^t$ (note that this is trivially true if $V_1^t$ happens to be empty). Thus, it follows that elements in $H_1^t$ have a strictly higher (rounded) $(\eps,x^{t-1})$-marginal value than all elements in $V_1^t$. Therefore, $|H_1^t|>\frac{r\ln(r/\eps^2)}{\eps^3}$ holds only if none of the top $\ceil{\frac{r\ln(r/\eps^2)}{\eps^3}}$ elements in $[n]$ with the highest $(\eps,x^{t-1})$-marginal values appear in $V_1^t$. Because each of these top elements appears in $V_1^t$ independently with probability $\frac{\eps^3}{r}$, the probability that this happens is at most $(1-\frac{\eps^3}{r})^{\ceil{\frac{r\ln(r/\eps^2)}{\eps^3}}}\le\frac{\eps^2}{r}$. Hence, we have that $\Pr\big[|H_1^t|>\frac{r\ln(r/\eps^2)}{\eps^3}\big]\le\frac{\eps^2}{r}$.

\subsubsection*{Case 2: $i\in\{2,\dots,r\}$}
Now we prove the statement for epoch $t$ and $i\in\{2,\dots,r\}$. We let $T_{t,i}^{(1)}$ be 
the set of elements in $[n]$ that can be added to set $S_{i-1}^t$ without violating the matroid constraint, whose (rounded) $(\eps,x_{t-1}+\eps\mathbf{1}_{S_{i-1}^t})$-marginal values are no less than $\min_{j\in[i-1]}\smallfloor{F(\eps\Delta_{j}^t|x^{t-1}+\eps\mathbf{1}_{S_{j-1}^t})}_I$, i.e.,
\[\textstyle
    T_{t,i}^{(1)}:=\{e\in[n]\mid S_{i-1}^t\cup\{e\}\in\M \textrm{ and } \smallfloor{F(\eps\mathbf{1}_{e}|x^{t-1}+\eps\mathbf{1}_{S_{i-1}^t})}_I\ge\min\limits_{j\in[i-1]}\smallfloor{F(\eps\Delta_{j}^t|x^{t-1}+\eps\mathbf{1}_{S_{j-1}^t})}_I\}.
\]
Then, we let $T_{t,i}^{(2)}$ be the set of elements in $[n]$ that can be added to set $S_{i-1}^t$ without violating the matroid constraint, excluding $T_{t,i}^{(1)}$, i.e.,
\[\textstyle
T_{t,i}^{(2)}:=\{e\in [n]\mid S_{i-1}^t\cup\{e\}\in\M\}\setminus T_{t,i}^{(1)}.
\]
Moreover, we define two events $A_{t,i}^{(1)}$ and $A_{t,i}^{(2)}$ as follows:
\begin{itemize}
    \item Event $A_{t,i}^{(1)}$: None of the elements in $T_{t,i}^{(1)}$ appear in the subsampled set $V_i^t$.
    \item Event $A_{t,i}^{(2)}$: None of the top $\ceil{\frac{r\ln(r/\eps^2)}{\eps^3}}-|T_{t,i}^{(1)}|$ elements in $T_{t,i}^{(2)}$, with the highest $(\eps,x_{t-1}+\eps\mathbf{1}_{S_{i-1}^t})$-marginal values, appear in $V_i^t$. (If $\ceil{\frac{r\ln(r/\eps^2)}{\eps^3}}-|T_{t,i}^{(1)}|\le 0$ or $\ceil{\frac{r\ln(r/\eps^2)}{\eps^3}}-|T_{t,i}^{(1)}|>|T_{t,i}^{(2)}|$, we assume that this event holds trivially.)
\end{itemize}
Next, we show that conditioned on $|H_{i-1}^t| \le k_{i-1}^t\cdot \frac{r\ln(r/\eps^2)}{\eps^3}$, the event $|H_i^t| > k_i^t\cdot \frac{r\ln(r/\eps^2)}{\eps^3}$ implies events $A_{t,i}^{(1)}$ and $A_{t,i}^{(2)}$.

\subsubsection*{Step 1: Conditioned on $|H_{i-1}^t| \le k_{i-1}^t\cdot \frac{r\ln(r/\eps^2)}{\eps^3}$, the event $|H_i^t| > k_i^t\cdot \frac{r\ln(r/\eps^2)}{\eps^3}$ implies $A_{t,i}^{(1)}$}
We first prove that conditioned on $|H_{i-1}^t| \le k_{i-1}^t\cdot \frac{r\ln(r/\eps^2)}{\eps^3}$, if event $A_{t,i}^{(1)}$ does not occur, then the event $|H_i^t| > k_i^t\cdot \frac{r\ln(r/\eps^2)}{\eps^3}$ cannot occur either. Specifically, if $A_{t,i}^{(1)}$ does not occur, then there is an element $e_1\in T_{t,i}^{(1)}$ that appears in the subsampled set $V_i^t$. Given how $\Delta_i^t$ is constructed in Subroutine~\ref{sub:continuous_greedy_filtering}, $\Delta_i^t$ must satisfy that $\smallfloor{F(\eps\Delta_i^t|x^{t-1}+\eps\mathbf{1}_{S_{i-1}^t})}_I\ge\smallfloor{F(\eps\mathbf{1}_{e_1}|x^{t-1}+\eps\mathbf{1}_{S_{i-1}^t})}_I$. Moreover, since $e_1\in T_{t,i}^{(1)}$, it follows by definition of $T_{t,i}^{(1)}$ that $\smallfloor{F(\eps\mathbf{1}_{e_1}|x^{t-1}+\eps\mathbf{1}_{S_{i-1}^t})}_I\ge\min_{j\in[i-1]}\smallfloor{F(\eps\Delta_{j}^t|x^{t-1}+\eps\mathbf{1}_{S_{j-1}^t})}_I$. Therefore, we have that $\smallfloor{F(\eps\Delta_i^t|x^{t-1}+\eps\mathbf{1}_{S_{i-1}^t})}_I\ge\min_{j\in[i-1]}\smallfloor{F(\eps\Delta_{j}^t|x^{t-1}+\eps\mathbf{1}_{S_{j-1}^t})}_I$, which implies that $\Delta_i^t$ is an ineffective selector.
Because $\Delta_i^t$ is an ineffective selector, we have that $k_i^t=k_{i-1}^t$ by definition of $k_i^t$ and $k_{i-1}^t$, and that $H_i^t=H_{i-1}^t$ by Claim~\ref{claim:subroutine_ineffective_selector}. Conditioned on $|H_{i-1}^t| \le k_{i-1}^t\cdot \frac{r\ln(r/\eps^2)}{\eps^3}$, this implies that $|H_i^t| \le k_i^t\cdot \frac{r\ln(r/\eps^2)}{\eps^3}$, and hence, the event $|H_i^t| > k_i^t\cdot \frac{r\ln(r/\eps^2)}{\eps^3}$ cannot occur.

\subsubsection*{Step 2: Conditioned on $|H_{i-1}^t| \le k_{i-1}^t\cdot \frac{r\ln(r/\eps^2)}{\eps^3}$, the event $|H_i^t| > k_i^t\cdot \frac{r\ln(r/\eps^2)}{\eps^3}$ implies $A_{t,i}^{(2)}$}
Now we show that conditioned on $|H_{i-1}^t| \le k_{i-1}^t\cdot \frac{r\ln(r/\eps^2)}{\eps^3}$, if the event $|H_i^t| > k_i^t\cdot \frac{r\ln(r/\eps^2)}{\eps^3}$ occurs, then event $A_{t,i}^{(2)}$ also occurs (we assume w.l.o.g.~that $\ceil{\frac{r\ln(r/\eps^2)}{\eps^3}}-|T_{t,i}^{(1)}|\ge1$, since otherwise $A_{t,i}^{(2)}$ holds trivially by definition). Specifically, conditioned on $|H_{i-1}^t| \le k_{i-1}^t\cdot \frac{r\ln(r/\eps^2)}{\eps^3}$, if the event $|H_i^t| > k_i^t\cdot \frac{r\ln(r/\eps^2)}{\eps^3}$ occurs, then it must hold that $k_i^t=k_{i-1}^t+1$ (because otherwise $k_i^t=k_{i-1}^t$, and $\Delta_i^t$ would be an ineffective selector, which would imply that $|H_i^t|=|H_{i-1}^t|\le k_{i-1}^t\cdot \frac{r\ln(r/\eps^2)}{\eps^3}=k_i^t\cdot \frac{r\ln(r/\eps^2)}{\eps^3}$ by Claim~\ref{claim:subroutine_ineffective_selector}), and hence, $|H_i^t\setminus H_{i-1}^t|=|H_i^t|-|H_{i-1}^t|>\frac{r\ln(r/\eps^2)}{\eps^3}$. By definition of $H_i^t$ and $H_{i-1}^t$ in Eq.~\eqref{eq:H_i}, we have that $H_i^t\setminus H_{i-1}^t=G_i^t\setminus(\bigcup_{j=1}^{i-1} G_j^t)$, which implies that $|G_i^t|\ge|H_i^t\setminus H_{i-1}^t|>\frac{r\ln(r/\eps^2)}{\eps^3}$. This in turn implies that $|G_i^t\setminus T_{t,i}^{(1)}|\ge|G_i^t|-|T_{t,i}^{(1)}|>\frac{r\ln(r/\eps^2)}{\eps^3}-|T_{t,i}^{(1)}|$.

Note that by definition of $G_i^t$ (Eq.~\eqref{eq:G_i_t}) and $T_{t,i}^{(1)},T_{t,i}^{(2)}$, the set $G_i^t\setminus T_{t,i}^{(1)}$ is a subset of elements in $T_{t,i}^{(2)}$ that have higher (rounded) $(\eps,x_{t-1}+\eps\mathbf{1}_{S_{i-1}^t})$-marginal values than $\Delta_i^t$. Since $|G_i^t\setminus T_{t,i}^{(1)}|>\frac{r\ln(r/\eps^2)}{\eps^3}-|T_{t,i}^{(1)}|$, there must be \emph{strictly} more than $\frac{r\ln(r/\eps^2)}{\eps^3}-|T_{t,i}^{(1)}|$ elements in $T_{t,i}^{(2)}$ that have higher (rounded) $(\eps,x_{t-1}+\eps\mathbf{1}_{S_{i-1}^t})$-marginal values than $\Delta_i^t$. Next, we derive a contradiction to this, assuming event $A_{t,i}^{(2)}$ does not occur.

Suppose for contradiction that $A_{t,i}^{(2)}$ does not occur. That is, one of the top $\ceil{\frac{r\ln(r/\eps^2)}{\eps^3}}-|T_{t,i}^{(1)}|$ elements in $T_{t,i}^{(2)}$ with the highest $(\eps,x_{t-1}+\eps\mathbf{1}_{S_{i-1}^t})$-marginal values, which we denote by $e_2\in T_{t,i}^{(2)}$, appears in the subsampled set $V_i^t$. Given how $\Delta_i^t$ is constructed in Subroutine~\ref{sub:continuous_greedy_filtering}, $\Delta_i^t$ must have at least the same (rounded) $(\eps,x_{t-1}+\eps\mathbf{1}_{S_{i-1}^t})$-marginal value as element $e_2$. Hence, the (rounded) marginal value of $\Delta_i^t$ is no less than that of any element in $T_{t,i}^{(2)}$, except for the other top $\ceil{\frac{r\ln(r/\eps^2)}{\eps^3}}-|T_{t,i}^{(1)}|-1$ elements (excluding $e_2$), which is a contradiction.

\subsubsection*{Step 3: Proving Ineq.~\eqref{eq:H_i_t_setminus_H_i-1_t} under the additional condition $\ceil{\frac{r\ln(r/\eps^2)}{\eps^3}}>|T_{t,i}^{(1)}|+|T_{t,i}^{(2)}|$}
Next, we prove that Ineq.~\eqref{eq:H_i_t_setminus_H_i-1_t} holds trivially if we additionally condition on the event $\ceil{\frac{r\ln(r/\eps^2)}{\eps^3}}>|T_{t,i}^{(1)}|+|T_{t,i}^{(2)}|$. Specifically, we show that
\begin{equation}\label{eq:H_i_t_setminus_H_i-1_t_not_enough_feasible_elements}
\Pr\left[|H_i^t| \le k_i^t\cdot \frac{r\ln(r/\eps^2)}{\eps^3} \,\middle\vert\, |H_{i-1}^t| \le k_{i-1}^t\cdot \frac{r\ln(r/\eps^2)}{\eps^3},\, \ceil{\frac{r\ln(r/\eps^2)}{\eps^3}}>|T_{t,i}^{(1)}|+|T_{t,i}^{(2)}|\right]=1.
\end{equation}

To this end, we notice that conditioned on $\ceil{\frac{r\ln(r/\eps^2)}{\eps^3}}>|T_{t,i}^{(1)}|+|T_{t,i}^{(2)}|$, the set $T_{t,i}^{(1)}\cup T_{t,i}^{(2)}$, which consists of all elements in $[n]$ that can be added to $S_{i-1}^t$ without violating the matroid constraint, has size at most $\ceil{\frac{r\ln(r/\eps^2)}{\eps^3}}-1\le\frac{r\ln(r/\eps^2)}{\eps^3}$. By definition of $G_i^t$ in Eq.~\eqref{eq:G_i_t}, we have that $G_i^t\subseteq T_{t,i}^{(1)}\cup T_{t,i}^{(2)}$, and hence, $|G_i^t|\le\frac{r\ln(r/\eps^2)}{\eps^3}$. This implies that $|H_i^t\setminus H_{i-1}^t|\le\frac{r\ln(r/\eps^2)}{\eps^3}$ since $H_i^t\setminus H_{i-1}^t\subseteq G_i^t$. Conditioned on $|H_{i-1}^t| \le k_{i-1}^t\cdot \frac{r\ln(r/\eps^2)}{\eps^3}$, it follows that $|H_i^t|=|H_i^t\setminus H_{i-1}^t|+|H_{i-1}^t|\le (k_{i-1}^t+1)\cdot \frac{r\ln(r/\eps^2)}{\eps^3}$, which implies the event $|H_i^t|\le k_i^t\cdot \frac{r\ln(r/\eps^2)}{\eps^3}$ if $k_i^t=k_{i-1}^t+1$. If instead $k_i^t=k_{i-1}^t$, then $\Delta_i^t$ is an ineffective selector, and hence, by Claim~\ref{claim:subroutine_ineffective_selector}, we have that $|H_i^t|=|H_{i-1}^t|$, which also implies the event $|H_i^t|\le k_i^t\cdot \frac{r\ln(r/\eps^2)}{\eps^3}$, conditioned on $|H_{i-1}^t| \le k_{i-1}^t\cdot \frac{r\ln(r/\eps^2)}{\eps^3}$. Thus, Ineq.~\eqref{eq:H_i_t_setminus_H_i-1_t_not_enough_feasible_elements} follows.

\subsubsection*{Step 4: Proving Ineq.~\eqref{eq:H_i_t_setminus_H_i-1_t} under the additional condition $\ceil{\frac{r\ln(r/\eps^2)}{\eps^3}}\le|T_{t,i}^{(1)}|+|T_{t,i}^{(2)}|$}
Finally, we prove that Ineq.~\eqref{eq:H_i_t_setminus_H_i-1_t} also holds if we additionally condition on $\ceil{\frac{r\ln(r/\eps^2)}{\eps^3}}\le|T_{t,i}^{(1)}|+|T_{t,i}^{(2)}|$.
Since we have shown that conditioned on $|H_{i-1}^t| \le k_{i-1}^t\cdot \frac{r\ln(r/\eps^2)}{\eps^3}$, the event $|H_i^t| > k_i^t\cdot \frac{r\ln(r/\eps^2)}{\eps^3}$ implies events $A_{t,i}^{(1)}$ and $A_{t,i}^{(2)}$, it suffices to show that $$\textstyle\Pr\left[A_{t,i}^{(1)}\wedge A_{t,i}^{(2)}\,\middle\vert\, \left(\forall\,j\in\{0,\dots,i-1\},\,|H_j^t|\le k_j^t\cdot \frac{r\ln(r/\eps^2)}{\eps^3}\right),\,\ceil{\frac{r\ln(r/\eps^2)}{\eps^3}}\le|T_{t,i}^{(1)}|+|T_{t,i}^{(2)}|\right]\le\frac{\eps^2}{r}.$$
Moreover, observe that the numbers $k_1^t,\dots,k_{i-1}^t$, the sets $H_1^t,\dots,H_{i-1}^t$, $T_{t,i}^{(1)}$ and $T_{t,i}^{(2)}$, and the events $A_{t,i}^{(1)}$ and $A_{t,i}^{(2)}$ are fully determined by the subsampled sets $V_1^t,\dots,V_{i-1}^t$ and the fractional solution $x^{t-1}$ constructed in epoch $t-1$ in Subroutine~\ref{sub:continuous_greedy_filtering}. Therefore, it suffices to prove that $\Pr[A_{t,i}^{(1)}\wedge A_{t,i}^{(2)} \mid V_1^t,\dots,V_{i-1}^t,x^{t-1}]\le\frac{\eps^2}{r}$, for any fixed $V_1^t,\dots,V_{i-1}^t,x^{t-1}$ that are consistent with the condition $\ceil{\frac{r\ln(r/\eps^2)}{\eps^3}}\le|T_{t,i}^{(1)}|+|T_{t,i}^{(2)}|$.

To this end, we consider any $V_1^t,\dots,V_{i-1}^t,x^{t-1}$ that satisfy the condition $\ceil{\frac{r\ln(r/\eps^2)}{\eps^3}}-|T_{t,i}^{(1)}|\le|T_{t,i}^{(2)}|$ (which is equivalent to $\ceil{\frac{r\ln(r/\eps^2)}{\eps^3}}\le|T_{t,i}^{(1)}|+|T_{t,i}^{(2)}|$). Since each element of $[n]$ appears in the $i$-th subsampled set $V_i^t$ in epoch $t$ independently with probability $\frac{\eps^3}{r}$, the probability that none of the elements in $T_{t,i}^{(1)}$ appear in $V_i^t$ is at most $(1-\frac{\eps^3}{r})^{|T_{t,i}^{(1)}|}$, namely,
\begin{equation}\label{eq:A_i_t_1_probability}
\Pr\left[A_{t,i}^{(1)} \,\middle\vert\, V_1^t,\dots,V_{i-1}^t,x^{t-1}\right]\le\left(1-\frac{\eps^3}{r}\right)^{|T_{t,i}^{(1)}|}.
\end{equation}
If $V_1^t,\dots,V_{i-1}^t,x^{t-1}$ further satisfy the condition $\ceil{\frac{r\ln(r/\eps^2)}{\eps^3}}-|T_{t,i}^{(1)}|\ge1$, then it follows that $1\le\ceil{\frac{r\ln(r/\eps^2)}{\eps^3}}-|T_{t,i}^{(1)}|\le|T_{t,i}^{(2)}|$, and hence, the top $\ceil{\frac{r\ln(r/\eps^2)}{\eps^3}}-|T_{t,i}^{(1)}|$ elements in $T_{t,i}^{(2)}$, with the highest $(\eps,x^{t-1}+\eps\mathbf{1}_{S_{i-1}^t})$-marginal values, are well-defined. The probability that none of these top $\ceil{\frac{r\ln(r/\eps^2)}{\eps^3}}-|T_{t,i}^{(1)}|$ elements of $T_{t,i}^{(2)}$ appear in $V_i^t$ is at most $(1-\frac{\eps^3}{r})^{\ceil{\frac{r\ln(r/\eps^2)}{\eps^3}}-|T_{t,i}^{(1)}|}$, and this probability can only decrease if we additionally condition on event $A_{t,i}^{(1)}$. Hence, when $V_1^t,\dots,V_{i-1}^t,x^{t-1}$ further satisfy  $\ceil{\frac{r\ln(r/\eps^2)}{\eps^3}}-|T_{t,i}^{(1)}|\ge1$, we have that 
\begin{equation}\label{eq:A_i_t_2_probability}
\Pr\left[A_{t,i}^{(2)} \,\middle\vert\, V_1^t,\dots,V_{i-1}^t,x^{t-1} \textrm{ and } A_{t,i}^{(1)}\right]\le\left(1-\frac{\eps^3}{r}\right)^{\ceil{\frac{r\ln(r/\eps^2)}{\eps^3}}-|T_{t,i}^{(1)}|}.
\end{equation}
On the other hand, if $V_1^t,\dots,V_{i-1}^t,x^{t-1}$ do not satisfy $\ceil{\frac{r\ln(r/\eps^2)}{\eps^3}}-|T_{t,i}^{(1)}|\ge1$, which implies that $\ceil{\frac{r\ln(r/\eps^2)}{\eps^3}}-|T_{t,i}^{(1)}|\le0$, then Ineq.~\eqref{eq:A_i_t_2_probability} holds trivially. Thus, Ineq.~\eqref{eq:A_i_t_2_probability} holds regardless of whether $V_1^t,\dots,V_{i-1}^t,x^{t-1}$ satisfy $\ceil{\frac{r\ln(r/\eps^2)}{\eps^3}}-|T_{t,i}^{(1)}|\ge1$. Finally, we derive that
\begin{align*}
    &\Pr\left[A_{t,i}^{(1)}\wedge A_{t,i}^{(2)} \,\middle\vert\, V_1^t,\dots,V_{i-1}^t,x^{t-1}\right]\\
    =&\Pr\left[A_{t,i}^{(1)}\,\middle\vert\, V_1^t,\dots,V_{i-1}^t,x^{t-1}\right]\times
    \Pr\left[A_{t,i}^{(2)}\,\middle\vert\, V_1^t,\dots,V_{i-1}^t,x^{t-1} \textrm{ and } A_{t,i}^{(1)}\right]\\
    \le& \left(1-\frac{\eps^3}{r}\right)^{|T_{t,i}^{(1)}|}\times
    \Pr\left[A_{t,i}^{(2)}\,\middle\vert\, V_1^t,\dots,V_{i-1}^t,x^{t-1} \textrm{ and } A_{t,i}^{(1)}\right] &&\text{(By Ineq.~\eqref{eq:A_i_t_1_probability})}\\
    \le& \left(1-\frac{\eps^3}{r}\right)^{|T_{t,i}^{(1)}|}\times\left(1-\frac{\eps^3}{r}\right)^{\ceil{\frac{r\ln(r/\eps^2)}{\eps^3}}-|T_{t,i}^{(1)}|} &&\text{(By Ineq.~\eqref{eq:A_i_t_2_probability})}\\
    =& \left(1-\frac{\eps^3}{r}\right)^{\ceil{\frac{r\ln(r/\eps^2)}{\eps^3}}}\le\frac{\eps^2}{r},
\end{align*}
which finishes the proof of Lemma~\ref{lem:H_i_t_setminus_H_i-1_t}.
\end{proof}

We are ready to complete the proof of Lemma~\ref{lem:offline_break_condition}.
\begin{proof}[Proof of Lemma~\ref{lem:offline_break_condition}]
First, following the notations in Lemma~\ref{lem:H_i_t_setminus_H_i-1_t}, we upper bound the probability $\Pr\left[|H_r^t|\le k_r^t\cdot\frac{r\ln(r/\eps^2)}{\eps^3}\right]$ for each $t\in\left[\frac{1}{\eps}\right]$ as follows,
\begin{align*}
    &\Pr\left[|H_r^t|\le k_r^t\cdot\frac{r\ln(r/\eps^2)}{\eps^3}\right]\\
    \ge& \Pr\left[\forall\,i\in[r],\,|H_i^t| \le k_i^t\cdot\frac{r\ln(r/\eps^2)}{\eps^3}\right]\\
    =&\prod_{i\in[r]}\Pr\left[|H_i^t| \le k_i^t\cdot \frac{r\ln(r/\eps^2)}{\eps^3} \,\middle\vert\, \forall\,j\in\{0,\dots,i-1\},\,|H_j^t| \le k_j^t\cdot \frac{r\ln(r/\eps^2)}{\eps^3}\right]\\
    \ge&\left(1-\frac{\eps^2}{r}\right)^r \qquad\qquad\qquad\qquad\qquad\qquad\qquad\qquad\qquad\qquad\qquad\qquad\qquad\qquad\text{(By Lemma~\ref{lem:H_i_t_setminus_H_i-1_t})}\\
    \ge&\,1-\eps^2.
\end{align*}
Then, by a union bound, we have that
\begin{equation}\label{eq:Pr_H_r_t_le_k_r_t}
\Pr\left[\forall\,t\in\left[\frac{1}{\eps}\right],\,|H_r^t|\le k_r^t\cdot\frac{r\ln(r/\eps^2)}{\eps^3}\right]\ge1-\eps.
\end{equation}
Moreover, we notice that the total number of effective selectors $k_r^t$ in each epoch is at most $|I|$, because any two distinct effective selectors $\Delta_i^t$ and $\Delta_j^t$ must satisfy that $\floor{F(\eps\Delta_i^t|x^{t-1}+\eps\mathbf{1}_{S_{i-1}^t)}}_I\neq\floor{F(\eps\Delta_j^t|x^{t-1}+\eps\mathbf{1}_{S_{j-1}^t)}}_I$ (and there are only $|I|$ distinct values in the range of $\floor{\cdot}_I$). Thus, Ineq.~\eqref{eq:Pr_H_r_t_le_k_r_t} implies $\Pr\left[\forall\,t\in\left[\frac{1}{\eps}\right],\,|H_r^t|\le \frac{r\ln(r/\eps^2)\cdot |I|}{\eps^3}\right]\ge1-\eps$. The proof finishes by observing that $$\Pr\Bigg[\sum_{t\in\left[\frac{1}{\eps}\right]}|H_r^t|\le \frac{r\ln(r/\eps^2)\cdot |I|}{\eps^4}\Bigg]\ge\Pr\left[\forall\,t\in\left[\frac{1}{\eps}\right],\,|H_r^t|\le \frac{r\ln(r/\eps^2)\cdot |I|}{\eps^3}\right],$$ and that the set $H$ in Subroutine~\ref{sub:continuous_greedy_filtering} is a subset of $\bigcup_{t\in\left[\frac{1}{\eps}\right]}H_r^t$.
\end{proof}

\subsection{Proof of Lemma~\ref{lem:offline_regular_case}}\label{sec:proof_of_lem_offline_regular_case}
\begin{proof}[Proof of Lemma~\ref{lem:offline_regular_case}]
\subsubsection*{The proof setup}
We start by introducing the notations that will be used throughout the analysis and deriving their relations using basic structural properties of matroids. For each epoch $t\in\left[\frac{1}{\eps}\right]$ in the second phase of Subroutine~\ref{sub:continuous_greedy_filtering}, we consider the restriction $\M_{S_r^t\cup O}$ of matroid $\M$ to set $S_r^t\cup O$. Because the restriction $\M_{S_r^t\cup O}$ is a matroid, we can augment $S_r^t\in\M_{S_r^t\cup O}$ with some subset $O_S^t\subseteq O$ such that $S_r^t\cupdot O_S^t$ is a base of $\M_{S_r^t\cup O}$, and similarly, we can augment $O\in\M_{S_r^t\cup O}$ with some subset $S_O^t\subseteq S_r^t$ such that $O\cupdot S_O^t$ is a base of $\M_{S_r^t\cup O}$.

Then, we partition the optimal solution $O$ into two disjoint subsets: $O_L:=O\setminus H$ and $O_H:=O\cap H$. That is, $O_H$ is the subset of elements from $O$ that are included in $H$ by Subroutine~\ref{sub:continuous_greedy_filtering}, and $O_L$ is the subset of elements from $O$ that are not included in $H$ (importantly, conditioned on event $E_1$ defined in Lemma~\ref{lem:offline_break_condition}, these elements were not added to $H$ because they did not satisfy the filter condition at Line~\ref{algline:offline_filter_condition} of Subroutine~\ref{sub:continuous_greedy_filtering}, not because of the break condition at Line~\ref{algline:offline_break_condition}). Moreover, for each $t\in\left[\frac{1}{\eps}\right]$, we denote $O_L^t:=O_L\setminus O_S^t$ and $O_H^t:=O_H\setminus O_S^t$. Hence, the optimal solution $O$ can be decomposed as $O=O_S^t\cupdot O_L^t\cupdot O_H^t$, and $O_L$ can be decomposed as
\begin{equation}\label{eq:O_L_decomposition}
O_L=(O_S^t\cap O_L)\cupdot O_L^t.
\end{equation}

Now we consider the contraction $\M_{S_r^t\cup O}/(O_S^t\cupdot S_O^t)$ of matroid $\M_{S_r^t\cup O}$ for each $t\in\left[\frac{1}{\eps}\right]$. Notice that both $O_L^t\cupdot O_H^t$ and $S_r^t\setminus S_O^t$ are bases of matroid $\M_{S_r^t\cup O}/(O_S^t\cupdot S_O^t)$. Hence, by Lemma~\ref{lem:base_exchange}, there exists a partition $S_r^t\setminus S_O^t=S_L^t\cupdot S_H^t$ such that both $S_L^t\cupdot O_H^t$ and $O_L^t\cupdot S_H^t$ are bases of $\M_{S_r^t\cup O}/(O_S^t\cupdot S_O^t)$, which implies that both $O_S^t\cupdot S_L^t\cupdot O_H^t\cupdot S_O^t$ and $O_S^t\cupdot O_L^t\cupdot S_H^t\cupdot S_O^t$ are bases of $\M_{S_r^t\cup O}$.

In particular, since both $O_S^t\cupdot O_L^t\cupdot S_H^t\cupdot S_O^t$ and $O_S^t\cupdot S_L^t\cupdot S_H^t\cupdot S_O^t=O_S^t\cupdot S_r^t$ are bases of $\M_{S_r^t\cup O}$, by Lemma~\ref{lem:base_matching}, there exists a bijection $h_t:S_L^t\to O_L^t$ such that for all $s\in S_L^t$,
\[
\textrm{$((O_S^t\cupdot S_L^t\cupdot S_H^t\cupdot S_O^t)\setminus\{s\})\cup\{h_t(s)\}$ is a base of $\M_{S_r^t\cup O}$.}
\]
Since $S_L^t\cupdot S_H^t\cupdot S_O^t=S_r^t$, it follows that $(S_r^t\setminus\{s\})\cup\{h_t(s)\}\in\M_{S_r^t\cup O}$ for all $s\in S_L^t$, which implies that for all $s\in S_L^t$,
\begin{equation}\label{eq:h_t_maps_S_L_t_to_O_L_t}
(S_r^t\setminus\{s\})\cup\{h_t(s)\}\in\M.
\end{equation}
(To help the reader keep track of the notations, we show the relations between $O_S^t\cupdot O_L^t\cupdot O_H^t\cupdot S_O^t$ and $O_S^t\cupdot S_L^t\cupdot S_H^t\cupdot S_O^t$, and between $O_S^t\cupdot S_L^t\cupdot S_H^t\cupdot S_O^t$ and $O_S^t\cupdot O_L^t\cupdot S_H^t\cupdot S_O^t$ in Figure~\ref{fig:illustration_offline}.)

\begin{figure}[ht]
    \centering
    \includegraphics[scale=0.3]{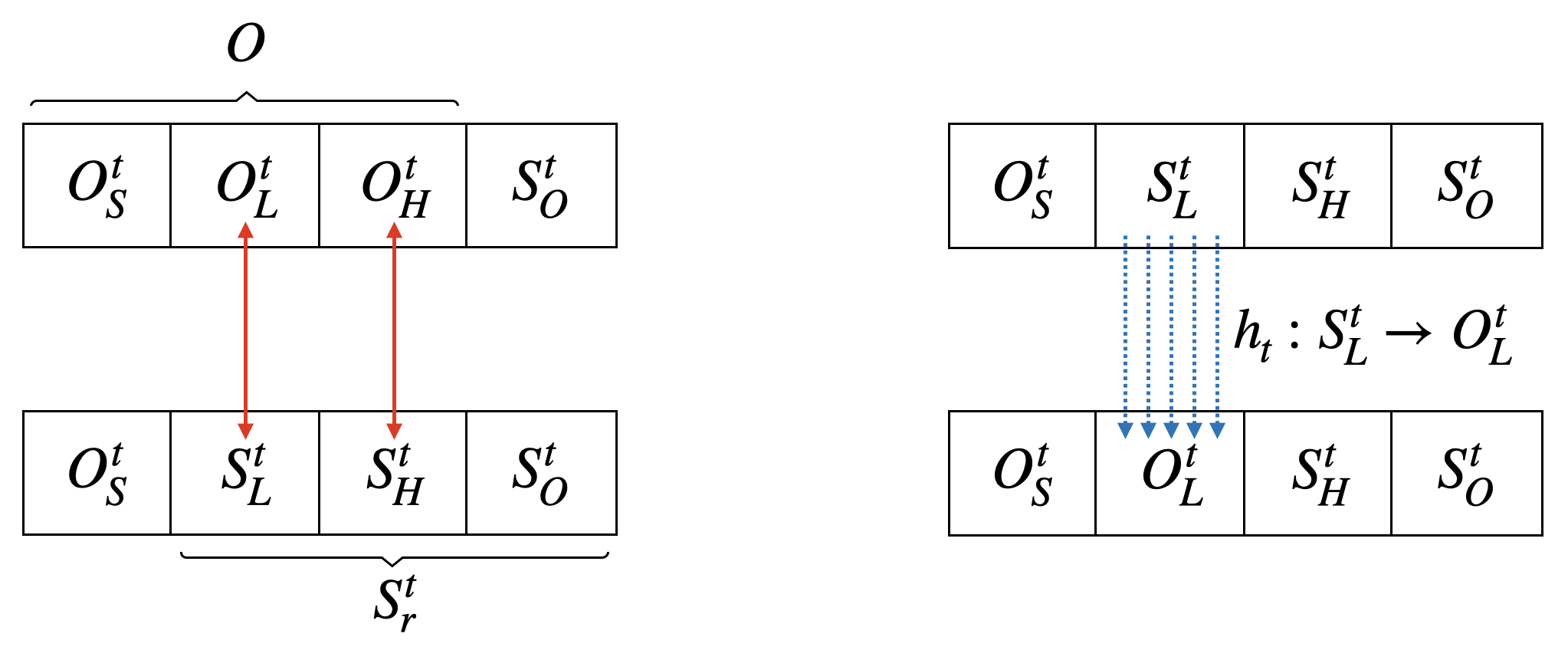}
    \caption{On the left, both $O_S^t\cupdot O_L^t\cupdot O_H^t\cupdot S_O^t$ and $O_S^t\cupdot S_L^t\cupdot S_H^t\cupdot S_O^t$ are bases of matroid $\M_{S_r^t\cup O}$, and we can exchange $O_L^t$ with $S_L^t$, or $O_H^t$ with $S_H^t$, resulting in new bases. On the right, both $O_S^t\cupdot S_L^t\cupdot S_H^t\cupdot S_O^t$ and $O_S^t\cupdot O_L^t\cupdot S_H^t\cupdot S_O^t$ are bases of matroid $\M_{S_r^t\cup O}$, and there is a bijection $h_t:S_L^t\to O_L^t$ such that for all $s\in S_L^t$, replacing element $s$ in $O_S^t\cupdot S_L^t\cupdot S_H^t\cupdot S_O^t$ with element $h_t(s)$ yields a new base. Finally, we remind the reader that $O=O_S^t\cupdot O_L^t\cupdot O_H^t$, $S_r^t=S_L^t\cupdot S_H^t\cupdot S_O^t$, and that $O_L=(O_S^t\cap O_L)\cupdot O_L^t$, as stated in Eq.~\eqref{eq:O_L_decomposition}.}
    \label{fig:illustration_offline}
\end{figure}

Next, for arbitrary $t\in\left[\frac{1}{\eps}\right]$, we upper bound $\E[f(O_S^t\cap O_L|\cR(x^t))]$ and $\E[f(O_L^t|\cR(x^t))]$ (recall that $x^t=x^{t-1}+\eps\mathbf{1}_{S_r^t}$ is the final fractional solution constructed in epoch $t$ in the second phase of Subroutine~\ref{sub:continuous_greedy_filtering}).

\subsubsection*{Step 1: Upper bounding $\E[f(O_S^t\cap O_L|\cR(x^t))]$}
We consider two cases $O_S^t\cap O_L\neq\emptyset$ and $O_S^t\cap O_L=\emptyset$ and prove Ineq.~\eqref{eq:O_S_t_cap_O_L_t} for both cases.
\begin{equation}\label{eq:O_S_t_cap_O_L_t}
\eps\cdot\E[f(O_S^t\cap O_L|\cR(x^t))]\le\frac{\eps^2 v}{1-\eps}.
\end{equation}

\paragraph{Case 1: $O_S^t\cap O_L\neq\emptyset$.} Because $S_r^t\cupdot O_S^t$ is a base of $\M_{S_r^t\cup O}$, we have that $S_r^t\cupdot O_S^t\in\M$. Because $O_S^t$ is non-empty and $\M$ has rank $r$, it follows that $|S_r^t|<r$. Given how $S_r^t$ is constructed in Subroutine~\ref{sub:continuous_greedy_filtering}, if $|S_r^t|<r$, then there must exist $i\in[r]$ for which the subroutine sets $\Delta_i^t$ to $\mathbf{0}$. It follows that $F(\eps\Delta_i^t|x^{t-1}+\eps\mathbf{1}_{S_{i-1}^t})=0$.

Moreover, for any element $o\in O_S^t\cap O_L$, since $S_r^t\cupdot O_S^t\in\M$, element $o$ can be added to $S_r^t$ without violating the matroid constraint, which implies that $S_{i-1}^t\cup\{o\}\in\M$. However, since $o\in O_L$ (and hence $o\notin H$), element $o$ did not satisfy the selection condition posed by $\Delta_i^t$ at Line~\ref{algline:offline_filter_condition} of Subroutine~\ref{sub:continuous_greedy_filtering}, despite $S_{i-1}^t\cup\{o\}\in\M$. Therefore, it must hold that $\smallfloor{F(\eps\mathbf{1}_o|x^{t-1}+\eps\mathbf{1}_{S_{i-1}^t})}_I
\le\smallfloor{F(\eps\Delta_i^t|x^{t-1}+\eps\mathbf{1}_{S_{i-1}^t})}_I=\smallfloor{0}_I$. Because $F(\eps\mathbf{1}_o|x^{t-1}+\eps\mathbf{1}_{S_{i-1}^t})\le F(\eps\mathbf{1}_o)$ (by submodularity) and $F(\eps\mathbf{1}_o)=\eps\cdot f(\{o\})\le \eps v$ (recall that $v$ is the value of the most valuable element in $[n]$, which is computed in the first phase of Subroutine~\ref{sub:continuous_greedy_filtering}), we have that $F(\eps\mathbf{1}_o|x^{t-1}+\eps\mathbf{1}_{S_{i-1}^t})\le\eps v$. Thus, we can apply Claim~\ref{claim:streaming_rounding_error} (by setting $a$, $b$, $\gamma$, $w$ to be $0$, $F(\eps\mathbf{1}_o|x^{t-1}+\eps\mathbf{1}_{S_{i-1}^t})$, $\frac{\eps}{r}$, $\eps v$, respectively), and we obtain that 
\begin{equation}\label{eq:o_in_O_S_t}
F(\eps\mathbf{1}_o|x^{t-1}+\eps\mathbf{1}_{S_{i-1}^t})\le\frac{\eps^2 v}{(1-\eps)r}.
\end{equation}
Then, we derive that
\begin{align*}
\eps\cdot\E[f(O_S^t\cap O_L|\cR(x^t))]&\le\sum_{o\in O_S^t\cap O_L}\eps\cdot \E[f(\{o\}|\cR(x^t))] &&\text{(By submodularity)}\nonumber\\
&\le\sum_{o\in O_S^t\cap O_L}\eps\cdot \E[f(\{o\}|\cR(x^t)\setminus\{o\})] &&\text{(By submodularity)}\nonumber\\
&=\sum_{o\in O_S^t\cap O_L}F(\eps\mathbf{1}_{o}|x^t) &&\text{(By Definition~\ref{def:multi-linear})}\nonumber\\
&=\sum_{o\in O_S^t\cap O_L}F(\eps\mathbf{1}_{o}|x^{t-1}+\eps\mathbf{1}_{S_r^t}) &&\text{(Since $x^t=x^{t-1}+\eps\cdot\mathbf{1}_{S_r^t}$)}\nonumber\\
&\le\sum_{o\in O_S^t\cap O_L}F(\eps\mathbf{1}_{o}|x^{t-1}+\eps\mathbf{1}_{S_{i-1}^t}) &&\text{(By submodularity)}\nonumber\\
&\le|O_S^t\cap O_L|\cdot\frac{\eps^2 v}{(1-\eps)r} &&\text{(By Eq.~\eqref{eq:o_in_O_S_t})}\nonumber\\
&\le r\cdot\frac{\eps^2 v}{(1-\eps)r}=\frac{\eps^2 v}{1-\eps}.
\end{align*}
\paragraph{Case 2: $O_S^t\cap O_L=\emptyset$.} We note that Ineq.~\eqref{eq:O_S_t_cap_O_L_t} holds trivially if $O_S^t\cap O_L=\emptyset$.

\subsubsection*{Step 2: Upper bounding $\E[f(O_L^t|\cR(x^t))]$}
We denote $\ell:=|S_L^t|$. Since $S_L^t\subseteq S_r^t$ and given how $S_r^t$ is constructed in Subroutine~\ref{sub:continuous_greedy_filtering}, the elements in $S_L^t$ are $s_{i_1}^t,\dots,s_{i_{\ell}}^t$ for some indices $i_1<i_2<\cdots<i_{\ell}$ in $[r]$. Recall that, as shown in Eq.~\eqref{eq:h_t_maps_S_L_t_to_O_L_t}, there is a bijection $h_t:S_L^t\to O_L^t$ such that $(S_r^t\setminus\{s_{i_j}^t\})\cup\{h_t(s_{i_j}^t)\}\in\M$ for all $j\in[\ell]$, which implies that for all $j\in[\ell]$, $(S_{i_j}^t\setminus\{s_{i_j}^t\})\cup\{h_t(s_{i_j}^t)\}\in\M$, since $S_{i_j}^t\subseteq S_r^t$. This is equivalent to $S_{i_j-1}^t\cup\{h_t(s_{i_j}^t)\}\in\M$, since $S_{i_j-1}^t=S_{i_j}^t\setminus\{s_{i_j}^t\}$.

Furthermore, for any $j\in[\ell]$, since $h_t(s_{i_j}^t)\in O_L^t$ (and hence $h_t(s_{i_j}^t)\notin H$), element $h_t(s_{i_j}^t)$ did not satisfy the selection condition posed by $\Delta_{i_j}^t$ at Line~\ref{algline:offline_filter_condition} of Subroutine~\ref{sub:continuous_greedy_filtering}, despite $S_{i_j-1}^t\cup\{h_t(s_{i_j}^t)\}\in\M$. Thus, it must hold that $\smallfloor{F(\eps\mathbf{1}_{h_t(s_{i_j}^t)}|x^{t-1}+\eps\mathbf{1}_{S_{i_j-1}^t})}_I
\le\smallfloor{F(\eps\Delta_{i_j}^t|x^{t-1}+\eps\mathbf{1}_{S_{i_j-1}^t})}_I$, which implies that $\smallfloor{F(\eps\mathbf{1}_{h_t(s_{i_j}^t)}|x^{t-1}+\eps\mathbf{1}_{S_{i_j-1}^t})}_I\le\smallfloor{F(\eps\mathbf{1}_{s_{i_j}^t}|x^{t-1}+\eps\mathbf{1}_{S_{i_j-1}^t})}_I$, since $\Delta_{i_j}^t=\mathbf{1}_{S_{i_j}^t}-\mathbf{1}_{S_{i_j-1}^t}=\mathbf{1}_{s_{i_j}^t}$. Note that $F(\eps\mathbf{1}_{s_{i_j}^t}|x^{t-1}+\eps\mathbf{1}_{S_{i_j-1}^t})\ge0$, because of the if condition at Line~\ref{algline:offline_greedy_selection_if_condition} of Subroutine~\ref{sub:continuous_greedy_filtering}. Moreover, we have that $F(\eps\mathbf{1}_{h_t(s_{i_j}^t)}|x^{t-1}+\eps\mathbf{1}_{S_{i_j-1}^t})\le F(\eps\mathbf{1}_{h_t(s_{i_j}^t)})$ by submodularity, and that $F(\eps\mathbf{1}_{h_t(s_{i_j}^t)})=\eps\cdot f(\{h_t(s_{i_j}^t)\})\le\eps v$ (since $v$ is the value of the most valuable element in $[n]$), which together imply that $F(\eps\mathbf{1}_{h_t(s_{i_j}^t)}|x^{t-1}+\eps\mathbf{1}_{S_{i_j-1}^t})\le\eps v$. Hence, we can apply Claim~\ref{claim:streaming_rounding_error} (by setting $a$, $b$, $\gamma$, $w$ to be $F(\eps\mathbf{1}_{s_{i_j}^t}|x^{t-1}+\eps\mathbf{1}_{S_{i_j-1}^t})$, $F(\eps\mathbf{1}_{h_t(s_{i_j}^t)}|x^{t-1}+\eps\mathbf{1}_{S_{i_j-1}^t})$, $\frac{\eps}{r}$, $\eps v$), and we obtain that
\begin{equation}\label{eq:s_i_j_t_vs_o_i_j_t}
F(\eps\mathbf{1}_{s_{i_j}^t}|x^{t-1}+\eps\mathbf{1}_{S_{i_j-1}^t})\ge(1-\eps)\cdot F(\eps\mathbf{1}_{h_t(s_{i_j}^t)}|x^{t-1}+\eps\mathbf{1}_{S_{i_j-1}^t})-\frac{\eps^2v}{r}.
\end{equation}
Now we lower bound $F(\eps\mathbf{1}_{S_r^t}|x^{t-1})$ as follows,
\begin{align*}
&\,F(\eps\mathbf{1}_{S_r^t}|x^{t-1})\nonumber\\
=&\,\sum_{i=1}^{r} F(\eps\Delta_i^t|x^{t-1}+\eps\mathbf{1}_{S_{i-1}^t}) &&\text{(By telescoping sum)}\nonumber\\
\ge&\sum_{j=1}^{\ell} F(\eps\mathbf{1}_{s_{i_j}^t}|x^{t-1}+\eps\mathbf{1}_{S_{i_j-1}^t}) &&\text{(By the if condition at Line~\ref{algline:offline_greedy_selection_if_condition})}\nonumber\\
\ge&\sum_{j=1}^{\ell} (1-\eps)\cdot F(\eps\mathbf{1}_{h_t(s_{i_j}^t)}|x^{t-1}+\eps\mathbf{1}_{S_{i_j-1}^t})-\ell\cdot\frac{\eps^2v}{r} &&\text{(By Ineq.~\eqref{eq:s_i_j_t_vs_o_i_j_t})}\nonumber\\
\ge&\sum_{o\in O_L^t} (1-\eps)\cdot F(\eps\mathbf{1}_{o}|x^{t-1}+\eps\mathbf{1}_{S_{i_j-1}^t})-\ell\cdot\frac{\eps^2v}{r} &&\text{(Since $h_t:S_L^t\to O_L^t$ is a bijection)}\nonumber\\
\ge&\sum_{o\in O_L^t} (1-\eps)\cdot F(\eps\mathbf{1}_{o}|x^{t-1}+\eps\mathbf{1}_{S_r^t})-\ell\cdot\frac{\eps^2v}{r} &&\text{(By submodularity)}\nonumber\\
=&\sum_{o\in O_L^t} (1-\eps)\cdot F(\eps\mathbf{1}_{o}|x^t)-\ell\cdot\frac{\eps^2v}{r} &&\text{(Since $x^t=x^{t-1}+\eps\cdot\mathbf{1}_{S_r^t}$)}\nonumber\\
=&\sum_{o\in O_L^t} (1-\eps)\cdot \eps\cdot\E[f(\{o\}|\cR(x^t)\setminus\{o\})]-\ell\cdot\frac{\eps^2v}{r} &&\text{(By Definition~\ref{def:multi-linear})}\nonumber\\
\ge&\sum_{o\in O_L^t} (1-\eps)\cdot \eps\cdot\E[f(\{o\}|\cR(x^t))]-\ell\cdot\frac{\eps^2v}{r} &&\text{(By submodularity)}\nonumber\\
\ge&\,(1-\eps)\cdot \eps\cdot\E[f(O_L^t|\cR(x^t))]-\ell\cdot\frac{\eps^2v}{r} &&\text{(By submodularity)}\nonumber\\
\ge&\,(1-\eps)\cdot \eps\cdot\E[f(O_L^t|\cR(x^t))]-\eps^2v &&\text{(Since $\ell\le r$)},
\end{align*}
which implies that
\begin{equation}\label{eq:S_r_t_vs_O_L_t}
    \eps\cdot\E[f(O_L^t|\cR(x^t))]\le\frac{F(\eps\mathbf{1}_{S_r^t}|x^{t-1})+\eps^2v}{1-\eps}.
\end{equation}

\subsubsection*{Step 3: Combining the upper bounds}
Given the upper bounds which we have derived for $\E[f(O_S^t\cap O_L|\cR(x^t))]$ and $\E[f(O_L^t|\cR(x^t))]$, we can now bound $\E[f(O_L|\cR(x^t))]$ as follows,
\begin{align*}
&\,\eps\cdot\E[f(O_L|\cR(x^t))]\\
\le&\,\eps\cdot\E[f(O_S^t\cap O_L|\cR(x^t))]+\eps\cdot\E[f(O_L^t|\cR(x^t))]&&\text{(By Eq.~\eqref{eq:O_L_decomposition} and submodularity)}\\
\le&\,\frac{F(\eps\mathbf{1}_{S_r^t}|x^{t-1})+2\eps^2v}{1-\eps} &&\text{(By Ineq.~\eqref{eq:O_S_t_cap_O_L_t} and Ineq.~\eqref{eq:S_r_t_vs_O_L_t})}\\
=&\,\frac{F(x^{t})-F(x^{t-1})+2\eps^2v}{1-\eps} &&\text{(Since $x^t=x^{t-1}+\eps\cdot\mathbf{1}_{S_r^t}$)}\\
\le&\,\frac{F(x^{t})-F(x^{t-1})+2\eps^2\cdot f(O)}{1-\eps} &&\text{(Since $\textstyle v=\max_{e\in[n]}f(\{e\})\le f(O)$)}.
\end{align*}
After rearranging, we derive that
\begin{align}\label{eq:one_epoch_of_continuous_greedy}
    F(x^{t})-F(x^{t-1})&\ge(1-\eps)\cdot \eps\cdot\E[f(O_L|\cR(x^t))]-2\eps^2\cdot f(O) \nonumber\\
    &=(1-\eps)\cdot \eps\cdot(\E[f(O_L\cup\cR(x^t))]-\E[f(\cR(x^t))])-2\eps^2\cdot f(O) \nonumber\\
    &=(1-\eps)\cdot \eps\cdot(\E[f(O_L\cup\cR(x^t))]-F(x^t))-2\eps^2\cdot f(O) &&\text{(By Definition~\ref{def:multi-linear})}\nonumber\\
    &\ge(1-\eps)\cdot \eps\cdot(f(O_L)-\eps\cdot f(O)-F(x^t))-2\eps^2\cdot f(O) &&\text{(By event $E_2$)}\nonumber\\
    &\ge (1-\eps)\cdot \eps\cdot(f(O_L)-4\eps\cdot f(O)-F(x^t)) &&\text{(Since $\textstyle\eps<\frac{1}{4}$)},
\end{align}
where the penultimate inequality is because we condition on event $E_2$, as stated in Lemma~\ref{lem:offline_regular_case}. Specifically, recall that $E_2$ is the event that $f(X|O)\ge-\eps\cdot f(O)$ for all $X\subseteq\bigcup_{t\in\left[\frac{1}{\eps}\right],\,i\in[r]} V_i^t$. Notice that given how $x^t$ is constructed in Subroutine~\ref{sub:continuous_greedy_filtering}, the support of $x^t$ (i.e., the set of non-zero coordinates of $x^t$) is a subset of $\bigcup_{t\in\left[\frac{1}{\eps}\right],\,i\in[r]} V_i^t$. Hence, $\cR(x^t)$ is always a subset of $\bigcup_{t\in\left[\frac{1}{\eps}\right],\,i\in[r]} V_i^t$, which implies that $f(\cR(x^t)|O_L)\ge-\eps\cdot f(O)$, conditioned on event $E_2$.
\end{proof}
\subsubsection*{Step 4: Finishing the proof}
By rearranging Ineq.~\eqref{eq:one_epoch_of_continuous_greedy}, we get that
\[
    (1+\eps(1-\eps))\cdot(f(O_L)-4\eps\cdot f(O)-F(x^t))\le f(O_L)-4\eps\cdot f(O)-F(x^{t-1}).
\]
Then, by iteratively applying the above inequality from $t=1$ to $t=\frac{1}{\eps}$, we obtain that
\begin{align}\label{eq:1/eps_epoches_of_continuous_greedy}
&\,(1+\eps(1-\eps))^{\frac{1}{\eps}}\cdot(f(O_L)-4\eps\cdot f(O)-F(x^{1/\eps})) \nonumber\\
\le&\,f(O_L)-4\eps\cdot f(O)-F(x^0) \nonumber\\
=&\,f(O_L)-4\eps\cdot f(O) &&\text{(Since $x^0=0$)}.
\end{align}
Because $1+\alpha\ge e^{\alpha-\alpha^2}$ holds for all $\alpha\in[0,1]$, we have that $(1+\eps(1-\eps))^{\frac{1}{\eps}}\ge e^{(1-\eps)-\eps(1-\eps)^2}=e\cdot e^{-\eps(1+(1-\eps)^2)}\ge e\cdot(1-\eps(1+(1-\eps)^2))\ge e\cdot(1-2\eps)$. Putting this together with Ineq.~\eqref{eq:1/eps_epoches_of_continuous_greedy}, we get
\begin{align*}
    F(x^{1/\eps})&\ge\left(1-\frac{1}{e\cdot(1-2\eps)}\right)\cdot(f(O_L)-4\eps\cdot f(O))\\
    &=\left(1-\frac{1}{e}-\frac{2\eps}{e\cdot (1-2\eps)}\right)\cdot(f(O_L)-4\eps\cdot f(O))\\
    &\ge\left(1-\frac{1}{e}-\frac{\eps}{1-2\eps}\right)\cdot(f(O_L)-4\eps\cdot f(O))\\
    &\ge\left(1-\frac{1}{e}-2\eps\right)\cdot(f(O_L)-4\eps\cdot f(O)) &&\text{(Since $\textstyle\eps<\frac{1}{4}$)}\\
    &\ge\left(1-\frac{1}{e}\right)\cdot(f(O_L)-4\eps\cdot f(O))-2\eps\cdot f(O) &&\text{(Since $f(O)\ge f(O_L)$)}\\
    &\ge\left(1-\frac{1}{e}\right)\cdot (f(O_L)-8\eps\cdot f(O)),
\end{align*}
which completes the proof of Lemma~\ref{lem:offline_regular_case}.

\end{document}